\newtheorem{theorem}{Theorem}[section]
\newtheorem{definition}[theorem]{Definition}
\newtheorem{claim}[theorem]{Claim}
\newtheorem{lemma}[theorem]{Lemma}
\newtheorem{conclusion}[theorem]{Conclusion}
\newtheorem{remark}[theorem]{Remark}
\newtheorem{example}[theorem]{Example}
\newtheorem*{conjecture}{Conjecture}
\newtheorem*{questions}{Questions}
\newcommand{\CC}{\mathbb C}
\newcommand{\RR}{\mathbb R}
\newcommand{\NN}{\mathbb N}
\newcommand{\ZZ}{\mathbb Z}
\newcommand{\mca}{\mathcal A}
\newcommand{\mcb}{\mathcal B}
\newcommand{\mcm}{\mathcal M}
\newcommand{\mce}{\mathcal E}
\newcommand{\mch}{\mathcal H}
\newcommand{\mci}{\mathcal I}
\newcommand{\mcj}{\mathcal J}
\newcommand{\mco}{\mathcal O}
\newcommand{\HH}{\mathcal H}
\newcommand{\TT}{\mathbb T}
\newcommand{\mcl}{\mathcal L}
\newcommand{\rmg}{\mathrm G}
\DeclareMathOperator{\rk}{rank}
\DeclareMathOperator{\ev}{ev}
\DeclareMathOperator{\ess}{ess}
\DeclareMathOperator{\trace}{trace}
\DeclareMathOperator{\End}{End}
\DeclareMathOperator{\diag}{diag}
\DeclareMathOperator{\Sp}{Span}
\DeclareMathOperator{\op}{op}
\DeclareMathOperator{\Id}{Id}
\title{On polynomials in spectral projections of spin operators}
\date{\vspace{-8ex}}
\begin{document}
\maketitle
\begin{center}{\author{Ood Shabtai\footnote{Partially supported by the European Research Council Starting grant 757585 and by the Israel Science Foundation grant 1102/20}}}\end{center}
\begin{abstract} We show that the operator norm of an arbitrary bivariate polynomial, evaluated on certain spectral projections of spin operators, converges to the maximal value in the semiclassical limit.
We contrast this limiting behavior with that of the polynomial when evaluated on random pairs of projections. The discrepancy is a consequence of a type of Slepian spectral concentration phenomenon, which we prove in some cases. \end{abstract}
\tableofcontents
\section{Introduction}
Let $J_1, J_2, J_3: \CC^n \to \CC^n$ be the generators of a unitary irreducible representation of $SU(2)$, satisfying the commutation relations
\begin{equation*} [J_1, J_2] = i J_3,\ [J_2, J_3] = i J_1,\ [J_3, J_1] = i J_2.\end{equation*}
Their spectrum equals the set $\sigma_n = \{j, j-1, ..., -j\}$, where $2j+1 = n$ (\cite{biedenharnlouck}).

Let $\mca$ denote the complex algebra generated by two non-commuting variables $x,y$ which satisfy $x^2 = x,\ y^2 = y$. Let $\CC[z,w]$ denote the commutative algebra of complex polynomials, and let
\begin{equation}\label{abelianizationmap} T : \mca \to \CC[z,w]\left/ \mci\right.\end{equation} map $f(x,y)$ to $f(z,w)$, where $\mci$ is the ideal generated by $z^2 - z,\ w^2 - w$.

Let $\left(\rmg_k(n), \mu_{k,n} \right)$ be the Grassmannian of $k$-dimensional subspaces of $\CC^n$, equipped with the uniform probability measure\footnote{i.e., the unique probability measure invariant under the action of the unitary group on $\rmg_k(n)$.}. We identify $V \in \rmg_k(n)$ with the orthogonal projection $P : \CC^n \to V$.

In this paper, we will study the following questions.
\begin{questions} Fix $f \in \mca$. Let $\mathbbm 1_{(0,\infty)}$ be the indicator function of $(0,\infty) \subset \RR$.
\begin{enumerate}
\item\label{q1}{What is the behavior of $f\left(\mathbbm 1_{(0,\infty)}(J_1), \mathbbm 1_{(0,\infty)}(J_3) \right)$ in the semiclassical limit $n \to \infty$?}
\item\label{q2}{How does it compare with $f(P,Q)$, where $P,Q \in \rmg_{\lfloor \frac n 2 \rfloor}(n)$ are random projections?}
\item{To what extent do the answers to questions \ref{q1}, \ref{q2} change if we replace $\mathbbm 1_{(0,\infty)}(J_1), \mathbbm 1_{(0,\infty)}(J_3)\in \rmg_{\lfloor \frac n 2 \rfloor}(n)$ with spectral projections of ranks $\lfloor \alpha n \rfloor$, where $0 < \alpha \le \frac 1 2$?}
\end{enumerate}\end{questions}
The non-commutative polynomials $f \in \mca$ provide an auxiliary tool to probe the limiting behavior of the relevant sequences of pairs of projections. In this context, certain polynomials are not very useful (e.g., constant polynomials). For this reason (and for the sake of conciseness), it will occasionally be fruitful to restrict our attention to $f \in \ker(T)$.

One of the main points of the paper is that pairs of spectral projections of spin operators are quite different from random pairs of projections (as $n \to \infty$). This will be demonstrated rigorously (Conclusion \ref{discrepancy}, Theorem \ref{prolate_theorem}), as well as through numerical simulations. The non-generic behavior is a consequence of a type of Slepian spectral concentration phenomenon, closely related to that of the prolate matrix and its variants (\cite{llwang, EMT, slepian, varah}). Namely, our simulations suggest that the principal angles between the ranges of the projections cluster near $0$ and $\frac \pi 2$, and we manage to prove this in some cases.

%We prove this concentration phenomenon in a number of cases (which have not been addressed in the literature so far, to the best of our knowledge), and otherwise illustrate it numerically.
%

The present paper continues an earlier one (\cite{shabtai}), which also examined pairs of spectral projections arising from certain specific non-commuting quantum observables, and from spin operators in particular. However, previously we restricted our attention only to commutators (i.e., $f(x,y) = xy - yx)$, and did not compare with random projections. 

The analogues of our current, more general result on spin operators (namely, Theorem \ref{spin_theorem}) also hold for the rest of the cases considered in the previous paper (position and momentum operators, for instance). Ultimately, we suspect that these results are instances of a rather general phenomenon. We refer the reader to Section \ref{discussion_section} for further details.
\subsection{Main results}
We consider the case $\alpha = \frac 1 2$ first.  Denote $P_{1,n} = \mathbbm 1_{(0,\infty)}(J_1)$, $P_{3,n} = \mathbbm 1_{(0,\infty)}(J_3)$.
Let $\Omega_n = \rmg_{\left \lfloor \frac n 2  \right \rfloor} (n) \times \rmg_{\left \lfloor \frac n 2 \right \rfloor}(n)$, $\nu_n = \mu_{\left \lfloor \frac n 2 \right \rfloor, n} \times \mu_{\left \lfloor \frac n 2 \right \rfloor, n}$, and
\begin{equation*} I_n^f = \int_{\Omega_n} \left \Vert f\left(P, Q \right) \right \Vert_{\op} d\nu_n(P,Q).\end{equation*}
The following result was essentially conjectured by D. Kazhdan.
\begin{theorem}\label{theorem1} Let $0 \ne f \in \mca$. There exists a constant $M_f > 0$, depending only on $f$, such that
\begin{equation*}  \max_{(P,Q) \in \Omega_n} \left \Vert f(P, Q) \right \Vert_{\op} = \lim_{n \to \infty} I^f_n = \lim_{n \to \infty} \left \Vert f\left(P_{3,n}, P_{1,n} \right) \right \Vert_{\op} = M_f.\end{equation*}
$M_f$ is the universal tight upper bound (\ref{universal_upper_bound}) for $\Vert f(P,Q)\Vert_{\op}$, where $P,Q$ are arbitrary orthogonal projections on a separable Hilbert space.\end{theorem}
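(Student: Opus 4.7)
My plan is to reduce all three equalities to the Halmos two-projection theorem. For any pair $(P, Q)$ of orthogonal projections on a separable Hilbert space, Halmos decomposes the space as an orthogonal sum of four ``common-eigenspace'' summands (the four pairwise intersections of $\Ran(P), \Ran(P)^\perp$ with $\Ran(Q), \Ran(Q)^\perp$) plus a generic part built from $2 \times 2$ block pairs $(P_\theta, Q_\theta)$ parametrized by principal angles $\theta \in (0, \pi/2)$, with $P_\theta$ the projection onto $(1,0)$ and $Q_\theta$ onto $(\cos\theta, \sin\theta)$. Set $g_f(\theta) := \|f(P_\theta, Q_\theta)\|_{\op}$. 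Then $g_f$ is continuous on $[0, \pi/2]$, and the four scalar values $|f(a,b)|$, $a,b \in \{0,1\}$, already appear inside $g_f(0)$ and $g_f(\pi/2)$. Consequently $M_f = \sup_{\theta \in [0, \pi/2]} g_f(\theta)$ is the universal tight upper bound, attained at some $\theta^\ast$, and every pair $(P, Q)$ with principal angles $\{\theta_i\}$ satisfies $\|f(P,Q)\|_{\op} = \max_i g_f(\theta_i)$.

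For the first equality I would build $(P, Q) \in \Omega_n$ whose Halmos decomposition consists of $\lfloor n/2 \rfloor$ identical $2 \times 2$ blocks at angle $\theta^\ast$ (plus one degenerate one-dimensional summand when $n$ is odd, whose scalar contribution is automatically $\le M_f$); this realizes $\|f(P,Q)\|_{\op} = M_f$, and the reverse inequality is the definition of $M_f$. For the average, under $\nu_n$ the squared cosines $\cos^2\theta_i$ of the principal angles form a complex Jacobi ensemble whose empirical measure converges weakly to the arcsine law on $[0,1]$, equivalently the $\theta_i$'s equidistribute on $[0, \pi/2]$; this is a consequence of the asymptotic freeness of two independent uniform random projections of trace $1/2$. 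Hence for every fixed $\varepsilon > 0$ the event $\{\exists i : |\theta_i - \theta^\ast| < \varepsilon\}$ has $\nu_n$-probability tending to $1$; on this event $\|f(P, Q)\|_{\op} \ge g_f(\theta_i) \ge M_f - \omega_f(\varepsilon)$, where $\omega_f$ is the modulus of continuity of $g_f$. Combining with the pointwise upper bound $\|f(P, Q)\|_{\op} \le M_f$ and bounded convergence yields $I_n^f \to M_f$.

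The third equality is the heart of the theorem. Since $P_{1,n} = e^{-i\pi J_2/2}\, P_{3,n}\, e^{i\pi J_2/2}$, the squared cosines of the principal angles between $\Ran(P_{3,n})$ and $\Ran(P_{1,n})$ are the spectrum of the ``corner matrix'' $A_n := P_{3,n}\, e^{-i\pi J_2/2}\, P_{3,n}\, e^{i\pi J_2/2}\, P_{3,n}$ restricted to $\Ran(P_{3,n})$. In the $J_3$-eigenbasis, $A_n$ is the explicit Gram matrix built from the Wigner small-$d$ entries $d^j_{m, m'}(\pi/2)$, and is the $SU(2)$-analogue of the discrete prolate spheroidal matrix. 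A Berezin--Toeplitz moment computation shows that the bulk of $\sigma(A_n)$ concentrates at $\{0, 1\}$ (i.e.\ $\theta \approx 0$ or $\pi/2$). To conclude I would additionally need the slim ``plunge'' band of eigenvalues in $(0, 1)$ to become dense in $[0, 1]$ as $n \to \infty$; granting this, every $\theta^\ast \in [0, \pi/2]$ is a limit of some sequence $\theta_{i_n, n}$ and $\max_i g_f(\theta_{i, n}) \to M_f$ by continuity.

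The main obstacle is precisely this density of plunge eigenvalues. Bulk moment or symbol-calculus arguments only see the gross $\{0, 1\}$-concentration and are insensitive to the $O(\log n)$ plunge eigenvalues that actually control $\|f(P_{3, n}, P_{1, n})\|_{\op}$. To exhibit an eigenvalue of $A_n$ near any prescribed $\cos^2\theta^\ast \in (0, 1)$ I would construct WKB or coherent-state quasi-modes supported near the equatorial great-circle interface $\{x_1 = 0\} \cap \{x_3 = 0\} \subset S^2$, exploiting the explicit Jacobi-polynomial formulae for $d^j_{m, m'}(\pi/2)$. This is the Slepian-type spectral concentration analysis flagged in the abstract, and is where the bulk of the work will have to go.
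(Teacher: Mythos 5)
Your handling of the first two equalities essentially matches the paper. The construction of a pair $(P,Q)\in\Omega_n$ with a prescribed principal angle $\theta^\ast$ is the content of Claim \ref{max_norm}, and the averaging argument is the $\alpha=\beta=\tfrac12$ case of Theorem \ref{maintheorem}: the empirical measure of $\cos^2\theta_i$ for a random pair in $\Omega_n$ converges to the arcsine law (this is Theorem \ref{limit_joint}, and for $\alpha=\tfrac12$ the support is all of $[0,1]$, so the large-deviation estimate from \cite{collins} is not even needed). Your argument $\liminf I_n^f\ge M_f-\omega_f(\varepsilon)$ plus the universal bound $\Vert f(P,Q)\Vert_{\op}\le M_f$ is exactly right. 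One small imprecision: $\Vert f(P,Q)\Vert_{\op}$ is \emph{not} always $\max_i g_f(\theta_i)$, because the corner contributions are $\lvert\alpha_{lk}\rvert$ only for those $(l,k)$ with $V_{lk}\ne\{0\}$, whereas $g_f(0)=\max\{\lvert\alpha_{00}\rvert,\lvert\alpha_{11}\rvert\}$ always; but since you only use the \emph{lower} bound $\Vert f(P,Q)\Vert_{\op}\ge g_f(\theta_i)$ for a reduced angle $\theta_i\in(0,\pi/2)$, this does not affect your argument.

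The real issue is the third equality, where you do not actually give a proof. You correctly reduce the problem to showing that some eigenvalue of $A_n = P_{3,n}P_{1,n}P_{3,n}$ lands near the argmax of $\psi_f$, you correctly observe that bulk concentration results are useless here because the norm is controlled by the sparse plunge band, and you then flag the remaining step --- existence of plunge eigenvalues near any prescribed point of $(0,1)$ via WKB/coherent-state quasi-modes --- as ``where the bulk of the work will have to go.'' That is precisely the gap: the theorem is not proven until that step is carried out, and it is genuinely hard if attacked head-on at finite $n$. The paper avoids it entirely by a different route (Section \ref{spin_section}): embedding the finite-dimensional pair into $L^2(\TT)$ and showing that $(P_{3,\alpha,n},P_{1,\alpha,n})$ converges \emph{strongly} to the pair $(\Pi_\TT,\mcm_{\mathbbm 1_{E_\alpha}})$ (Cauchy--Szeg\H{o} projection and a multiplication operator). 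Then $\Pi_\TT\,\mcm_{\mathbbm 1_{E_\alpha}}\,\Pi_\TT$ is the Toeplitz operator $T_{\mathbbm 1_{E_\alpha}}$, and the Hartman--Wintner theorem gives $\sigma(T_{\mathbbm 1_{E_\alpha}})=[0,1]$, so the \emph{limit} pair already realizes $M_f$. Lower semicontinuity of the norm under strong convergence, together with the a priori bound $\Vert f(\cdot,\cdot)\Vert_{\op}\le M_f$, then forces $\Vert f(P_{3,n},P_{1,n})\Vert_{\op}\to M_f$ without ever establishing a quantitative statement about finite-$n$ plunge eigenvalues. Indeed the paper derives $N_n(t,1-t)\to\infty$ as a \emph{corollary} of the norm convergence, not as an input to it --- the implication runs in the opposite direction from your plan. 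So your proposal for the third equality is not a complete proof but a research programme, and a harder one than the route the paper takes.
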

Theorem \ref{theorem1} is illustrated for $f(x,y) = xy - yx$ (where $M_f = \frac 1 2$) in the images below. The pair $P_{1,n}$, $P_{3,n}$ is non-generic as $n \to \infty$ (as follows from Theorem \ref{prolate_theorem}). Accordingly, the graphs in figures \ref{fig1} and \ref{fig2} are clearly dissimilar, even though the limits coincide. The seemingly different convergence rates are discussed in Remark \ref{conv_rate_conc}.
%
%In particular, we discuss the seemingly different convergence rates in Remark \ref{conv_rate_conc}. We emphasize that Theorem \ref{schmuf} implies that $P_{1,d}$, $P_{3,d}$ are quite different from random pairs of projections.
%
\begin{figure}[H]
    \centering
        \includegraphics[width=1\textwidth]{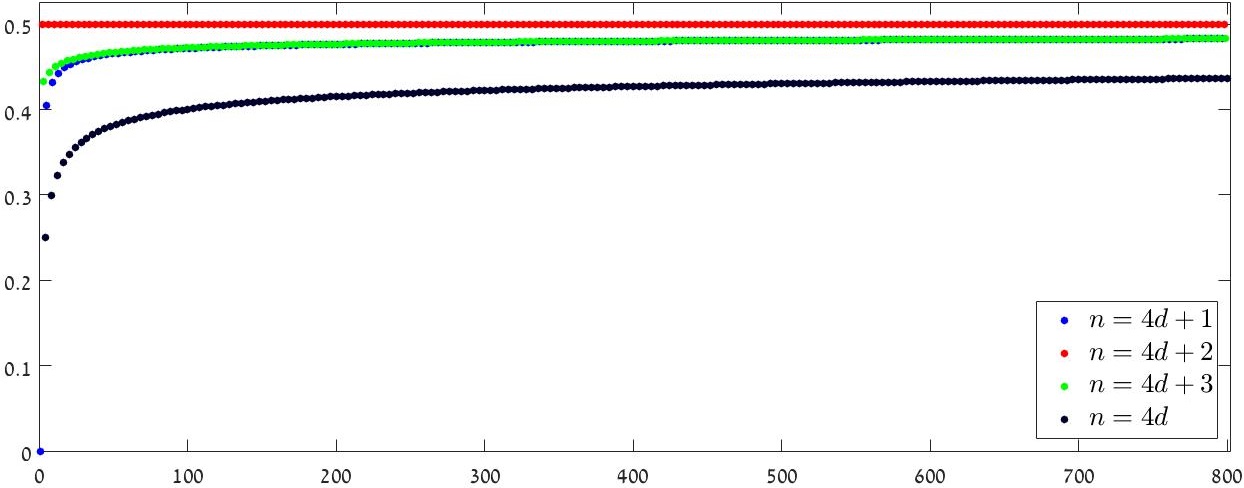}
        \caption{(Y. Le Floch) $\left \Vert \left[P_{1,n}, P_{3,n} \right] \right \Vert_{\op}$ as a function of $n$. The apparent mod $4$ behavior is unproven, except for the case $n = 4d+2$ (L. Polterovich).}\label{fig1}
\end{figure}
\begin{figure}[H]
    \centering
        \includegraphics[width=1\textwidth]{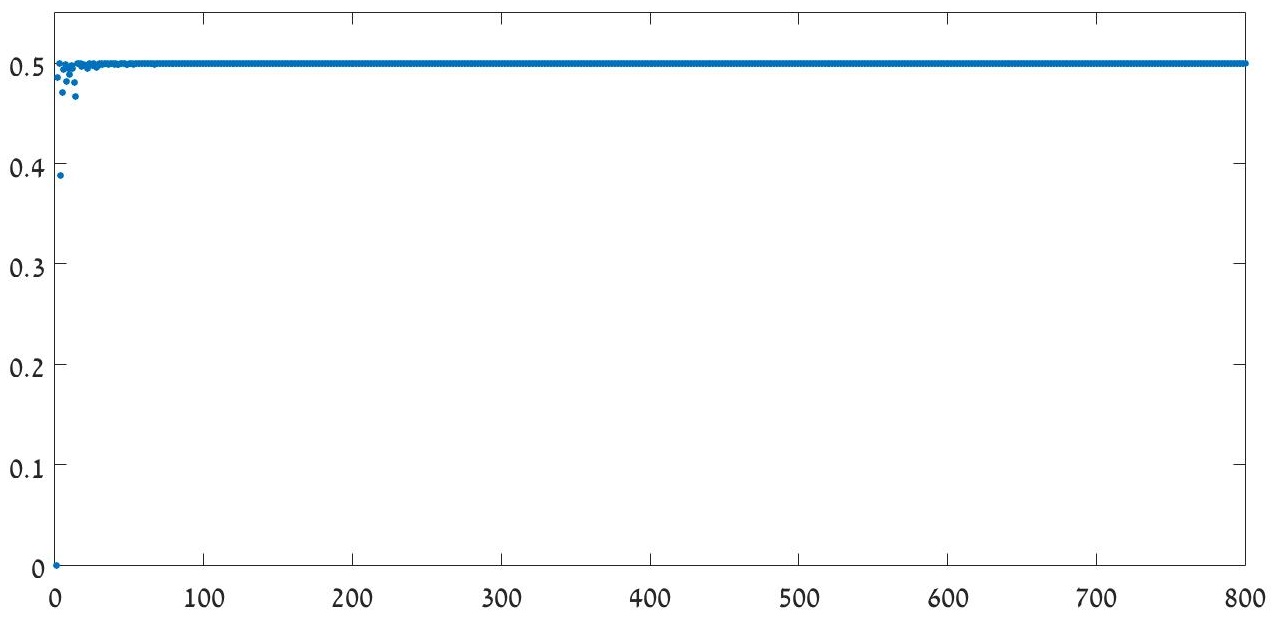}
        \caption{$\left \Vert \left[P, Q \right] \right \Vert_{\op}$ for random projections of rank $\lfloor \frac n 2 \rfloor$, as a function of $n$.}\label{fig2}
\end{figure}
Theorem \ref{theorem1} is in fact atypical, since the analogous limits do not necessarily coincide when the projections involved are of ranks $\lfloor \alpha n \rfloor$, with $\alpha < \frac 1 2$. Indeed, for $0 < \alpha \le \frac 1 2$, let $\Omega_{n,\alpha} = \rmg_{\lfloor \alpha n \rfloor}(n) \times \rmg_{\lfloor \alpha n \rfloor}(n)$, equipped with the product probability measure $\nu_{n,\alpha} = \mu_{\lfloor \alpha n \rfloor, n} \times \mu_{\lfloor \alpha n \rfloor, n}$. Denote
\begin{equation*} I_{n,\alpha}^f = \int_{\Omega_{n,\alpha}} \left \Vert f\left(P, Q \right) \right \Vert_{\op} d\nu_{n,\alpha}(P,Q).\end{equation*}
Then a combination of results from \cite{collins}, \cite{bottspitk} readily implies the following.
\begin{theorem}\label{grassmann_theorem} Let $f \in \ker (T)$ and $\lambda_\alpha = 4\alpha(1-\alpha)$. There exists a continuous $\psi_f : [0,1] \to [0,\infty)$, determined by $f$ only, such that $\psi_f(0) = \psi_f(1) = 0$, and
\begin{equation}\label{grassmann_limit}I^f_{\alpha} = \lim_{n \to \infty} I^f_{n,\alpha} = \max_{[0,\lambda_\alpha]} \psi_f.\end{equation}
Note that $M_f = \max_{[0,1]} \psi_f>0$ (for $f \ne 0$).
\end{theorem}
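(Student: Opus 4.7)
The plan is to combine two ingredients: the Halmos two-projections decomposition as exploited by Bott--Spitkovsky \cite{bottspitk}, which computes $\Vert f(P,Q)\Vert_{\op}$ from the spectrum of $PQP$, and Collins's strong asymptotic freeness of random projections \cite{collins}, which identifies the asymptotic spectrum of $PQP$. Both ingredients are essentially on the shelf, and the theorem is stated as ``readily'' following from them.

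For the first ingredient, Halmos's theorem writes any pair of projections $(P,Q)$ on a separable Hilbert space as an orthogonal sum of a commuting piece on which $(P,Q)\in\{0,\Id\}^2$ and a generic piece unitarily equivalent to
\begin{equation*} P = \begin{pmatrix} I & 0 \\ 0 & 0 \end{pmatrix},\qquad Q = \begin{pmatrix} C^2 & CS \\ CS & S^2 \end{pmatrix},\end{equation*}
with $C,S$ commuting, strictly positive, and satisfying $C^2+S^2=I$. Evaluating $f \in \mca$ on this matrix form and taking operator norms on each spectral block of $C^2$ produces, by a routine continuity argument, a continuous $\psi_f:[0,1]\to[0,\infty)$ with
\begin{equation*} \Vert f(P,Q)\Vert_{\op} = \max\Bigl\{\,A_f,\ \sup_{\lambda\in\sigma(C^2)}\psi_f(\lambda)\,\Bigr\},\end{equation*}
where $A_f$ is the (discrete) contribution of the commuting part. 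Since $T$ is precisely the algebra homomorphism recording the action of $\mca$ on commuting idempotents, $f\in\ker T$ forces $A_f=0$ as well as $\psi_f(0)=\psi_f(1)=0$ (the endpoint values of $\lambda=\cos^2\theta$ correspond to the subspaces coinciding or being orthogonal). This already yields the universal tight upper bound $M_f=\max_{[0,1]}\psi_f$, which is positive for $0\ne f\in\ker T$ by choosing a matching pair of projections in $\CC^2$.

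For the second ingredient, Voiculescu's theorem makes the Haar-random pair $(P,Q)\in\rmg_{\lfloor\alpha n\rfloor}(n)^{2}$ asymptotically free with common Bernoulli$(\alpha)$ marginals; a direct Cauchy-transform computation shows that the limit spectral distribution of $PQP$ is supported on $\{0\}\cup[0,4\alpha(1-\alpha)]=[0,\lambda_\alpha]$. Collins's strong asymptotic freeness \cite{collins} upgrades weak convergence to Hausdorff convergence of the spectra, thereby ruling out outlying eigenvalues.

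Combining the two, continuity of $\psi_f$ and Hausdorff convergence of the spectra give
\begin{equation*} \Vert f(P_n,Q_n)\Vert_{\op} = \sup_{\lambda\in\sigma(P_nQ_nP_n)}\psi_f(\lambda)\;\xrightarrow{n\to\infty}\;\max_{[0,\lambda_\alpha]}\psi_f \qquad \text{a.s.}\end{equation*}
The uniform bound $\Vert f(P,Q)\Vert_{\op}\le M_f$ then promotes a.s.\ convergence to convergence of the expectation $I^f_{n,\alpha}\to \max_{[0,\lambda_\alpha]}\psi_f$ by dominated convergence, proving the theorem. The main obstacle is the edge convergence in the second ingredient: weak convergence of empirical spectral measures would only supply $\liminf \Vert f(P_n,Q_n)\Vert_{\op}\ge\max_{[0,\lambda_\alpha]}\psi_f$, and the matching upper bound genuinely requires the absence of spectral outliers provided by strong asymptotic freeness.
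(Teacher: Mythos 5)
Your proposal is correct and follows essentially the same route as the paper: Bott--Spitkovsky reduces $\Vert f(P,Q)\Vert_{\op}$ to $\max_{\sigma(PQP)}\psi_f$ when $f\in\ker T$, the limit of $\sigma(PQP)$ is identified as $[0,\lambda_\alpha]$ (the paper via the Jacobi-ensemble bulk law from Dumitriu--Paquette, you via free multiplicative convolution of Bernoulli($\alpha$) laws), and Collins's edge-concentration result rules out outliers so that the uniform bound $\Vert f\Vert_{\op}\le M_f$ lets one pass to expectations. The only small caveat is attribution: what you call ``Collins's strong asymptotic freeness'' is a later and stronger theorem than what the cited 2005 paper \cite{collins} provides; the paper itself uses only the exponential concentration of $\sigma(PQP)$ away from $[\lambda_-,\lambda_+]$ (its Theorem~3.6), together with the weak bulk convergence, which is exactly the combination your argument needs and which the 2005 reference does supply.
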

$\psi_f$ essentially appears in \cite{spitkovsky}. Since it is continuous, so is the non-decreasing function $\alpha \mapsto I^f_\alpha$.
\begin{conclusion}\label{small_grassmann} If $f \in \ker(T)$, then $\lim_{\alpha \to 0^+} I_\alpha^f = \psi_f(0) = 0$.\end{conclusion}
On the other hand,
\begin{theorem}\label{spin_theorem} Let $f \in \mca$. Define intervals $(0, \alpha_n)\subset \RR$ containing exactly $\left \lfloor \alpha n \right \rfloor$ elements of $\sigma_n$. Denote $P_{1,\alpha,n} =  \mathbbm 1_{(0, \alpha_n)}(J_1)$, $P_{3,\alpha, n} =  \mathbbm 1_{(0, \alpha_n)}(J_3)$. Then
\begin{equation}\label{spin_limit} \lim_{n \to \infty} \left \Vert f \left(P_{3,\alpha,n}, P_{1,\alpha,n}\right) \right \Vert_{\op} = M_f.\end{equation}\end{theorem}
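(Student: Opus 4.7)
The plan is to reduce, via Halmos' two projections theorem, the operator norm $\|f(P_{3,\alpha,n}, P_{1,\alpha,n})\|_{\op}$ to a question about the spectrum of $P_{3,\alpha,n} P_{1,\alpha,n} P_{3,\alpha,n}$, and then to establish that this spectrum becomes dense in $[0,1]$ as $n \to \infty$. Recall that for any two orthogonal projections $(P,Q)$ on a separable Hilbert space there is a decomposition $H = \bigoplus_{(a,b) \in \{0,1\}^2} H_{ab} \oplus H_{\text{gen}}$, where on $H_{ab}$ the pair acts as $(aI, bI)$, and on $H_{\text{gen}}$ it is unitarily equivalent to a direct integral of the standard $2 \times 2$ models $(P_\theta, Q_\theta)$ parameterized by principal angles $\theta \in (0, \pi/2)$. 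Writing $h_f(\theta) := \|f(P_\theta, Q_\theta)\|_{\op}$, which extends continuously to $[0,\pi/2]$ with endpoints capturing the corner values $|f(a,b)|$, one has
\begin{equation*} \|f(P,Q)\|_{\op} = \max\Bigl( \max_{(a,b):\, H_{ab} \ne 0} |f(a,b)|,\ \sup_\theta h_f(\theta) \Bigr), \qquad M_f = \max_{\theta \in [0,\pi/2]} h_f(\theta).\end{equation*}

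The upper bound $\|f(P_{3,\alpha,n}, P_{1,\alpha,n})\|_{\op} \le M_f$ holds by definition of $M_f$. Since $h_f$ is continuous and $\theta \mapsto \cos^2\theta$ is a homeomorphism $[0,\pi/2] \to [0,1]$, the matching lower bound is reduced to showing that for every $\lambda_0 \in [0,1]$ there exist unit vectors $v_n \in \CC^n$ with $\|(P_{3,\alpha,n} P_{1,\alpha,n} P_{3,\alpha,n} - \lambda_0 I) v_n\| \to 0$; the endpoints $\lambda_0 \in \{0,1\}$ also account for the corner subspaces in the limit.

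To produce these approximate eigenvectors I would use spin coherent states $|x\rangle_n \in \CC^n$, labelled by $x \in S^2$, interpreting $P_{3,\alpha,n}$ and $P_{1,\alpha,n}$ semiclassically as quantizations of the indicator functions of the spherical bands $B_3 = \{x_3 \in (0, 2\alpha)\}$ and $B_1 = \{x_1 \in (0, 2\alpha)\}$. Coherent states centered deep in the interior of $B_3 \cap B_1$ (resp.\ $B_3 \setminus \overline{B_1}$) are approximate eigenvectors of $P_{3,\alpha,n} P_{1,\alpha,n} P_{3,\alpha,n}$ with eigenvalue approaching $1$ (resp.\ $0$). To hit an intermediate $\lambda_0 \in (0,1)$, I would pick a family $x_n \in B_3$ whose signed distance to $\partial B_1$ is tuned on the coherent-state localization scale $\sim 1/\sqrt{n}$, chosen so that the Husimi distribution of $|x_n\rangle_n$ has Gaussian mass fraction exactly $\lambda_0$ inside $B_1$; a Gaussian approximation to $\langle x_n | \mathbbm 1_{(0,\alpha_n)}(J_1) | x_n\rangle_n$ then makes $|x_n\rangle_n$ an approximate eigenvector with eigenvalue $\lambda_0$.

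The main obstacle is the non-smoothness of the indicator symbols $\mathbbm 1_{(0,\alpha_n)}$, which fall outside the standard semiclassical Berezin--Toeplitz calculus for smooth symbols. One workaround is to sandwich each sharp projection between smooth Toeplitz operators $T_{\chi_\epsilon^\pm}$ approximating $\mathbbm 1_{B_i}$ from above and below, establish density of the spectrum of the smoothed operator in the range of its classical symbol (which fills $[0,1]$ since $\chi_\epsilon^\pm$ take all values in $[0,1]$ on $S^2$), and then pass to the sharp limit via a diagonal extraction in $(n,\epsilon)$. An alternative is to work directly with explicit matrix elements of the spin spectral projectors between coherent states, exploiting $SU(2)$-equivariance. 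Combined with the Halmos decomposition, density of principal angles in $[0,\pi/2]$ then forces $\|f(P_{3,\alpha,n}, P_{1,\alpha,n})\|_{\op} \to M_f$, yielding (\ref{spin_limit}).
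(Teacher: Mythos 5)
The reduction in your first two paragraphs — via Halmos' canonical form, pass to the spectrum of $P_{3,\alpha,n} P_{1,\alpha,n} P_{3,\alpha,n}$, and then seek approximate eigenvectors realizing every $\lambda_0 \in [0,1]$ — is a sound strategy in principle, but the coherent-state construction you propose to carry it out does not work, and the flaw is precisely the phenomenon that this paper's Theorem \ref{prolate_theorem} is about.

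Here is the problem. Write $P = P_{3,\alpha,n}$, $Q = P_{1,\alpha,n}$ and let $v = |x_n\rangle_n$ be a normalized coherent state centered deep inside $B_3$, at signed distance $\sim 1/\sqrt n$ from $\partial B_1$, tuned so that $\langle v, Qv\rangle \approx \lambda_0$. Since $v$ is (approximately) in the range of $P$, we get $\langle v, PQP v\rangle = \|Qv\|^2 \approx \lambda_0$, so the expectation value is correct. But for the second moment, $\langle v, (PQP)^2 v\rangle = \|PQPv\|^2 \approx \|PQv\|^2$. The truncated state $Qv$ is still localized deep inside the band $B_3$ (the cut is transverse, far from $\partial B_3$), so $PQv \approx Qv$ and hence $\langle v, (PQP)^2 v\rangle \approx \|Qv\|^2 \approx \lambda_0$, not $\lambda_0^2$. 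Therefore
\begin{equation*}
\bigl\|(PQP - \lambda_0)v\bigr\|^2 \approx \lambda_0 - 2\lambda_0^2 + \lambda_0^2 = \lambda_0(1-\lambda_0),
\end{equation*}
which stays bounded away from $0$ for every $\lambda_0 \in (0,1)$. Having the right expectation value does not make $v$ an approximate eigenvector; you also need the variance to vanish, and here it does not. Intuitively, the truncated coherent state is a genuine superposition of an ``inside $B_1$'' and an ``outside $B_1$'' piece that stay orthogonal after compression by $P$. This is not a fixable technicality: Theorem \ref{prolate_theorem} shows that the spectrum of $PQP$ clusters near $\{0,1\}$ with only $\mco(\log n)$ intermediate eigenvalues; those few intermediate eigenvectors are delicate Slepian-type objects, not coherent states. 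The smoothing/sandwich idea does not rescue the plan either: operator inequalities $T_{\chi_\epsilon^-} \le P \le T_{\chi_\epsilon^+}$ do not transfer to $PQP$ in any useful form, and once $\epsilon$ is forced down to the coherent-state scale $1/\sqrt n$ the variance problem reappears.

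The paper avoids all of this and goes a genuinely different route. It embeds $\CC^n$ into $L^2(\TT)$ and proves, using the explicit matrix elements of $P_{1,\alpha,n}$ in the eigenbasis of $J_3$ from \cite{shabtai}, that $P_{3,\alpha,n}^{\TT}$ and $P_{1,\alpha,n}^{\TT}$ converge \emph{strongly} to the Cauchy-Szeg\"o projection $\Pi_\TT$ and to $\mcm_{\mathbbm 1_{E_\alpha}}$ (Lemmas \ref{basistoWOT}, \ref{weaktostrong}). Strong convergence propagates through products and sums (Lemma \ref{SOTproduct}), giving $f(P_{3,\alpha,n}^{\TT}, P_{1,\alpha,n}^{\TT}) \to f(\Pi_\TT, \mcm_{\mathbbm 1_{E_\alpha}})$ strongly. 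A single classical input finishes the job: $\Pi_\TT \mcm_{\mathbbm 1_{E_\alpha}} \Pi_\TT$ is the Toeplitz operator $T_{\mathbbm 1_{E_\alpha}}$, which by the Hartman-Wintner theorem has spectrum $[0,1]$, so Conclusion \ref{reformulated_norm_formula} gives $\|f(\Pi_\TT, \mcm_{\mathbbm 1_{E_\alpha}})\|_{\op} = M_f$. Lower semicontinuity of the norm under strong convergence, combined with the a priori upper bound $\|f(P_n,Q_n)\|_{\op} \le M_f$, then forces the limit (Conclusion \ref{normsconverge}). No approximate eigenvectors of the finite-$n$ compressions are ever constructed; the eventual density of their spectra in $[0,1]$ is a \emph{corollary} of the theorem, not an ingredient of its proof.
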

In particular, when $\alpha < \frac 1 2$, we can rigorously say that pairs of spectral projections of spin operators are unlike random pairs of projections.
\begin{conclusion}\label{discrepancy} For every $\alpha < \frac 1 2$ there exists $f \in \ker(T)$ such that $I_\alpha^f < M_f$. For every $0 \ne f \in \ker(T)$ there exists $\delta > 0$ such that $I_\alpha^f < M_f$ for every $\alpha < \delta$. Thus, there is a discrepancy between (\ref{grassmann_limit}), (\ref{spin_limit}).\end{conclusion}
Conclusion \ref{discrepancy} is illustrated for $f(x,y) = xy - yx$, $\alpha = \frac 1 {20}$ in the following images. Note that $\psi_f(t) = \sqrt{t(1-t)}$, $M_f = \frac 1 2$ in this case (see Example \ref{example_psi_f}).
\begin{figure}[H]
    \centering
        \includegraphics[width=1\textwidth]{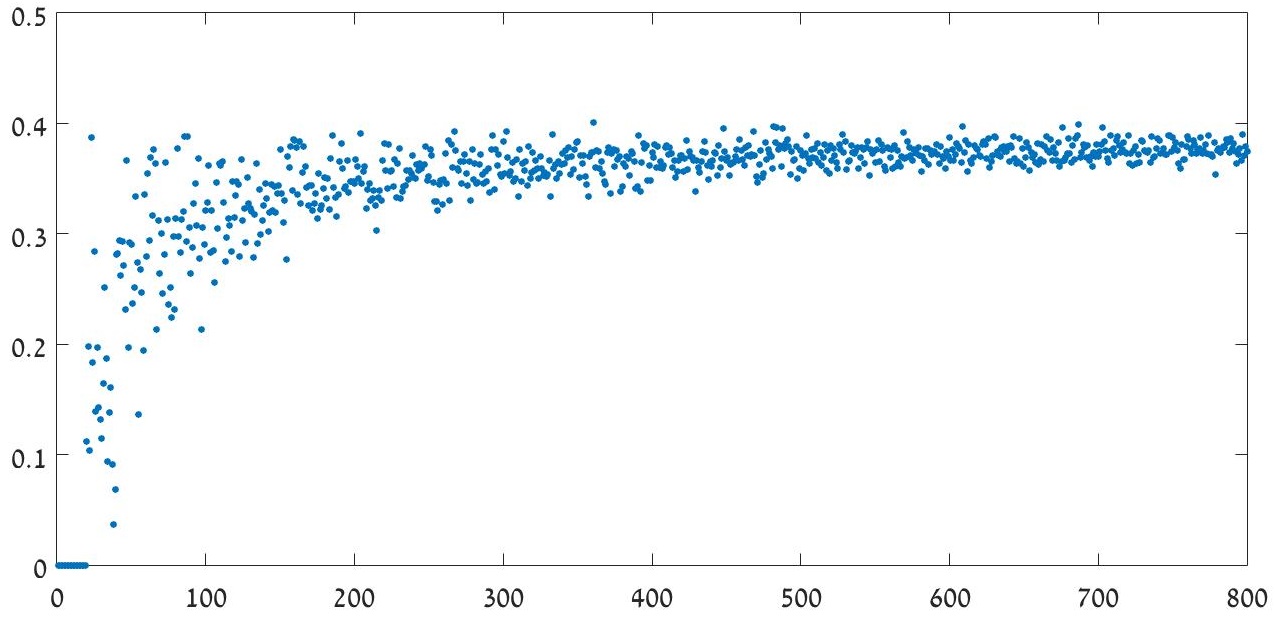}
        \caption{$\left\Vert \left[P_{\lfloor \alpha n \rfloor}, Q_{\lfloor \alpha n \rfloor} \right] \right\Vert_{\op}$ as a function of $n$ for random projections. Here, $\alpha = \frac 1 {20}$. The values concentrate about $2(1-2\alpha)\sqrt{\alpha(1-\alpha)} \approx 0.3923$.}\label{fig3}
\end{figure}
\begin{figure}[H]
    \centering
        \includegraphics[width=1\textwidth]{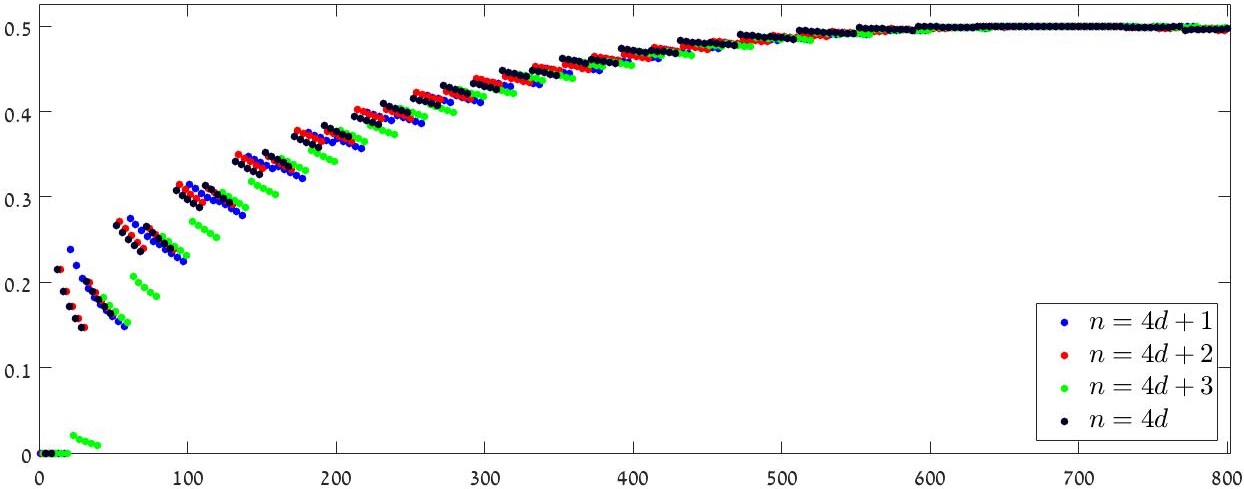}
        \caption{$\Vert \left[P_{1, \alpha, n}, P_{3,\alpha,n} \right] \Vert_{\op}$ as a function of $n$ for $\alpha = \frac 1 {20}$.}\label{fig4}
\end{figure}
\begin{remark} The last figure (\ref{fig4}) provides an example of a curious, unproven phenomenon. When $\alpha = \frac 1 {4k}$, our numerical simulations suggest that the graph of $n \mapsto \left \Vert \left[P_{1,\alpha,n}, P_{3,\alpha,n} \right] \right \Vert_{\op}$ decomposes, modulo 4, to pieces of length $k$. \end{remark}

Finally, we consider the pairs $P_{3,\alpha,n}$, $P_{1,n}$. Let $N_{n}(s,t)$ denote the number of eigenvalues of $P_{3,\alpha,n} P_{1,n} P_{3,\alpha,n} \in \End \left(\text{Im}(P_{3,\alpha,n}) \right)$ lying in the interval $[s,t]$, where $0 \le s < t \le 1$.
\begin{theorem}\label{prolate_theorem} For every $0 < t < \frac 1 2$,
\begin{equation*} \lim_{n \to \infty} \frac 1 {\alpha n} N_n(0,t) = \lim_{n \to \infty} \frac 1 {\alpha n} N_n(1-t, 1) = \frac 1 2,\end{equation*}
and
\begin{equation*} N_n(t, 1-t) = \mco(\log n).\end{equation*} \end{theorem}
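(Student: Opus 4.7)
Set $T_n = P_{3,\alpha,n} P_{1,n} P_{3,\alpha,n}$, regarded as a self-adjoint operator on $\Ran(P_{3,\alpha,n})$ with eigenvalues $\lambda_1, \dots, \lambda_N \in [0,1]$ where $N = \lfloor \alpha n \rfloor$. Cyclicity of the trace together with $P_{3,\alpha,n}^2 = P_{3,\alpha,n}$ and $P_{1,n}^2 = P_{1,n}$ give the identity
\begin{equation*}
\Tr(T_n - T_n^2) \;=\; \Tr(P_{3,\alpha,n} P_{1,n}) - \Tr\bigl((P_{3,\alpha,n} P_{1,n})^2\bigr) \;=\; \tfrac{1}{2}\bigl\Vert [P_{3,\alpha,n}, P_{1,n}]\bigr\Vert_{\mathrm{HS}}^2.
\end{equation*}
The plan is to establish
\begin{enumerate}[(i)]
\item $\Tr(T_n) = \tfrac{N}{2} + o(n)$, and
\item $\bigl\Vert [P_{3,\alpha,n}, P_{1,n}]\bigr\Vert_{\mathrm{HS}}^2 = O(\log n)$.
\end{enumerate}
All three conclusions of the theorem then follow by elementary spectral accounting. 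First, $\lambda_i(1-\lambda_i) \geq t(1-t)$ on $(t,1-t)$, so (ii) immediately yields $N_n(t,1-t) = O(\log n)$. Second, splitting $\sum_i \lambda_i$ over the three intervals $[0,u], (u,1-u), [1-u,1]$ and using (i), (ii), and $N_n(0,u) + N_n(u,1-u) + N_n(1-u,1) = N$, a short computation gives, for every $u \in (0,1/2)$,
\begin{equation*}
\frac{1-2u}{2(1-u)} + o(1) \;\leq\; \frac{N_n(1-u,1)}{N} \;\leq\; \frac{1}{2(1-u)} + o(1).
\end{equation*}
But (ii) also implies $|N_n(1-t,1) - N_n(1-u,1)| \leq N_n(\min(t,u), 1-\min(t,u)) = O(\log n)$ for every $t,u \in (0,1/2)$, so both bounds hold with the given $t$ in place of $u$, after an $o(1)$ shift. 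Letting $u \to 0^+$ pins $\lim_n N_n(1-t, 1)/(\alpha n) = 1/2$. The analogous argument applied to $\sum_i (1-\lambda_i)$ gives $\lim_n N_n(0, t)/(\alpha n) = 1/2$.

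For (i), expand in the $J_3$-eigenbasis $\{|m\rangle\}_{m \in \sigma_n}$: $\Tr(T_n) = \sum_{m \in S_n} \sum_{k > 0} |d^j_{mk}(\pi/2)|^2$, where $S_n \subset \sigma_n$ indexes the top $\lfloor \alpha n \rfloor$ eigenvalues of $J_3$ and $d^j$ is the Wigner $d$-matrix of the rotation by $\pi/2$ about the $J_2$-axis that conjugates $J_3$ to $J_1$. Classically, a unit vector of latitude $\theta = \cos^{-1}(m/j)$ rotated by $\pi/2$ about the $y$-axis has third coordinate $-\sin\theta \cos\phi$ as $\phi$ varies uniformly, and so the event $\{x_3 > 0\}$ has probability exactly $\tfrac{1}{2}$ by symmetry. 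Mehler--Heine / stationary-phase asymptotics for the Jacobi-polynomial representation of $d^j_{mk}(\pi/2)$ make this rigorous, giving $\langle m | P_{1,n} | m\rangle \to \tfrac{1}{2}$ uniformly for $m/j$ bounded away from $\pm 1$; the $O(1)$ indices at the edges of $\sigma_n$ contribute negligibly to the sum of $N$ terms.

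The main obstacle is the Slepian-type bound (ii). Since $P_{3,\alpha,n}$ is diagonal in the $J_3$-basis,
\begin{equation*}
\bigl\Vert [P_{3,\alpha,n}, P_{1,n}]\bigr\Vert_{\mathrm{HS}}^2 \;=\; 2\sum_{\substack{m \in S_n \\ m' \notin S_n}}\Bigl|\sum_{k > 0} d^j_{mk}(\pi/2)\, d^j_{m'k}(\pi/2)\Bigr|^2,
\end{equation*}
a one-sided sum of products of Wigner--Jacobi kernels --- the spherical analogue of the Slepian prolate kernel. My strategy is to establish a $1/|m-m'|$ decay for the off-diagonal entries $\langle m | P_{1,n} | m'\rangle$ in the bulk, via a Christoffel--Darboux-type identity or a stationary-phase analysis of the Jacobi representation of $d^j_{mk}(\pi/2)$; the constraint $m \in S_n,\ m' \notin S_n$ then forces $(m,m')$ to straddle one of the two endpoints of the band $S_n$, and summing $1/(m-m')^2$ across such pairs yields a harmonic sum $\sum_{d \leq n} 1/d = O(\log n)$. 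The technical heart is thus sharp, uniform asymptotics for $d^j_{mk}(\pi/2)$ on the full semiclassical range of $(m,k)$ --- including the turning-point regions near the endpoints of $S_n$ --- which is the spherical counterpart of the Szeg\H{o}/Landau--Widom analysis underpinning the classical prolate concentration phenomenon.
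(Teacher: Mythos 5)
Your plan mirrors the paper's proof closely: both reduce the theorem to the two estimates $\Tr(T_n) = \tfrac{\alpha n}{2} + o(n)$ and $\sum_k \lambda_k(1-\lambda_k) = \Tr(T_n) - \Tr(T_n^2) = O(\log n)$, and both then finish by elementary spectral counting (your three-window split with $u \to 0^+$ is equivalent to the paper's bookkeeping with $x_{n,t}$, $z_{n,t}$; the paper actually proves $\Theta(\log n)$, but only the upper bound is used). The quantity you write as $\tfrac12 \Vert [P_{3,\alpha,n}, P_{1,n}]\Vert_{\mathrm{HS}}^2$ is precisely what the paper computes as $\Vert (I - P_{3,\alpha,n}) P_{1,n} P_{3,\alpha,n}\Vert_F^2$, the squared Frobenius norm of the off-diagonal block of $[P_{1,n}]_{\mcb_{3,n}}$, and your proposed Christoffel--Darboux route to the $1/|m-m'|$ decay is exactly how the paper proceeds: it uses the exact three-term identity $\cot(\theta/2)\, d^j_{m',m}(\theta) = \tfrac{1}{m-m'}\bigl(\nu_{j,m}\, d^j_{m',m-1}(\theta) + \nu_{j,m'}\, d^j_{m'-1,m}(\theta)\bigr)$ together with the closed central-binomial formula for $d^j_{m,0}(\pi/2)$ and a single Stirling estimate. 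One remark worth making: because these identities are exact and the band endpoints sit in the bulk $|m_0| \le \gamma j$, the paper never needs uniform turning-point asymptotics for $d^j_{mk}(\pi/2)$ --- the ``Szeg\H{o}/Landau--Widom technical heart'' you flag as the main obstacle is bypassed entirely by the algebra plus the normalization $\sum_m d^j_{m,0}(\pi/2)^2 = 1$. The off-diagonal $O(\log n)$ bound is therefore attainable with less analytic machinery than you anticipate, though it is the one step of your plan that still has to be carried out in detail.
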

The behavior described in Theorem \ref{prolate_theorem} is typical in the context of Slepian spectral concentration problems (\cite{llwang, EMT, slepian, varah}), which involve pairs of (spectral) projections analogous to $P_{3,\alpha,n}$, $P_{1,\alpha,n}$. Namely, one projection (say, $Q_1$) is the operator of multiplication by an indicator function of a finite interval, and the second ($Q_2$) is obtained from the first through conjugation by the Fourier transform (on $\RR$, $S^1$, or $\ZZ_n = \ZZ / n\ZZ$). Then, the problem is essentially to investigate the spectral properties and the eigenfunctions of $Q_2 Q_1 Q_2$. While numerous variants of this problem have been explored in great detail, we were not able find literature on the specific one addressed here.

Theorem \ref{prolate_theorem} is illustrated for $P_{3,n}$, $P_{1,n}$ in figures \ref{fig7}, \ref{fig8}. In particular, it provides further evidence that pairs of spectral projections of spin operators are unlike random pairs of projections\footnote{Nonetheless, in the case of $P_{1,n}$, $P_{3,n}$, the limits (\ref{grassmann_limit}), (\ref{spin_limit}) coincide for every $f \in \mca$.}. Namely,
\begin{conclusion}\label{random_angles} Applying Theorem \ref{limit_joint} (which is a reformulation of a result from \cite{dumitriupaquette}) to some suitable $F \in C[0,1]$, we see that for every $0 < \varepsilon < \frac 1 2$ it is possible to choose $r > 0$ such that the measure of the set
\begin{equation*} \left\{(P,Q) \in \rmg_{\lfloor \alpha n \rfloor}(n) \times \rmg_{\lfloor \frac 1 2 n \rfloor}(n) \ | \ \#\left(\sigma(P Q P) \cap [\varepsilon, 1- \varepsilon]\right) \ge r n \text{ elements}\right\}\end{equation*}
converges to $1$ as $n \to \infty$. Here, $\sigma(A)$ is the spectrum of a linear operator $A$ and $\# S$ denotes the number of elements of a finite set $S$.\end{conclusion}
\begin{figure}[H]
    \centering
        \includegraphics[width=1\textwidth]{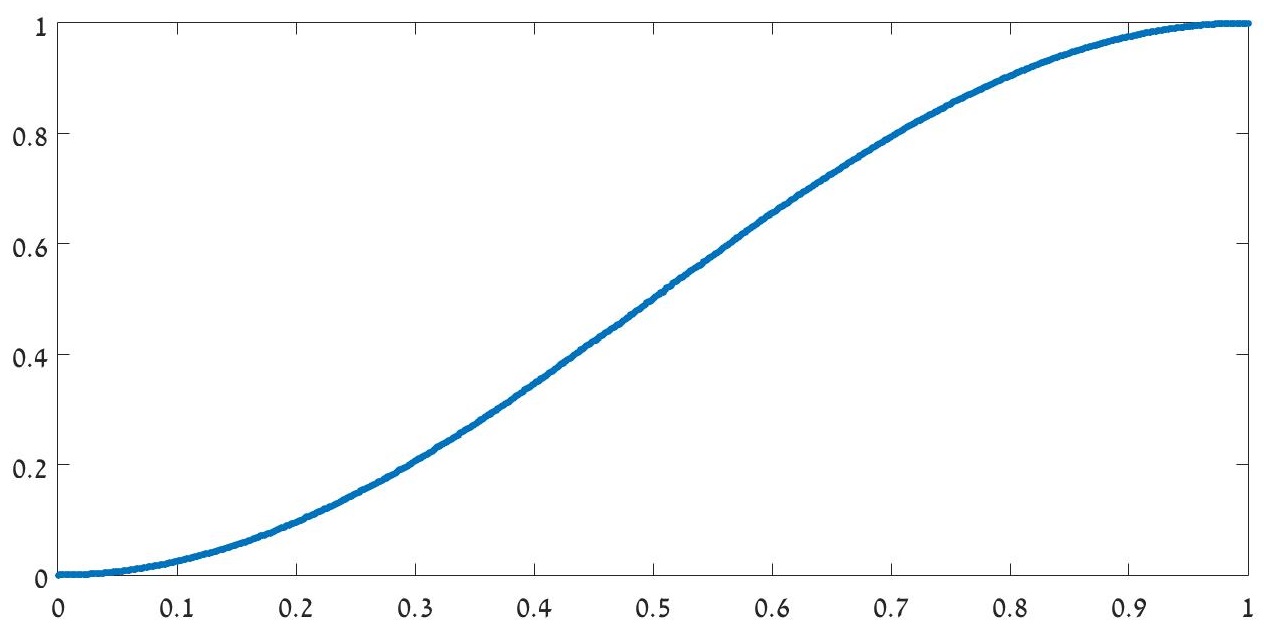}
        \caption{The sorted eigenvalues of $P Q P\in \End\left(V_{P_0} \right)$, where $P,Q \in \rmg_{n}(2n)$ are random and $n = 1000$.}\label{fig7}
\end{figure}
\begin{figure}[H]
    \centering
        \includegraphics[width=1\textwidth]{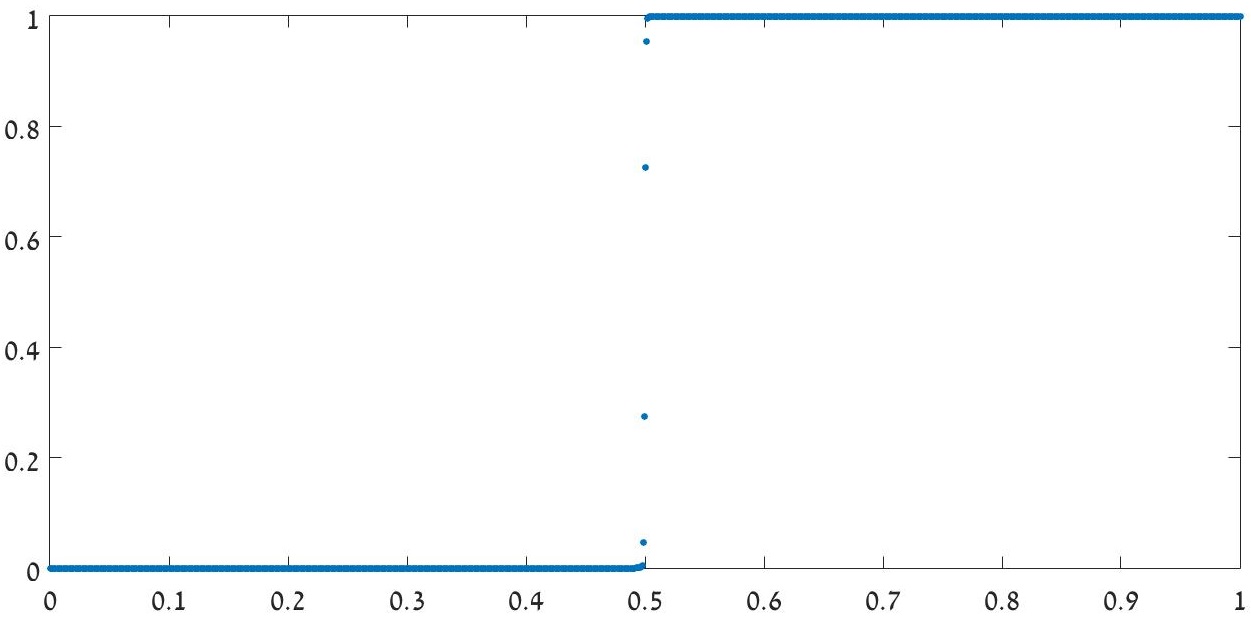}
        \caption{The sorted eigenvalues of $P_{3,n} P_{1,n} P_{3,n}\in \End\left(\text{Im}\left(P_{3,n}\right)\right)$ for $n = 2000$. The number of values that are visibly between $0$ and $1$ is $\mco(\log n)$.}\label{fig8}
\end{figure}
The conclusion of Theorem \ref{spin_theorem} also applies to the pairs $P_{3,\alpha,n}$, $P_{1,n}$ (while the corresponding random pairs satisfy a version of Theorem \ref{grassmann_theorem}, as specified in Theorem \ref{maintheorem}). This can be proven using the arguments of Section \ref{spin_section}, and it implies that $\lim_{n \to \infty} N_n(t,1-t) = \infty$. According to our numerical simulations, a version of Theorem \ref{prolate_theorem} applies to the pairs $P_{3,\alpha,n}$, $P_{1,\alpha,n}$ as well. Unfortunately, it is not clear whether this can be proven using the arguments of Section \ref{prolate_section}.
%
%Let us denote the spectrum of a linear operator $A$ by $\sigma(A)$. As detailed in Conclusion \ref{reformulated_norm_formula}, if $f\in \ker(T)$ and $P,Q$ are two arbitrary orthogonal projections, then
%\begin{equation*} \Vert f(P,Q) \Vert_{\op} = \max_{\sigma(PQP)} \psi_f.\end{equation*}
%This fact is key to the proofs of the results we have formulated so far.
\subsection{Examples}
Every polynomial $f \in \mca$ can be written uniquely as
\begin{equation*} f(x,y) = c_0 + f_1(xy) xy + f_2(yx) yx + f_3(xy) x + f_4(yx) y,\end{equation*}
where $c_0 \in \CC$ and $f_1, f_2, f_3, f_4$ are univariate polynomials. $\psi_f$ admits a rather concise formula (see Theorem \ref{normformula}) in terms of $f_1, f_2, f_3, f_4$. For instance, when $f(x,y) = f_1(xy) xy - f_1(yx) yx$ (see Example \ref{abelianization2}),
\begin{equation*} \psi_f(t) = \lvert f_1(t) \rvert \sqrt{t(1-t)}.\end{equation*}
More specifically,
\begin{example}\label{example_psi_f} Let $f(x,y) = (xy)^{k+1} - (yx)^{k+1}$, where $k \ge 0$. Then
\begin{equation*} \psi_f(t) = t^k \sqrt{t(1-t)},\end{equation*}and
\begin{equation*} M_f = \max_{[0,1]} \psi_f(t) = \psi_f \left(\frac {2k+1} {2k+2} \right) = \frac 1 {\sqrt{2k+2}} \left(1- \frac 1 {2k+2} \right)^{k + \frac 1 2}. \end{equation*}
Thus,
\begin{gather*} I^f_{\alpha} = \max_{\left[0,4\alpha(1-\alpha)\right]} \psi_f =  \left\{\begin{array}{ll} M_f & \frac 1 2 - \frac 1 {2 \sqrt{2k+2}} \le \alpha \\ \left(4\alpha(1-\alpha) \right)^{k + \frac 1 2}(1-2\alpha) & 0 < \alpha < \frac 1 2 - \frac 1 {2 \sqrt{2k+2}} \end{array}\right..\end{gather*} \end{example}

We further specialize and consider $f(x,y) = xy - yx$. The previous example shows that $M_f = \frac 1 2$ and
\begin{equation*} I^f_{\alpha} = \left\{\begin{array}{ll} \frac 1 2 & \frac 1 2 - \frac 1 {2\sqrt{2}} \le \alpha \le \frac 1 2,\\ 2(1-2\alpha)\sqrt{\alpha(1-\alpha)} & 0 < \alpha < \frac 1 2 - \frac 1 {2\sqrt 2} \end{array} \right. \end{equation*}
This is illustrated in the following images.
\begin{figure}[H]
    \centering
        \includegraphics[width=1\textwidth,center]{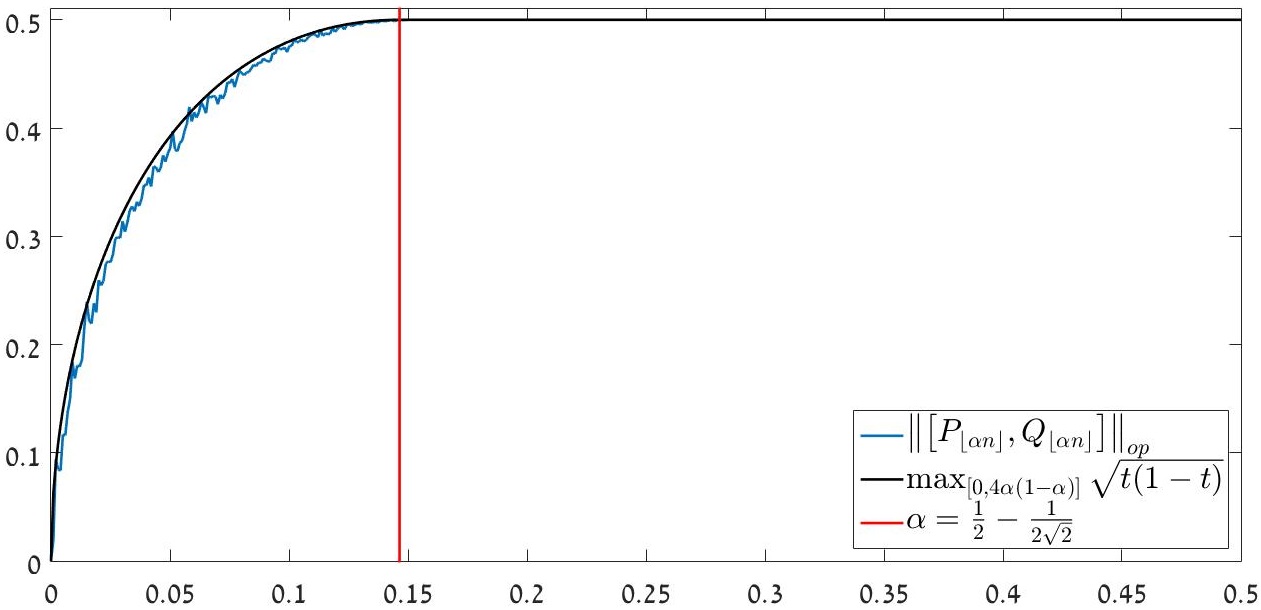}
        \caption{A simulation of randomly drawn $\left \Vert \left [P_{\lfloor \alpha n \rfloor}, Q_{\lfloor \alpha n \rfloor} \right] \right \Vert_{\op}$ as a function of $\alpha$ for $n = 1000$.}\label{fig5}
\end{figure}
The corresponding image for spectral projections of spin operators is (again) very dissimilar.
\begin{figure}[H]
    \centering
        \includegraphics[width=1.0\textwidth, center]{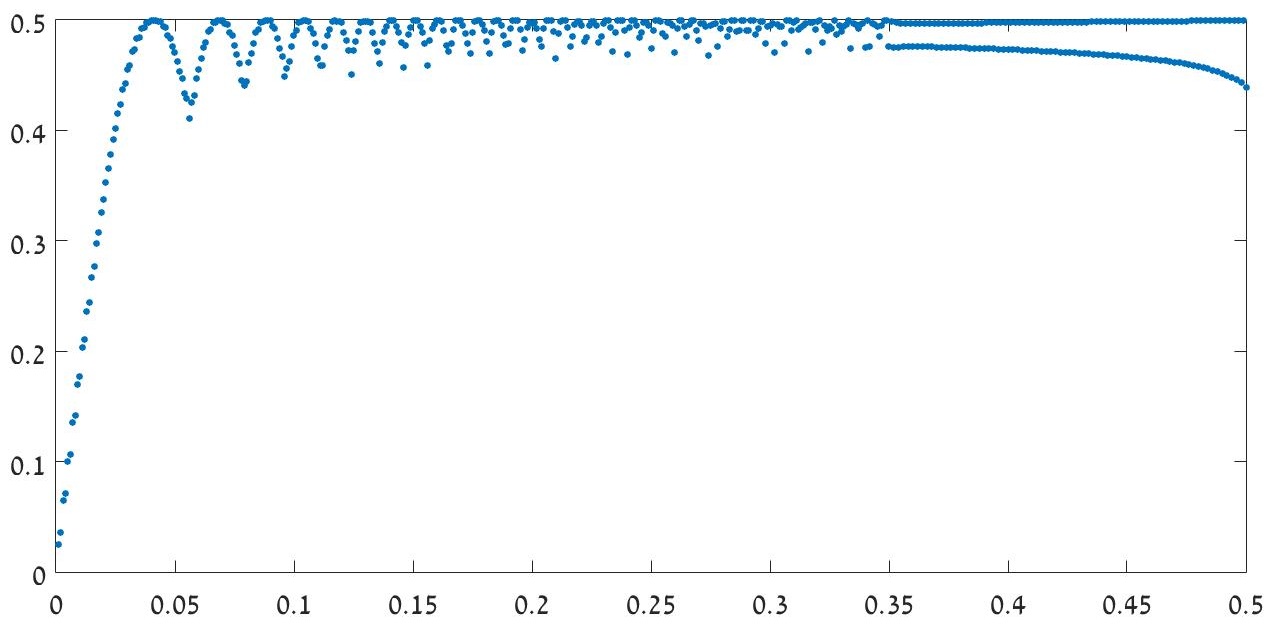}
        \caption{$\left\Vert \left[P_{1,\alpha, n}, P_{3,\alpha,n} \right] \right \Vert_{\op}$ as a function of $\alpha$, where $n = 1000$.}\label{fig6}
\end{figure}
\section{Preliminaries on two orthogonal projections}\label{two_projections_section}
The proofs of Theorems \ref{grassmann_theorem}, \ref{spin_theorem} are independent, but both rely on the general theory of two projections, which we now present. Nearly all of the contents of this section can be found in the excellent guide to the theory \cite{bottspitk}. A few minor modifications and lemmas were added for use in subsequent parts of the work.

Let $\mch$ be a separable complex Hilbert space, possibly infinite dimensional. A pair of orthogonal projections $P : \mch \to V_P,\ Q : \mch \to V_Q$ give rise to a decomposition of $\mch$ as the orthogonal direct sum
\begin{equation}\label{canonic_decomp} \mch = V_{00} \oplus V_{01} \oplus V_{10} \oplus V_{11} \oplus V_0 \oplus V_1,\end{equation}
where
\begin{align*} V_{00} &= V_P \cap V_Q,\ V_{01} = V_P \cap V_Q^\perp,\\ V_{10} &= V_P^\perp \cap V_Q,\ V_{11} = V_P^\perp \cap V_Q^\perp,\\ V_0 &= \left(V_{00} \oplus V_{01} \right)^\perp\cap V_P,\ V_1 = \left(V_{10} \oplus V_{11} \right)^\perp \cap V_P^\perp,\end{align*}
so that
\begin{equation*} V_P = V_{00} \oplus V_{01} \oplus V_0,\ V_P^\perp = V_{10} \oplus V_{11} \oplus V_1.\end{equation*}
Of course, some of the summands in the decomposition (\ref{canonic_decomp}) of $\mch$ may be trivial.
\begin{remark} Let $V = V_0 \oplus V_1$. Then $P, Q$ commute on $V^\perp = \bigoplus_{l,k \in \{0,1\}} V_{lk}$. Hence, unless stated otherwise, we assume throughout this section that $V \ne \{0\}$.\end{remark}
Given $\alpha_{lk} \in \CC$, where $l,k \in \{0,1\}$, we abbreviate
\begin{equation*} (\alpha_{00}, \alpha_{01}, \alpha_{10}, \alpha_{11}) = \alpha_{00} I_{V_{00}} \oplus \alpha_{01} I_{V_{01}} \oplus \alpha_{10} I_{V_{10}} \oplus \alpha_{11} I_{V_{11}} : V^\perp \to V^\perp.\end{equation*}
Clearly
\begin{equation*} P|_{V^\perp} = (1,1,0,0),\ Q|_{V^\perp} = (1,0,1,0)\end{equation*}
and $P|_{V} = \diag(I, 0)$.
A canonical form of the pair $P,Q$ is specified as follows.
\begin{theorem}[\cite{halmos}]\label{canonic_repr} $V_0 \ne \{0\}$ if and only if $V_1 \ne \{0\}$. If $V_0, V_1 \ne \{0\}$, then
\begin{align*} P &= (1,1,0,0) \oplus P|_V = (1,1,0,0) \oplus \left(\begin{array}{cc} I & 0 \\ 0 & 0 \end{array}\right),\\ Q &= (1, 0, 1, 0) \oplus Q|_V = (1,0,1,0) \oplus U^* \left(\begin{array}{cc} I - H & \sqrt{H(I-H)} \\ \sqrt{H(I-H)} & H \end{array}\right)U.\end{align*}
Here $0\le H\le 1$ with $\ker H = \ker(I-H) = \{0\}$ and $U =\diag \left(I, R \right)$ with $R : V_1 \to V_0$ unitary. \end{theorem}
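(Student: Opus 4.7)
The plan is to reduce to the ``generic part'' $V = V_0 \oplus V_1$ on which $P$ and $Q$ share no common $0$- or $1$-eigenvectors, and to construct a unitary $R \colon V_1 \to V_0$ so that after conjugation $Q|_V$ takes the stated block form while $P|_V$ becomes $\diag(I,0)$.

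First I would verify that $V$ is invariant under both $P$ and $Q$. For $P$ this is immediate because $V^\perp = V_{00}\oplus V_{01}\oplus V_{10}\oplus V_{11}$ is a direct sum of $P$-eigenspaces. For $Q$-invariance one checks directly, using $Q^* = Q$ together with the facts that $Q$ is the identity on $V_{00}, V_{10}$ and zero on $V_{01}, V_{11}$, that $\langle Qv, w \rangle = 0$ whenever $v \in V_0 \cup V_1$ and $w$ lies in any of the four $V_{lk}$. Hence $P$ and $Q$ restrict to operators on $V$, and one may write $Q|_V$ in block form with respect to $V = V_0 \oplus V_1$:
\[Q|_V = \begin{pmatrix} A & C \\ C^* & D \end{pmatrix}, \qquad A := PQP|_{V_0}, \quad D := (I-P)Q(I-P)|_{V_1}.\]

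Next I would perform spectral analysis on $A$ and $D$. A direct computation shows $Av = 0 \Rightarrow \|Qv\|^2 = \langle Av, v\rangle = 0$, placing $v \in V_P \cap V_Q^\perp = V_{01}$, which is orthogonal to $V_0$; hence $v = 0$. Similarly $Av = v$ forces $v \in V_{00}$, and analogous arguments show $D$ has no $0$- or $1$-eigenvectors. Expanding $Q^2 = Q$ block-wise produces $CC^* = A(I-A)$, $C^*C = D(I-D)$, and $AC + CD = C$. Since $A(I-A)$ and $D(I-D)$ are injective, $C$ and $C^*$ both have trivial kernels and dense ranges, so the polar decomposition of $C$ furnishes an honest unitary $R \colon V_1 \to V_0$ with $C = |C^*| R = R |C|$ and $|C^*| = \sqrt{A(I-A)}$. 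This simultaneously yields the dichotomy $V_0 \ne \{0\} \Leftrightarrow V_1 \ne \{0\}$: if one is trivial then $C = 0$, whence $A(I-A) = 0$, and the kernel conditions on $A$ force $V_0 = \{0\}$ (and symmetrically).

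Setting $H := I - A$ on $V_0$, I rewrite $AC + CD = C$ as $A|C^*|R + |C^*|RD = |C^*|R$; applying $R^*$ on the right and using that $|C^*|$ commutes with $A$ gives $|C^*|\bigl(RDR^* - (I-A)\bigr) = 0$, and injectivity of $|C^*|$ forces $RDR^* = H$. Conjugating $Q|_V$ by $U := \diag(I_{V_0}, R)$ then produces the desired canonical form, while $P|_V = \diag(I,0)$ is immediate from $V_0 \subset V_P$, $V_1 \subset V_P^\perp$; the kernel conditions $\ker H = \ker(I-H) = \{0\}$ transfer directly from the spectral analysis of $A$. The most delicate step, especially in the possibly infinite-dimensional setting, is upgrading ``$C$ has trivial kernel and dense range both ways'' to an \emph{honest} unitary $R \colon V_1 \to V_0$: this is precisely where one needs all four of $\ker A,\, \ker(I-A),\, \ker D,\, \ker(I-D)$ to be trivial, so that the partial isometry in the polar decomposition has both full initial and full final space.
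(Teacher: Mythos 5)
The paper states this theorem with a citation to Halmos and does not supply its own proof, so there is nothing in the paper to compare against line by line. Your argument is correct and is essentially the standard Halmos two-subspaces proof: restrict to $V=V_0\oplus V_1$, read off $CC^*=A(I-A)$, $C^*C=D(I-D)$, $AC+CD=C$ from $Q^2=Q$, use injectivity of $A$, $I-A$, $D$, $I-D$ to make the partial isometry in the polar decomposition of $C$ a genuine unitary, and conjugate; the equivalence $V_0\ne\{0\}\Leftrightarrow V_1\ne\{0\}$ and the kernel conditions on $H$ drop out of the same injectivity facts. The one place worth stating explicitly (you gesture at it correctly at the end) is that the range of a partial isometry is automatically closed, so dense range plus trivial kernel really does give a unitary in the infinite-dimensional case.
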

In particular, we note that $P,Q$ are non-commuting if and only if $V_0 \ne \{0\}$.
\subsection{Polynomials in two non-commuting idempotents}
Recall that $\mca$ denotes the complex algebra generated by two non-commuting variables $x,y$ which satisfy the relations $x= x^2$, $y =y^2$. A basis of $\mca$ as a vector space is provided by the monomials
\begin{equation}\label{monomialsbasis} 1,\ (xy)^{k+1},\ (yx)^{k+1},\ (xy)^k x,\ (yx)^k y,\end{equation}
where $k \ge 0$. Thus any $f \in \mca$ decomposes uniquely as
\begin{equation*} f(x,y) = a_0 + f_1(xy)xy+ f_2(yx)yx + f_3(xy) x +f_4(yx) y,\end{equation*}
where $f_1, f_2, f_3, f_4$ are complex univariate polynomials. Let $r(t) = \sqrt{t(1-t)}$, and for $l,k \in \{0,1\}$ define $g_{lk} : [0,1] \to \CC$ by
\begin{align*}
g_{00}(t) &= a_0 + f_3(t) +t  \left( f_1(t) + f_2(t) + f_4(t)\right),\\ g_{01}(t) &= r(t)\left(f_1(t) + f_4(t)\right),\ g_{10}(t) = r(t)\left(f_2(t) + f_4(t)\right),\\ g_{11}(t) &= a_0  + (1-t)f_4(t).\end{align*}
Also, denote
\begin{equation*} \alpha_{00} = g_{00}(1),\ \alpha_{01} = g_{00}(0),\ \alpha_{10} = g_{11}(0),\ \alpha_{11} = g_{11}(1).\end{equation*}

\begin{lemma}\label{abelianization} Let $T : \mca \to \CC[z, w]\left/ \mci \right.$ be the "abelianization" map of (\ref{abelianizationmap}). Then $f \in \ker (T)$ if and only if
\begin{equation*} \alpha_{00} = \alpha_{10} = \alpha_{01} = \alpha_{11} = 0.\end{equation*} \end{lemma}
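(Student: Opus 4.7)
The plan is to prove Lemma \ref{abelianization} by identifying $\CC[z,w]/\mci$ with $\CC^4$ via evaluation at the four idempotent points and then observing that the four scalars $\alpha_{lk}$ are exactly these evaluations.

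First I would note that $\mci = (z^2 - z, w^2 - w)$ is the intersection of the four maximal ideals $(z - \varepsilon, w - \delta)$ with $(\varepsilon, \delta) \in \{0,1\}^2$, so the Chinese Remainder Theorem supplies a canonical isomorphism
\[
\mathrm{ev} \colon \CC[z,w]/\mci \xrightarrow{\sim} \CC^4,\quad [p(z,w)] \mapsto \bigl(p(1,1),\,p(1,0),\,p(0,1),\,p(0,0)\bigr).
\]
In particular $T(f) = 0$ if and only if the commutative polynomial $f(z,w)$ vanishes at all four points of $\{0,1\}^2$.

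Next, given the decomposition
\[
f(x,y) = a_0 + f_1(xy)xy + f_2(yx)yx + f_3(xy)x + f_4(yx)y,
\]
I would directly substitute each commuting idempotent pair into $f(z,w)$. A one-line calculation in each case yields $f(1,1) = a_0 + f_1(1) + f_2(1) + f_3(1) + f_4(1)$, $f(1,0) = a_0 + f_3(0)$, $f(0,1) = a_0 + f_4(0)$, and $f(0,0) = a_0$. Matching against the displayed formulas for $g_{00}, g_{11}$ and the definitions $\alpha_{00} = g_{00}(1)$, $\alpha_{01} = g_{00}(0)$, $\alpha_{10} = g_{11}(0)$, $\alpha_{11} = g_{11}(1)$, one finds $\mathrm{ev}(T(f)) = (\alpha_{00}, \alpha_{01}, \alpha_{10}, \alpha_{11})$, so the two conditions in the lemma are equivalent.

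There is essentially no obstacle here: the lemma is a bookkeeping identity, and the only computational point is confirming that the values of $g_{00}, g_{11}$ at $0$ and $1$ reproduce the four evaluations above. Conceptually, the $\alpha_{lk}$ are the scalars by which $f(P,Q)$ acts on the commuting summands $V_{lk}$ of the decomposition (\ref{canonic_decomp}), so the lemma simply records that $f \in \ker(T)$ is equivalent to $f(P,Q)$ vanishing on the maximal subspace where $P$ and $Q$ commute.
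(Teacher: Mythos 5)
Your proof is correct, and the essential calculation matches the paper's; the organization is a bit different. The paper writes out $Tf$ in the monomial basis $\{1,z,w,zw\}$ of $\CC[z,w]/\mci$, obtaining $Tf = a_0 + f_3(0)z + f_4(0)w + \bigl[f_1(1)+f_2(1)+f_3(1)-f_3(0)+f_4(1)-f_4(0)\bigr]zw$, and then checks that vanishing of these four coefficients is equivalent to $\alpha_{00}=\alpha_{01}=\alpha_{10}=\alpha_{11}=0$ (a short triangular change of variables). You instead pass directly to the evaluation isomorphism $\CC[z,w]/\mci \cong \CC^4$ at the four points of $\{0,1\}^2$, in which the four coordinates of $Tf$ are precisely $\alpha_{00},\alpha_{01},\alpha_{10},\alpha_{11}$, so the equivalence becomes immediate. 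Your variant is a cleaner presentation of the same bookkeeping and carries the extra conceptual content (the $\alpha_{lk}$ are the scalars by which $f(P,Q)$ acts on the commuting summands $V_{lk}$ of (\ref{canonic_decomp})), at the small cost of invoking the Chinese Remainder Theorem rather than just comparing polynomial coefficients. Both are fine; no gaps.
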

\begin{proof}
Write
\begin{equation*} f_l(t) = \sum_{k=0}^{k_l} a_k^{(l)} t^k,\ l = 1,2,3,4.\end{equation*}
A straightforward computation shows that
\begin{gather*} \left(Tf\right)(z,w) =\\ a_0 + f_3(0) z + f_4(0) w + \left[f_1(1) + f_2(1) + f_3(1) - f_3(0) + f_4(1) - f_4(0) \right] zw.\end{gather*}
Thus we obtain the required. \end{proof}

Next, by induction,
\begin{align*}(PQ)^{k+1} &= I_{V_{00}}\oplus U^* \left(\begin{array}{ll} (I-H)^{k+1} & (I-H)^k r(H) \\ 0 & 0 \end{array}\right) U,\\ (QP)^{k+1} &= I_{V_{00}}\oplus U^* \left(\begin{array}{ll} (I-H)^{k+1} & 0 \\ (I-H)^k r(H) & 0 \end{array}\right) U,\\ (PQ)^{k}P &=I_{V_{00}}\oplus U^* \left(\begin{array}{ll} (I-H)^{k} & 0 \\ 0 & 0 \end{array}\right) U,\\ (QP)^{k}Q &= I_{V_{00}}\oplus U^*  \left(\begin{array}{ll} (I-H)^{k+1} & (I-H)^{k} r(H) \\ (I-H)^{k} r(H) & (I-H)^{k} H \end{array}\right)U,\end{align*}
where $k \ge 0$. By linearity, we obtain a precise expression for $f(P,Q)$ as follows. 
\begin{conclusion}[\cite{gileskummer}]\label{canonic_poly} For every complex Hilbert space $\mch$ and for any pair of non-commuting orthogonal projections $P : \mch \to V_P,\ Q : \mch \to V_Q$,
\begin{gather*} f(P,Q) = a_0 + f_1(PQ)PQ + f_2(QP)QP + f_3(PQ)P + f_4(QP)Q=\\ (\alpha_{00}, \alpha_{01}, \alpha_{10}, \alpha_{11}) \oplus U^* \left(\begin{array}{ll} g_{00}(I-H) & g_{01}(I-H) \\ g_{10}(I-H) & g_{11}(I-H)\end{array}\right) U.\end{gather*}\end{conclusion}
\begin{proof}
Denote $f_l(t) = \sum_{k=0}^{k_l} a_k^{(l)} t^k$, where $l=1,2,3,4$ and $k_l \ne 0$. We compute each of the summands $f_1, f_2, f_3, f_4$ of $f(P,Q)$ separately.

The first summand is
\begin{gather*} f_1(PQ)PQ = \sum_{k=0}^{k_1} a_k^{(1)} (PQ)^{k+1} = \\\left( f_1(1) I_{V_{00}}\right) \oplus U^* \left(\begin{array}{ll} f_1(I-H)(I-H) & f_1(I-H) r(H) \\ 0 & 0 \end{array}\right) U. \end{gather*}

The second summand is
\begin{gather*} f_2(QP)QP = \sum_{k=0}^{k_2} a_k^{(2)}(QP)^{k+1} = \\ \left(f_2(1) I_{V_{00}} \right) \oplus U^* \left(\begin{array}{ll} f_2(I-H)(I-H) & 0 \\ f_2(I-H) r(H) & 0 \end{array}\right)U.\end{gather*}

The third summand is
\begin{equation*} f_3(PQ)P = f_3(PQP)P = a_0^{(3)} P + \sum_{k=0}^{k_3-1} a_{k+1}^{(3)} (PQP)^{k+1},\end{equation*}
where
\begin{gather*} \sum_{k=0}^{k_3-1} a_{k+1}^{(3)} (PQP)^{k+1} = \\ \left((f_3(1)- a_0^{(3)}) I_{V_{00}} \right)\oplus U^* \left(\begin{array}{cc} \sum_{k=0}^{k_3 - 1} a_{k+1}^{(3)} (I-H)^{k+1} & 0 \\ 0 & 0 \end{array}\right)U,\end{gather*}
so
\begin{gather*} f_3(PQ)P = (f_3(1), f_3(0), 0, 0) \oplus U^* \left(\begin{array}{ll} f_3(I-H) & 0 \\ 0 & 0 \end{array}\right) U.\end{gather*}

Finally,
\begin{equation*} f_4(QP)Q = f_4(QPQ)Q = a_0^{(4)} Q +\sum_{k=0}^{k_4-1} a_{k+1}^{(4)}(QPQ)^{k+1},\end{equation*}
so
\begin{gather*} f_4(QP)Q =\\ (f_4(1), 0, f_4(0), 0) \oplus U^* \left(\begin{array}{ll} f_4(I-H)(I-H) & f_4(I-H) r(H) \\ f_4(I-H)r(H) & f_4(I-H) H \end{array}\right)U.\end{gather*}
Putting everything together, we obtain the required.
\end{proof}

\begin{remark}\label{free_abelianization} We can factor $T$ as $ \Pi_{\mci} \circ T_0$, where $T_0 : \mca \to \CC[z,w]$ is the linear map $f(x,y) \mapsto f(z,w)$ (unlike $T$, the map $T_0$ is not a morphism of algebras) and $\Pi_{\mci}$ is the projection onto $\CC[z,w]/ \mci$. Note that $f \in \ker (T_0)$ if and only if
\begin{equation*} f(x,y) = f_1(xy) xy - f_1(yx) yx = \left[x, f_1(yx) y \right],\end{equation*} where $f_1$ is a complex univariate polynomial, and then
\begin{gather*} f(P,Q) = \left[P, f_1(QP)Q \right] = \\ = (0,0,0,0) \oplus U^* \left(\begin{array}{cc} 0 & f_1(I-H) r(H) \\ - f_1(I-H) r(H) & 0 \end{array}\right)U.\end{gather*}
\end{remark}
\subsection{The operator norm of polynomials in two projections}\label{norm_subsection}
We define (following \cite{spitkovsky}) two functions $g_0, g_1 \in C \left([0,1], \CC \right)$ by
\begin{equation}\label{canonic_funcs} g_0 = \sum_{l,k = 0}^1 \lvert g_{lk} \rvert^2,\ g_1 = g_{00} g_{11} - g_{01} g_{10},\end{equation}
where $g_{00}, g_{01}, g_{10}, g_{11}$ are the functions appearing in Conclusion \ref{canonic_poly}.
\begin{remark} In the literature, it appears that $g_{00}, g_{01}, g_{10}, g_{11}$, and $g_0, g_1$ are considered only as functions on $\sigma(I-H)$. For us, it is very useful that they are determined by $f$ as continuous functions on $[0,1]$, so as to apply by restriction to all separable Hilbert spaces and orthogonal projections.\end{remark}
$g_0, g_1$ can be used to express $\sigma \left(f(P,Q)\right)$ in terms of $\sigma(I-H)$, since $\left. f(P,Q)\right|_V$ has a bounded inverse if and only if $g_1(I-H)$ does. In particular, the polynomial $f(P,Q) f^*(P,Q)$ can be used to determine $\Vert f(P,Q) \Vert_{\op}$.
\begin{theorem}[\cite{bottspitk, spitkovsky}]\label{normformula} Let $\Lambda = \{(l,k) \in \{0,1\}^2 \ | \ V_{lk} \ne \{0\}\}$. The operator norm of $f(P,Q)$ is given by
\begin{equation*} \Vert f(P,Q) \Vert_{\op} = 
  \max\left\{\max_{(l,k) \in \Lambda} \lvert \alpha_{l,k} \rvert, \max_{t \in \sigma(I-H)} \psi_f(t)\right\},\end{equation*}
where $\psi_f\in C([0,1], \RR)$ is given by
\begin{equation*} \psi_f(t) = \sqrt{\frac{g_0(t) + \sqrt{g_0^2(t) - 4 \lvert g_1(t) \rvert^2}} 2 }.\end{equation*}\end{theorem}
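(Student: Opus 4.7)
The plan is to reduce the computation to a fiberwise $2\times 2$ singular value problem via the explicit form given in Conclusion \ref{canonic_poly} and the functional calculus for the self-adjoint operator $H$.

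First, since $U$ is unitary and the operator norm is additive under orthogonal direct sums in the sense that the norm of a direct sum is the supremum of the norms of the summands, Conclusion \ref{canonic_poly} immediately gives
\begin{equation*}
\Vert f(P,Q) \Vert_{\op} = \max\left\{\max_{(l,k) \in \Lambda} \lvert \alpha_{lk} \rvert,\ \Vert M \Vert_{\op}\right\},
\end{equation*}
where $M$ denotes the $2\times 2$ block operator
\begin{equation*}
M = \begin{pmatrix} g_{00}(I-H) & g_{01}(I-H) \\ g_{10}(I-H) & g_{11}(I-H) \end{pmatrix}
\end{equation*}
acting on $V_0 \oplus V_0$ (after identification via the unitary $R$). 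The restriction to $(l,k) \in \Lambda$ is because the summands with $V_{lk} = \{0\}$ do not contribute.

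Next I would compute $\Vert M \Vert_{\op}^2 = \Vert M M^* \Vert_{\op}$. All four entries of $M$ are continuous functions of the single self-adjoint operator $I-H$ and therefore mutually commute. Consequently,
\begin{equation*}
M M^* = \begin{pmatrix} (|g_{00}|^2 + |g_{01}|^2)(I-H) & (g_{00}\bar g_{10} + g_{01} \bar g_{11})(I-H) \\ (g_{10}\bar g_{00} + g_{11}\bar g_{01})(I-H) & (|g_{10}|^2 + |g_{11}|^2)(I-H) \end{pmatrix},
\end{equation*}
again with commuting entries. By the joint functional calculus (equivalently, by diagonalizing $H$ via the spectral theorem and viewing everything as a direct integral over $\sigma(I-H)$), the operator norm of $MM^*$ equals $\sup_{t \in \sigma(I-H)}$ of the operator norm of the scalar $2\times 2$ matrix obtained by replacing $I-H$ with $t$. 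Continuity of $g_{lk}$ on $[0,1]$ and compactness of $\sigma(I-H)$ upgrade the supremum to a maximum.

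Finally, I would invoke the elementary fact that for any scalar $2\times 2$ matrix $A$, the square of its operator norm (largest singular value) equals the larger root of $\lambda^2 - \Vert A \Vert_F^2 \lambda + |\det A|^2$, namely
\begin{equation*}
\Vert A \Vert^2 = \frac{\Vert A\Vert_F^2 + \sqrt{\Vert A \Vert_F^4 - 4|\det A|^2}}{2}.
\end{equation*}
For the fiber matrix at $t$, the definitions (\ref{canonic_funcs}) give $\Vert A(t) \Vert_F^2 = g_0(t)$ and $\det A(t) = g_1(t)$, so $\Vert A(t) \Vert = \psi_f(t)$. Combining, $\Vert M \Vert_{\op} = \max_{t \in \sigma(I-H)} \psi_f(t)$, which yields the claimed formula.

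The only subtle step is the justification that the operator norm of a $2\times 2$ block operator whose entries all commute and are continuous functions of a single bounded self-adjoint operator equals the supremum over the joint spectrum of the pointwise $2\times 2$ matrix norms; this is a standard application of the spectral theorem (diagonalize $H$ and view $M$ as a direct integral), and is the only place where one must be careful about measurability versus continuity. Everything else is algebraic bookkeeping matching the entries in Conclusion \ref{canonic_poly} to the definitions (\ref{canonic_funcs}).
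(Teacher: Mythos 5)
The paper does not actually prove Theorem~\ref{normformula}; it quotes it from \cite{bottspitk, spitkovsky}. Your proposal supplies a correct, self-contained proof using exactly the ingredients the paper makes available, and it is the standard argument underlying the cited references: apply Conclusion~\ref{canonic_poly} to reduce to a $2\times 2$ block operator whose entries are continuous functions of the single self-adjoint operator $I-H$, use the spectral theorem to pass to a pointwise (direct-integral) picture so that $\Vert M\Vert_{\op}$ is the supremum over $\sigma(I-H)$ of the scalar $2\times 2$ fiber norms, and evaluate that fiber norm with the standard formula $\Vert A\Vert^2 = \tfrac12\bigl(\Vert A\Vert_F^2 + \sqrt{\Vert A\Vert_F^4 - 4|\det A|^2}\,\bigr)$, then identify $\Vert A(t)\Vert_F^2 = g_0(t)$ and $\det A(t) = g_1(t)$ from (\ref{canonic_funcs}). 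All steps are justified: the direct-sum reduction and the restriction to $(l,k)\in\Lambda$ are immediate; the computation of $MM^*$ is correct because the entries commute and $(g(I-H))^* = \bar g(I-H)$; compactness of $\sigma(I-H)\subset[0,1]$ plus continuity of the $g_{lk}$ upgrades the essential supremum to a genuine maximum; and the inequality $g_0^2 \ge 4|g_1|^2$ needed for $\psi_f$ to be real is automatic since it is the AM--GM relation $(\sigma_1^2+\sigma_2^2)^2 \ge 4\sigma_1^2\sigma_2^2$ for the singular values. One small terminological slip: you say ``joint spectrum,'' but there is only one generating operator $H$, so it is simply the spectrum $\sigma(I-H)$; this does not affect the argument.
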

Note that
\begin{equation*} \psi_f(0) = \max\{|\alpha_{10}|, |\alpha_{01}|\},\ \psi_f(1) = \max\{|\alpha_{11}|, |\alpha_{00}|\}.\end{equation*}
Assume $f \in \ker(T)$, so that by Lemma \ref{abelianization}, $\alpha_{lk} = 0$ for $l,k = 0,1$, or equivalently $\psi_f(0) = \psi_f(1)=0$. If $P,Q$ commute, then $f(P,Q) = 0$ and $\max_{\sigma(PQP)}\psi_f = 0$. Otherwise if $P,Q$ do not commute, $\max_{\sigma(I-H)}\psi_f = \max_{\sigma(PQP)} \psi_f$. We use this to reformulate Theorem \ref{normformula} as follows.
\begin{conclusion}\label{reformulated_norm_formula} Let $\Lambda = \{(l,k) \in \{0,1\}^2 \ | \ V_{lk} \ne \{0\}\}$ as above, and $f \in \mca$.
\begin{enumerate}
\item{Assume $f \in \ker (T)$. Then for every separable complex Hilbert space $\mch$ and orthogonal projections $P,Q$ on $\mch$,
\begin{equation*} \Vert f(P,Q) \Vert_{\op} = \max_{\sigma(PQP)} \psi_f.\end{equation*}}
\item{Define $\Psi_f : [0,1] \cup \{0,1\}^2 \to \RR$ by $\left. \Psi_f \right|_{[0,1]} = \psi_f$ and $\Psi_f((l,k)) = |\alpha_{lk}|$ for $(l,k) \in \{0,1\}^2$. Denote $\sigma_{P,Q} = \sigma(I-H) \cup \Lambda$. Then for every separable complex Hilbert space $\mch$ and orthogonal projections $P,Q$ on $\mch$,
\begin{equation*} \Vert f(P,Q) \Vert_{\op} = \max_{\sigma_{P,Q}} \Psi_f.\end{equation*}}
\end{enumerate}
Note that $\max_{[0,1]} \psi_f = \max_{[0,1] \cup \{0,1\}^2} \Psi_f$, since $\psi_f(1) = \max\{|\alpha_{00}|, |\alpha_{11}|\}$ and $\psi_f(0) = \max\{|\alpha_{01}|, |\alpha_{10}|\}$.
\end{conclusion}
The latter, together with Claim \ref{max_norm}, immediately leads to the following.
\begin{conclusion}\label{universal_upper_bound} The constant
\begin{equation*} M_f = \max_{[0,1]} \psi_f = \max_{[0,1] \cup \{0,1\}^2}\Psi_f\end{equation*} is a universal, tight upper bound for $\Vert f(P,Q) \Vert_{\op}$, where $P,Q$ are any orthogonal projections on an arbitrary complex Hilbert space $\mch$.
\end{conclusion}
We conclude with the following example.
\begin{example}\label{abelianization2} Let $f \in \ker (T_0)$ where $T_0$ is as in Remark \ref{free_abelianization}, and recall that $r(t) = \sqrt{t(1-t)}$. Then
\begin{equation*} f(P,Q) = f_1(PQ)PQ - f_1(QP)QP\end{equation*}
for some univariate polynomial $f_1$, therefore
\begin{equation*} g_{00} = g_{11} = 0,\ g_{01} = -g_{10}(t) = f_1(t) r(t).\end{equation*}
It follows that $g_0(t) = 2 \lvert g_{01}(t) \rvert^2,\ g_1(t) = g_{01}(t)^2$, hence we find that
\begin{gather*} \psi_f(t) = \sqrt{\frac{g_0(t) + \sqrt{g_0(t)^2 - 4 \lvert g_1(t) \rvert^2}} 2 } \\ = \sqrt{\frac{2 \lvert g_{01}(t) \rvert^2 + \sqrt{4 \lvert g_{01}(t) \rvert^4 - 4 \lvert g_{01}(t) \rvert^4}} 2 } = \lvert g_{01} (t) \rvert.\end{gather*}
Thus,
\begin{equation*} \Vert f(P,Q) \Vert_{\op} = \max_{ \sigma(I-H)}\left( \lvert f_1 \rvert r\right) \le \max_{[0,1]} \left( \lvert f_1 \rvert r\right) = M_f.\end{equation*}.
\end{example}
\subsection{The canonical form and angles between subspaces}\label{angles_subsection}
Assume that $\dim \mch = n$. Recall the notation
\begin{align*} V_{00} &= V_P \cap V_Q,\ V_{01} = V_P \cap V_Q^\perp,\\ V_{10} &= V_P^\perp \cap V_Q,\ V_{11} = V_P^\perp \cap V_Q^\perp\\ V_0 &= \left(V_{00} \oplus V_{01} \right)^\perp\cap V_P,\ V_1 = \left(V_{10} \oplus V_{11} \right)^\perp \cap V_P^\perp,\end{align*}
and let $m_{lk} = \dim V_{lk},\ l,k\in \{0,1\}$ and $m = \dim V_0 = \dim V_1$.
\begin{definition}\label{reduced_principal_angles} Denote the eigenvalues of $H$ by $0 < \mu_1 \le ... \le \mu_m < 1$. The \textit{reduced} principal angles $0 <\theta_1 \le ... \le \theta_m < \frac \pi 2$ associated with the pair $(P,Q)$ are defined by
\begin{equation*} \sin^2 \theta_l = \mu_l,\ l = 1,...,m.\end{equation*} \end{definition}
The pair $(P,Q)$ is determined, up to unitary equivalence, by the numbers $m_{00}, m_{01}, m_{10}, m_{11}, m$ together with the reduced principal angles.
\begin{definition} Let $m_P = \dim V_P \le \dim V_Q = m_Q$ (so $m_P = m_{00} + m_{01} + m$). The principal angles $0 \le \phi_1 \le \phi_2 \le... \le \phi_{m_P} \le \frac \pi 2$ of the pair $(P, Q)$ are defined recursively. The angle $\phi_1$ is specified by
\begin{equation*} \cos \phi_1 = \max\{|\langle x, y \rangle | \ | \ x \in V_P, y \in V_Q, \Vert x \Vert = \Vert y \Vert = 1\},\end{equation*}
and if $\cos \phi_1 = |\langle x_1, y_1 \rangle |$, then
\begin{equation*} \cos \phi_2 = \max\{|\langle x, y \rangle | \ | \ x \in V_P \cap \{x_1\}^\perp,\ y \in V_Q \cap \{y_1\}^\perp,\ \Vert x \Vert = \Vert y \Vert = 1\}.\end{equation*}
Next, if we denote $V_{P,k} = \Sp\{x_1, ..., x_k\}$ and $V_{Q,k} = \Sp\{y_1, ..., y_k\}$, where $\cos \varphi_l = |\langle x_l, y_l \rangle |$ for $l = 1,...,k$, then
\begin{equation*} \cos \phi_{k+1} = \max \{|\langle x, y \rangle | \ | \ x \in V_P \cap V_{P,k}^\perp,\ y \in V_Q \cap V_{Q,k}^\perp,\ \Vert x \Vert = \Vert y \Vert = 1\}.\end{equation*} \end{definition}
The reduced principal angles are the principal angles lying in $\left(0, \frac \pi 2 \right)$, i.e.,
\begin{gather*} \phi_1 = ... = \phi_{m_{00}} = 0,\\ \phi_{m_{00}+1} = \theta_1, ..., \phi_{m_{00} + m} = \theta_m,\\ \phi_{m_{00} + m + 1} = ... = \phi_{m_{00} + m + m_{01}} = \frac \pi 2.\end{gather*}
The following elementary examples will be of immediate use.
\begin{example} Let $n \ge 2$ and $\phi \in \left[0, \frac \pi 2 \right]$. For $0 < m_P \le m_Q \le n - m_P$ integers, there exists a pair of orthogonal projections $P : \CC^n \to V_P$, $Q : \CC^n \to V_Q$ with $m_P = \dim V_P,\ m_Q = \dim V_Q$ and smallest principal angle $\phi_1 = \phi$.

Indeed, if $\{e_l\ | \ l = 1,...,n\}$ is an orthonormal basis of $\CC^n$, and
\begin{equation*} V_P = \Sp\left\{e_1, ..., e_{m_P}\right\},\end{equation*}
then
\begin{equation*} V_Q = \Sp\left\{\cos \phi e_1 + \sin \phi e_n,\ e_{m_P + 1}, ..., e_{m_P + m_Q - 1} \right\}\end{equation*}
satisfies the required. In particular, if $\phi \in \left(0, \frac \pi 2 \right)$, then $\sin^2 \phi \in \sigma(H)$.\end{example}
\begin{example} Let $n \ge 2$, $\phi \in \left[0, \frac \pi 2 \right]$ and assume that $0 < m_P \le m_Q < n$. Then there exists a pair of orthogonal projections $P : \CC^n \to V_P, Q : \CC^n \to V_Q$ with $m_P = \dim V_P$ and $m_Q = \dim V_Q$ such that $\phi \in \{\phi_1, ..., \phi_{m_P}\}$.

Indeed, if $m_P + m_Q \le n$, we saw that it is possible to define $P,Q$ such that $\phi = \phi_1$. If $m_P + m_Q > n$, then $m_{00} = \dim V_P \cap V_Q > 0$, so $\phi_1 = ... = \phi_{m_{00}} = 0$. Thus if $\phi = 0$, we are done. Otherwise, we may set $m_{00} = m_P + m_Q - n$ (note that $m_{00} < m_P$), $V_P = \Sp\{e_1, ..., e_{m_P}\}$ and
\begin{equation*} V_Q = \Sp\{e_1,...,e_{m_{00}}, \cos \phi e_{m_P} + \sin \phi e_n, e_{m_P+1}, ..., e_{n-1}\}.\end{equation*}
Then $P,Q$ are as required.\end{example}
The previous examples essentially amount to the proof of the following, where $M_f$ is the universal bound of Conclusion \ref{universal_upper_bound}.
\begin{claim}\label{max_norm} Let $\mch$ be a separable complex Hilbert space with $2 \le \dim \mch \le \infty$. Let $0, \Id \ne P : \mch \to V_P$ denote an orthogonal projection. Then for every $1 \le m_Q \le \dim \mch-1$ there exists an orthogonal projection $Q : \mch \to V_Q$ with $\rk Q = m_Q$ such that $\Vert f(P,Q) \Vert_{\op} = M_f$.\end{claim}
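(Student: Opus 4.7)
The plan is to combine Conclusion \ref{reformulated_norm_formula} with Conclusion \ref{universal_upper_bound} and the two examples immediately preceding the claim. By those conclusions, $\|f(P,Q)\|_\op = \max_{\sigma_{P,Q}} \Psi_f \le M_f$ for every orthogonal projection $Q$, so it suffices to construct a $Q$ of rank $m_Q$ for which $\Psi_f$ attains the value $M_f$ at some point of $\sigma_{P,Q} = \sigma(I-H) \cup \Lambda$. Since $\psi_f$ is continuous on the compact interval $[0,1]$, I fix a point $s^* \in [0,1]$ with $\psi_f(s^*) = M_f$ and split into cases.

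If $s^* \in (0,1)$, I set $\phi^* = \arccos\sqrt{s^*} \in (0, \pi/2)$, so that $\cos^2 \phi^* = s^*$. The two examples above construct, for prescribed ambient dimension and rank data, a pair of projections exhibiting any chosen principal angle in $[0, \pi/2]$. Transferring these constructions to our situation (by fixing an orthonormal basis adapted to the given $V_P$, passing to a suitable finite-dimensional subspace if $\dim \mch = \infty$, and swapping the roles of the smaller and larger subspace when $m_P > m_Q$) yields a $V_Q$ of dimension $m_Q$ with $\phi^*$ a reduced principal angle. Then $\sin^2 \phi^* \in \sigma(H)$, equivalently $s^* = \cos^2 \phi^* \in \sigma(I-H) \subseteq \sigma_{P,Q}$, so $M_f = \psi_f(s^*)$ is attained in the maximum.

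If instead $s^* = 0$, then $M_f = \max\{|\alpha_{01}|, |\alpha_{10}|\}$, and I treat the sub-cases separately. Assume first $|\alpha_{01}| = M_f$; fix a unit vector $v \in V_P$ and let $V_Q$ be any $m_Q$-dimensional subspace of $v^\perp$, which exists since $m_Q \le \dim \mch - 1$. Then $v \in V_P \cap V_Q^\perp = V_{01}$, so $(0,1) \in \Lambda$ and $\Psi_f((0,1)) = |\alpha_{01}| = M_f$. If instead $|\alpha_{10}| = M_f$, I pick a unit vector $w \in V_P^\perp$ (which is nonzero because $P \ne \Id$) and take any $m_Q$-dimensional $V_Q$ containing $w$, giving $w \in V_P^\perp \cap V_Q = V_{10}$. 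The case $s^* = 1$, where $M_f = \max\{|\alpha_{00}|, |\alpha_{11}|\}$, is handled analogously by arranging $V_{00} \ne \{0\}$ (force $V_Q$ to contain a vector of $V_P$) or $V_{11} \ne \{0\}$ (force $V_Q$ to lie in the orthogonal complement of a vector of $V_P^\perp$).

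The main obstacle is bookkeeping rather than a new idea: I must verify that the example constructions work for every admissible pair $(m_P, m_Q)$, in particular in the unbalanced regimes $m_P > m_Q$ and $m_P + m_Q > \dim \mch$, and that the reduction from an infinite-dimensional $\mch$ to a finite-dimensional ambient space preserves the reduced principal angle I want to realize. Once the cases are organized, each construction is explicit and the lower bound $\|f(P,Q)\|_\op \ge M_f$ is immediate from the norm formula, while the matching upper bound is exactly Conclusion \ref{universal_upper_bound}.
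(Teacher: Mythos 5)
Your proposal is correct and follows essentially the same approach as the paper's proof: pick a maximizer $s^*\in[0,1]$ of $\psi_f$, realize it in $\sigma(I-H)$ via the preceding examples when $s^*\in(0,1)$, and force the appropriate $V_{lk}\ne\{0\}$ when $s^*\in\{0,1\}$. The paper sidesteps the rank/dimension bookkeeping you flag by first reducing without loss of generality to $\dim\mch=2$ (solving on a two-dimensional $P$-invariant $\mch_0$ with a rank-one $Q_0$, e.g.\ $Q_0=P|_{\mch_0}$ or $I-P|_{\mch_0}$ in the boundary cases, then extending $Q_0$ on $\mch_0^\perp$ to reach rank $m_Q$).
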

\begin{proof}
We need to find an orthogonal projection $Q : \mch \to V_Q$ such that
\begin{equation*} \Vert f(P,Q) \Vert_{\op} = M_f.\end{equation*}

We assume without loss of generality that $\dim \mch = 2$, so that $\dim V_P = 1$. Indeed, if we choose arbitrary non-zero $v_1 \in V_P$ and $v_2 \in \ker (P)$, and denote $\mch_0 = \Sp\{v_1, v_2\}$, and find an orthogonal projection $Q : \mch_0 \to \mch_0$ such that $\Vert f(P|_{\mch_0}, Q) \Vert_{\op} = \max_{[0,1]} \psi_f$, then clearly we can extend $Q$ to $\mch_0^\perp$ so that $\rk Q = m_Q$, which yields the required. Hence we assume $\dim \mch = 2$.

Let $t_{\max} \in [0,1]$ be such that $\max_{[0,1]} \psi_f(t) = \psi_f(t_{\max})$. If $0 < t_{\max} < 1$, then there exists $\phi \in \left(0, \frac \pi 2 \right)$ such that $\sin^2 \phi = 1-t_{\max}$. Clearly, the previous examples imply that there exists $Q : \mch \to V_Q$ such that $\phi$ is among the (reduced) principal angles associated with $P,Q$, which means that $t_{\max} \in \sigma(I-H)$. Hence, $\Vert f(P,Q) \Vert_{\op} = \max_{[0,1]} \psi_f$.

Otherwise, $t_{\max}\in \{0,1\}$, which means that
\begin{equation*} \max_{[0,1]} \psi_f = \max_{[0,1]\cup \{0,1\}^2} \Psi_f = \max\{|\alpha_{00}|, |\alpha_{01}|, |\alpha_{10}|, |\alpha_{11}|\}.\end{equation*}
If the latter equals $|\alpha_{00}|$ or $|\alpha_{11}|$, we may set $Q = P$, so that $V_P \cap V_Q = V_P$ and $V_P^\perp \cap V_Q^\perp = V_P^\perp$. Then $\Vert f(P,Q) \Vert_{\op} = \max\{|\alpha_{00}|, |\alpha_{11}|\} = \max_{[0,1]} \psi_f$. Otherwise, we can set $Q = I - P$, so that $V_P \cap V_Q^\perp = V_P,\ V_P^\perp \cap V_Q = V_P^\perp$ to obtain the required.
\end{proof}
\section{Proof of Theorem \ref{grassmann_theorem}}\label{grassmann_section}
The present section is dedicated to the proof of Theorem \ref{maintheorem}, which includes Theorem \ref{grassmann_theorem} as a special case. The proof consists of little more than a straightforward combination of results from \cite{bottspitk} and \cite{collins}. Indeed, if $f \in \ker(T)$, then by Conclusion \ref{reformulated_norm_formula}, $\left \Vert f(P,Q)\right \Vert_{\op}= \max_{\sigma(PQP)} \psi_f$ (the general case $f \in \mca$ is only slightly more complicated). The operator $PQP \in \End(V_P)$ is known to provide a model of the so called Jacobi ensemble of random matrices. In particular, the behavior of $\sigma(PQP)$ is well understood as $n \to \infty$.

Let $\Omega_{n,\alpha,\beta} = \rmg_{a_n}(n) \times \rmg_{b_n}(n)$, equipped with the probability measure $\nu_{n,\alpha,\beta} = \mu_{a_n, n} \times \mu_{b_n,n}$. Here, $a_n = \lfloor \alpha n \rfloor$ and $b_n = \lfloor \beta n \rfloor$, where $0 < \alpha \le \beta$ satisfy $\alpha + \beta \le 1$. For $(P, Q) \in \Omega_{n,\alpha,\beta}$, we will use the notations of (\ref{canonic_decomp}). Finally, let
\begin{equation*} I^{f}_{n,\alpha,\beta} = \int_{\Omega_{n,\alpha,\beta}} \Vert f(P, Q) \Vert_{\op} d\nu_{n,\alpha,\beta}.\end{equation*}

Our goal is to compute $I^f_{\alpha, \beta} = \lim_{n\to \infty} I^{f}_{n, \alpha, \beta}$ in terms of $\alpha, \beta$ and $f$, where according to Fubini's theorem,
\begin{equation}\label{fubini} I^{f}_{n,\alpha, \beta} =\int_{\rmg_{a_n}(n)}  \left( \int_{\rmg_{b_n}(n)} \Vert f(P, Q) \Vert_{\op} d\mu_{b_n,n}(Q) \right) d\mu_{a_n, n}(P). \end{equation}
In this context, we note the following observation.
\begin{remark} $\Vert f(P, Q) \Vert_{\op} \le M_f$ for all $(P, Q) \in \Omega_{n,\alpha,\beta}$ by Conclusion \ref{universal_upper_bound}, hence $I^{f}_{n, \alpha, \beta} \le M_f$.
Also note that $\max_{Q \in \rmg_{b_n}(n)} \Vert f(P, Q) \Vert_{\op} = M_f$ for every $P \in \rmg_{a_n}(n)$, by Claim \ref{max_norm}.
\end{remark}

The invariance of $\mu_{b_n, n}$ implies that the inner integral in (\ref{fubini}) is independent of the choice of $P$. To see this, let $\mce_n = \{e_1, ..., e_n\}$ denote the standard basis of $\CC^n$, and let $P_{0}$ denote the orthogonal projection on $ \Sp\{e_1, ..., e_{a_n}\}$. Then for every $P \in \rmg_{a_n}(n)$ there exists a (non-unique) unitary operator $U_0$ such that
\begin{equation*} P = U_0^* P_{0} U_0.\end{equation*}
\begin{lemma} Let $Q \in \rmg_{b_n}(n)$. Then $\left \Vert f(P, Q) \right \Vert_{\op} = \left\Vert f(P_{0}, U_0 Q U_0^*)\right \Vert_{\op}$. \end{lemma}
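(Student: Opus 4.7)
The plan is to exploit the fact that conjugation by a unitary is simultaneously an algebra automorphism and an isometry for the operator norm. Concretely, since $P = U_0^\ast P_0 U_0$ with $U_0$ unitary, we have $U_0 P U_0^\ast = P_0$. Because $f$ is a polynomial in two non-commuting variables, and conjugation $A \mapsto U_0 A U_0^\ast$ is an algebra homomorphism on $\herm$, it commutes with the polynomial calculus:
\begin{equation*}
U_0 f(P, Q) U_0^\ast = f\bigl(U_0 P U_0^\ast,\ U_0 Q U_0^\ast\bigr) = f\bigl(P_0,\ U_0 Q U_0^\ast\bigr).
\end{equation*}

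The result then follows from the unitary invariance of the operator norm, namely $\|U_0 A U_0^\ast\|_{\op} = \|A\|_{\op}$ for any bounded operator $A$. No step presents any real obstacle; the statement is essentially a bookkeeping observation used only to transfer the inner integral in (\ref{fubini}) onto a fixed reference projection $P_0$, which in turn lets one invoke the unitary invariance of $\mu_{b_n, n}$ in the subsequent argument.
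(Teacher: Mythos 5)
Your proposal is correct and follows essentially the same route as the paper: both argue that conjugation by $U_0$ intertwines $f(P,Q)$ with $f(P_0, U_0 Q U_0^\ast)$, and then invoke the unitary invariance of the operator norm. The paper merely spells this out by introducing the evaluation homomorphism $\ev(P,Q)$ and verifying the intertwining relation on the generators $1,x,y$ before extending to all of $\mca$, which is exactly the ``conjugation is an algebra automorphism'' fact you invoke directly.
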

\begin{proof}
It will be convenient to introduce the evaluation homomorphisms
\begin{equation*} \ev(P, Q) : \mca \to \End(\CC^n),\end{equation*} specified by $\ev(P, Q)(f) = f(P, Q)$. Then
\begin{equation*} \ev(P, Q)(\cdot) = U^*_0 \ev(P_{0}, U_0 Q U_0^*)(\cdot) U_0.\end{equation*} Indeed,
\begin{gather*} \ev(P, Q)(1) = I = U_0^* \ev(P_{0}, U_0 Q U_0^*)(1) U_0,\\ \ev(P, Q)(x) = P = U_0^* P_{0} U_0 = U_0^* \ev(P_{0}, U_0 Q U_0^*)(x) U_0\end{gather*}
and
\begin{equation*} \ev(P, Q)(y) = Q = U_0^* \left(U_0 Q U_0^* \right) U_0 = U_0^* \ev\left(P_{0}, U_0 Q U_0^* \right)(y) U_0.\end{equation*}
This holds, similarly, for all the monomials in $\mca$, and by linearity, extends to all of $\mca$. Since $U_0$ is unitary, we obtain the required.
\end{proof}
The invariance of $\mu_{b_n, n}$ implies that
\begin{equation*} \int_{\rmg_{b_n}(n)} \Vert f(P_{0}, U Q U^*) \Vert_{\op} d\mu_{b_n,n}(Q) = \int_{\rmg_{b_n}(n)} \Vert f(P_{0}, Q) \Vert_{\op} d\mu_{b_n,n}(Q)\end{equation*}
for every unitary operator $U$ on $\CC^n$, hence we can conclude that
\begin{equation}\label{simplifiedintegral} I^{f}_{n, \alpha, \beta} = \int_{\rmg_{b_n}(n)} \Vert f(P_{0}, Q) \Vert_{\op} d\mu_{b_n,n}(Q).\end{equation}

Let $H_{0}$ be the operator associated with the pair $(P_{0}, Q)$ as in Theorem \ref{canonic_repr}.
According to Theorem \ref{normformula}, $\Vert f(P_{0}, Q) \Vert_{\op}$ can be expressed conveniently using $\sigma(I - H_{0}) \subset \sigma(P_{0} Q P_{0})$ (assuming that $P_0$, $Q$ do not commute). Thus, we are led to consider the joint distribution of the eigenvalues of $P_{0} Q P_{0}$.
\begin{theorem}[\cite{dumitriupaquette}] The joint eigenvalue distribution of $P_{0} Q P_{0} \in \End(V_{P_{0}})$ in $[0,1]^{a_n}$ is given by
\begin{equation*} d\mcj_n(\lambda_1, ..., \lambda_{a_n}) = \frac 1 {C_{n,\alpha,\beta}} \prod_{l=1}^{a_n} \lambda_l^{b_n - a_n} (1-\lambda_l)^{n -(a_n + b_n)} \prod_{1 \le l < k \le a_n} (\lambda_l - \lambda_k)^2 d\lambda, \end{equation*}
where $C_{n,\alpha,\beta}$ is a normalization constant (and $d\lambda = d\lambda_1 \dots d\lambda_{a_n}$). It is the joint eigenvalue distribution of the (unitary) Jacobi ensemble with parameters $N = a_n, N_1 = b_n, N_2 = n - b_n$ (we adopt the convention of \cite{dumitriupaquette}).\end{theorem}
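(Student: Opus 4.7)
My plan is to realize the Haar-random projection $Q$ through a Gaussian matrix model and reduce the statement to a classical density computation for the complex Jacobi (MANOVA) ensemble. Let $X$ be an $n \times b_n$ matrix of i.i.d. standard complex Gaussian entries; by unitary invariance of the Gaussian law, $Q = X(X^*X)^{-1}X^*$ is distributed according to $\mu_{b_n,n}$. Let $X_1$ denote the top $a_n$ rows of $X$ and $X_2$ the remaining $n - a_n$ rows, corresponding to the splitting $\CC^n = V_{P_0} \oplus V_{P_0}^\perp$, so that $X_1$ and $X_2$ are independent standard complex Gaussian matrices of sizes $a_n \times b_n$ and $(n-a_n) \times b_n$. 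A direct computation gives
\begin{equation*}
\left. P_0 Q P_0 \right|_{V_{P_0}} = X_1 (X_1^* X_1 + X_2^* X_2)^{-1} X_1^*,
\end{equation*}
whose nonzero spectrum coincides, via the usual $AB \sim BA$ trick on nonzero eigenvalues, with that of the MANOVA matrix $(S_1 + S_2)^{-1/2} S_1 (S_1+S_2)^{-1/2}$, where $S_j = X_j^* X_j$.

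Here $S_1$ and $S_2$ are independent complex Wishart matrices on $\CC^{b_n}$ with $a_n$ and $n - a_n$ degrees of freedom, respectively. In the assumed regime $a_n \le b_n$ and $a_n + b_n \le n$, $S_1$ has rank $a_n$ a.s., $S_2$ is invertible a.s., and $V_{P_0} \cap V_Q = \{0\}$ a.s., so all $a_n$ eigenvalues of $P_0 Q P_0|_{V_{P_0}}$ lie in $(0,1)$ almost surely. The joint density of these eigenvalues is then the classical complex Jacobi density: the change of variables $(S_1, S_2) \mapsto \bigl(S_1^{1/2}(S_1 + S_2)^{-1} S_1^{1/2},\, S_1 + S_2\bigr)$, followed by integrating out the second factor (a Wishart density) and diagonalizing the first via complex Weyl integration on $U(b_n)$, produces a density with Vandermonde-squared factor and exponents $b_n - a_n$ and $n - a_n - b_n$ on $\lambda$ and $1-\lambda$, matching the stated formula with parameters $(N, N_1, N_2) = (a_n, b_n, n - b_n)$.

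The one genuinely nontrivial step is this last density calculation: it requires tracking Jacobians of polar-type decompositions of $b_n \times b_n$ positive Hermitian matrices and performing a $U(b_n)$ integration, equivalently a Harish-Chandra/Itzykson--Zuber type computation. Because the identification of the eigenvalue density of $P_0 Q P_0$ with the complex Jacobi ensemble is precisely the content of the cited Dumitriu--Paquette result, I would simply invoke their theorem as a black box, and use the Gaussian reduction above only to check that the parameters are assigned as $(N, N_1, N_2) = (a_n, b_n, n - b_n)$ and that the configurations where some eigenvalue hits $\{0, 1\}$ form a measure-zero set in the assumed parameter range.
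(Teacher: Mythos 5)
The paper does not prove this statement; it is cited directly from Dumitriu--Paquette, with no derivation given. Your Gaussian realization $Q = X(X^*X)^{-1}X^*$, the block decomposition $P_0QP_0|_{V_{P_0}} = X_1(X_1^*X_1 + X_2^*X_2)^{-1}X_1^*$, and the genericity argument showing all eigenvalues lie in $(0,1)$ are all correct, and since you are explicit that the final density computation is invoked as a black box, the proposal is in the end consistent with the paper's treatment.

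One step is glossed over, though: you reduce to the $b_n\times b_n$ MANOVA matrix $(S_1+S_2)^{-1/2}S_1(S_1+S_2)^{-1/2}$ with $S_1 \sim W^{\CC}_{b_n}(a_n)$, $S_2 \sim W^{\CC}_{b_n}(n-a_n)$, and then propose the change of variables $(S_1,S_2)\mapsto (S_1^{1/2}(S_1+S_2)^{-1}S_1^{1/2},\,S_1+S_2)$ followed by Weyl integration on $U(b_n)$. But in the regime of interest $a_n \le b_n$, $S_1$ is \emph{singular} of rank $a_n$, so $S_1$ has no density on $b_n\times b_n$ positive matrices, the map is not a diffeomorphism, and the Weyl integration on $U(b_n)$ would produce a density in $b_n$ (not $a_n$) eigenvalues with a degenerate Vandermonde. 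The naive MANOVA exponent $\lambda^{m_1-p}$ with $m_1 = a_n$, $p = b_n$ would even come out negative. The clean route to the stated parameters $(N,N_1,N_2)=(a_n,b_n,n-b_n)$ goes through the \emph{dual} picture: writing $P_0 Q P_0|_{V_{P_0}} = Y_1 Y_1^*$ for the $a_n\times b_n$ corner $Y_1$ of a Haar $n\times b_n$ isometry (the truncated-unitary model), or equivalently realizing it as $AA^*(AA^*+BB^*)^{-1}$ with $A\sim \NN^{\CC}_{a_n\times b_n}$, $B\sim \NN^{\CC}_{a_n\times(n-b_n)}$, both nonsingular $a_n\times a_n$ Wisharts --- this directly yields exponents $b_n - a_n$ and $n - a_n - b_n$. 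Passing between your singular-$S_1$ model and this dual one requires the known projection duality (the nonzero spectra of $PQP$ and $QPQ$ coincide, and unitary invariance lets you swap which of the two projections is fixed). Since you ultimately punt to Dumitriu--Paquette anyway, this is a wrinkle in the exposition rather than a fatal error, but the change-of-variables sketch as written would not compile into a proof.
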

As an element of $\End(V_{P_{0}})$ (using the notation of (\ref{canonic_decomp})),
\begin{equation*} P_{0} Q P_{0} = I_{V_{00}} \oplus 0_{V_{01}} \oplus \left(I - H_{0} \right).\end{equation*}
Denote
\begin{equation*} G_{n,0} = \{Q \in \rmg_{b_n}(n) \ | \ \dim V_{00} = \dim V_{01} = 0\}.\end{equation*}
Then for $n$ sufficiently large\footnote{Any $n$ works if $\alpha = \beta$, otherwise any $n$ for which $(\beta - \alpha)n \ge 1$ suffices.},
\begin{equation}\label{generic_Q}\mu_{b_n,n}(G_{n,0}) = 1,\end{equation} since by assumption $a_n + b_n \le n$ and $b_n - a_n \ge 0$.
The limiting behavior of $d \mcj_n$ that is relevant in our context is as follows.
\begin{theorem}[\cite{dumitriupaquette}]\label{limit_joint} If $d\mcj_n$ is the joint distribution of $(\lambda_1, ..., \lambda_{a_n}) \in [0,1]^{a_n}$, then for $F \in C[0,1]$,
\begin{equation*} \frac 1 {a_n} \sum_{l=l}^{a_n} F(\lambda_l) \xrightarrow{\mathbb P} \int_0^1 F(t) d\mcj_\infty(t)\end{equation*}
as $n \to \infty$, where
\begin{equation*} d\mcj_\infty(t) = \frac 1 {2\pi \alpha} \frac 1 {t(1-t)} \sqrt{-(t-\lambda_-)(t-\lambda_+)} \mathbbm1_{[\lambda_-, \lambda_+]} dt,\end{equation*}
and $\lambda_\pm = \left( \sqrt{\beta(1-\alpha)} \pm \sqrt{\alpha(1-\beta)} \right)^2$. Note that $\lambda_- < \lambda_+$, and $ \lambda_- = 0$ if and only if $\alpha = \beta$, and $\lambda_+ = 1$ if and only if $\alpha + \beta = 1$.\end{theorem}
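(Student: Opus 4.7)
The plan is to show convergence in probability of the empirical spectral measure
$\mu_n = \tfrac{1}{a_n} \sum_{l=1}^{a_n} \delta_{\lambda_l}$ to $\mcj_\infty$, which I
would split into two classical ingredients: first, convergence in mean,
$\EE \int F\, d\mu_n \to \int F\, d\mcj_\infty$; second, a variance bound
$\var\left(\int F\, d\mu_n\right) = o(1)$, so that Chebyshev's inequality upgrades
$L^2$ convergence to convergence in probability. Since $\supp \mu_n \subset [0,1]$
uniformly, Weierstrass approximation lets me restrict to the polynomial case $F(t) = t^k$,
reducing everything to estimates on $\EE\!\left[\tfrac{1}{a_n}\tr\bigl((P_0 Q P_0)^k\bigr)\right]$
and its variance.

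For the expectation, I would use the invariance of $\mu_{b_n, n}$ to write $Q = U Q_0 U^*$,
where $Q_0$ is a fixed projection of rank $b_n$ and $U$ is Haar-distributed on $U(n)$.
Voiculescu's theorem on the asymptotic freeness of fixed matrices conjugated by independent
Haar unitaries then implies that $(P_0, U Q_0 U^*)$ is asymptotically free, with marginal
laws converging to the Bernoulli distributions $\alpha \delta_1 + (1-\alpha) \delta_0$
and $\beta \delta_1 + (1-\beta) \delta_0$. Consequently the normalized moments of
$P_0 Q P_0$ converge to $\tau\!\left((pqp)^k\right)/\alpha$, where $p, q$ are freely
independent projections of traces $\alpha, \beta$ in some tracial $C^*$-probability space.
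A more concrete alternative is the Weingarten calculus: expanding
$\tr\bigl((P_0 U Q_0 U^* P_0)^k\bigr)$ and averaging over $U$ yields a sum over pairs of
permutations in $S_k$ whose dominant $n \to \infty$ term reproduces exactly the same free
moment.

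For the variance, I would invoke concentration on $U(n)$. The function
$U \mapsto \tr\bigl((P_0 U Q_0 U^* P_0)^k\bigr)$ is Lipschitz in the Hilbert-Schmidt metric
with constant bounded uniformly in $n$ (using $\Vert P_0 U Q_0 U^* P_0 \Vert_{\op} \le 1$),
so the Gromov-Milman concentration / Bakry-\'Emery log-Sobolev inequality on the unitary
group yields $\var\bigl(\tfrac{1}{a_n}\tr((P_0 Q P_0)^k)\bigr) = O(n^{-2})$. Alternatively,
the subleading terms in the Weingarten expansion give the same bound directly from the
combinatorics of permutations.

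The main obstacle is identifying the abstract free-probabilistic limit with the explicit
Wachter density $d\mcj_\infty$, including the endpoints
$\lambda_\pm = \bigl(\sqrt{\beta(1-\alpha)} \pm \sqrt{\alpha(1-\beta)}\bigr)^2$. The cleanest
route is through the $S$-transform: one computes $S_p$ and $S_q$ for the two Bernoulli
projections, multiplies to obtain $S_{pq} = S_p S_q$, inverts to recover the Cauchy transform
of $pqp$ on the compressed space $\Ran(p)$, and reads off the density by the Stieltjes-Perron
inversion formula. The degenerate cases $\alpha = \beta$ (in which $\lambda_- = 0$) and
$\alpha + \beta = 1$ (in which $\lambda_+ = 1$) should drop out of the same computation,
though one must verify that potential atoms at $0$ and $1$ coming from the commuting
subspaces $V_{00}, V_{01}$ contribute mass $o(a_n)$ and hence do not appear in the weak
limit, which is exactly (\ref{generic_Q}).
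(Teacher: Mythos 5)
The statement you are proving is one that the paper does not prove at all: it is imported verbatim (with a reformulation) from Dumitriu--Paquette, whose own derivation goes through CLT-level loop-equation / potential-theoretic asymptotics for the $\beta$-Jacobi ensemble. Your plan is a genuinely different, self-contained route: asymptotic freeness of $P_0$ and $U Q_0 U^*$, reduction to polynomial test functions, concentration on $U(n)$ for the variance, and an $S$-transform computation to match the explicit Wachter density. This is exactly the point of view in Collins' paper \cite{collins}, which the author also cites a few lines later for the companion large-deviations estimate, so your route is perfectly natural in the context of this section. What the paper's citation buys is brevity (and a much stronger fluctuation result as a bonus); what your route buys is that one sees explicitly where $\lambda_\pm$ and the Wachter density come from, via $S_{\mu_\alpha}(z) = (1+z)/(\alpha+z)$, $S_{\mu_\alpha \boxtimes \mu_\beta} = S_{\mu_\alpha} S_{\mu_\beta}$, Stieltjes inversion, and compression by $p$.

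One imprecision worth flagging, though it does not break the argument: the map $U \mapsto \tr\bigl((P_0 U Q_0 U^* P_0)^k\bigr)$ is \emph{not} Lipschitz in Hilbert--Schmidt metric with a constant bounded uniformly in $n$; already for $k=1$ its gradient has HS norm of order $\sqrt{n}$. What is true is that after the $\frac{1}{a_n}$ normalization the Lipschitz constant is $O(n^{-1/2})$, and then the log-Sobolev constant $O(1/n)$ on $U(n)$ gives $\var = O(n^{-2})$ as you state. Since you only need $o(1)$ for Chebyshev, this slip is harmless, but as written the two claims (uniform Lipschitz constant and $O(n^{-2})$ variance) are inconsistent and should be reconciled. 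Also, your last sentence slightly misreads (\ref{generic_Q}): in the regime $\alpha \le \beta$, $\alpha + \beta \le 1$ used here, $V_{00}$ and $V_{01}$ are generically \emph{exactly} zero (not merely $o(a_n)$-dimensional), so for the law of large numbers one need not worry about atoms; what the free computation needs to confirm is that the limiting law $\mu_\alpha \boxtimes \mu_\beta$ itself, compressed by $p$, has no atoms at $0$ or $1$ in this parameter range, which indeed drops out of the $S$-transform inversion since $\tau(p \wedge q) = \max(0,\alpha+\beta-1) = 0$ and $\tau(p \wedge (1-q)) = \max(0,\alpha - \beta) = 0$.
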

The latter does not rule out the possibility, when $0 < \lambda_-$ or $\lambda_+ < 1$, that a subset of order $o(a_n)$ of eigenvalues of $P_{0} Q P_{0}$ remains outside of $[\lambda_-, \lambda_+]$, and a-priori it could be that
\begin{equation*} \mu_{b_n,n} \left(\left\{Q\ | \ \sigma(P_{0} Q P_{0}) \cap \left([0,1] \setminus (\lambda_--\delta, \lambda_+ + \delta)\right) \ne \emptyset\right\} \right) > \varepsilon\end{equation*} for some $\varepsilon, \delta >0$. Hence, we will use the following.
\begin{theorem}[\cite{collins}]\label{spec_concentrate} For any compact set $K$ such that $K \cap [\lambda_-, \lambda_+] = \emptyset$, there exists $C > 0$ such that $\mu_{b_n,n} \left(\left\{Q \ | \ \sigma(P_{0} Q P_{0}) \cap K \ne \emptyset \right\}\right) < e^{-C n}$. \end{theorem}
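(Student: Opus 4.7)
The plan is to bound the probability via a first-moment estimate on the one-point correlation function of $d\mcj_n$ and then to prove exponential decay of that density on $K$ using the log-gas picture that already underlies Theorem \ref{limit_joint}.

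First, by Markov's inequality applied to the (integer-valued) number of eigenvalues of $P_{0} Q P_{0}$ lying in $K$,
\begin{equation*} \mu_{b_n,n}\bigl(\{Q : \sigma(P_{0} Q P_{0})\cap K \ne \emptyset\}\bigr) \le \EE\bigl[\#\{l : \lambda_l \in K\}\bigr] = \int_K \rho_n^{(1)}(t)\, dt,\end{equation*}
where $\rho_n^{(1)}$ is the one-point correlation function of $d\mcj_n$. Since $K$ is compact, it suffices to show that $\rho_n^{(1)}(t) \le e^{-Cn}$ uniformly on $K$ for some $C > 0$.

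To obtain this pointwise estimate I would exploit the log-gas representation of $d\mcj_n$: writing $\hat\mu = a_n^{-1} \sum_l \delta_{\lambda_l}$ for the empirical measure, the joint density takes the form $\exp(-n^2 \mce_n(\hat\mu))\, d\lambda / C_{n,\alpha,\beta}$, with $\mce_n$ converging to
\begin{equation*} \mce(\mu) = -\alpha^2 \iint \log|s-t|\, d\mu(s)\, d\mu(t) - \alpha(\beta-\alpha)\!\int \log t\, d\mu - \alpha(1-\alpha-\beta)\!\int \log(1-t)\, d\mu,\end{equation*}
whose unique minimizer over Borel probability measures on $[0,1]$ is $d\mcj_\infty$ (this is exactly the Frostman characterization that produces the limit measure in Theorem \ref{limit_joint}). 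To extract a single-particle estimate, I would fix $\lambda_1 = x$ and integrate out the remaining $a_n - 1$ variables by a Laplace/saddle-point argument. Applying the large-deviation principle to this conditional ensemble, the integral factor equals the unconditioned partition function modulated by $\exp(-n\, V_{\mathrm{eff}}(x) + o(n))$, where
\begin{equation*} V_{\mathrm{eff}}(x) = -2\alpha\!\int \log|x - t|\, d\mcj_\infty(t) + (\beta-\alpha)\log\tfrac{1}{x} + (1-\alpha-\beta)\log\tfrac{1}{1-x},\end{equation*}
so $\rho_n^{(1)}(x) \le \exp(-n(V_{\mathrm{eff}}(x) - V^*) + o(n))$ with $V^*$ the constant value of $V_{\mathrm{eff}}$ on $\supp(d\mcj_\infty) = [\lambda_-, \lambda_+]$. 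The Euler--Lagrange conditions for $d\mcj_\infty$ give $V_{\mathrm{eff}} \equiv V^*$ on the support and $V_{\mathrm{eff}} > V^*$ strictly off the support, so compactness of $K$ yields $\inf_K (V_{\mathrm{eff}} - V^*) \ge 2C > 0$ and the claim follows.

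The main obstacle is making the Laplace calculation rigorous with control of the normalization $C_{n,\alpha,\beta}$ to within $e^{o(n)}$ error, especially at the endpoints $\{0,1\}$ when $\alpha = \beta$ (so $\lambda_- = 0$) or $\alpha + \beta = 1$ (so $\lambda_+ = 1$), since there the weight $t^{b_n - a_n}(1 - t)^{n - a_n - b_n}$ is singular and the one-particle potential ceases to be continuous. A cleaner alternative is to invoke Riemann--Hilbert asymptotics for the Jacobi orthogonal polynomials: their Christoffel--Darboux kernel gives $\rho_n^{(1)}$ exactly, and standard outer-parametrix analysis exhibits the exponential decay of $\rho_n^{(1)}$ outside $[\lambda_-, \lambda_+]$ directly, which is essentially the route taken in \cite{collins}.
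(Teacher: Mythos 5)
The paper does not actually prove Theorem~\ref{spec_concentrate}; it is quoted verbatim as a result from \cite{collins}, with no proof supplied in the text, so there is no ``paper's own proof'' to compare against. Your sketch is therefore being evaluated on its own merits rather than against an internal argument.

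With that caveat, your outline is a sound high-level strategy for this kind of edge-concentration estimate, and the two routes you name (Markov bound on the number of outliers combined with log-gas large deviations, or Riemann--Hilbert/Christoffel--Darboux asymptotics) are both standard ways such results are established. The first-moment reduction to a pointwise bound on $\rho_n^{(1)}$ over the compact set $K$ is clean and correct. The main genuine gap is exactly the one you flag: the Laplace argument requires uniform control of the partition-function ratio $C_{n,\alpha,\beta}^{-1}\int_{[0,1]^{a_n-1}} \cdots\,d\lambda$ to within $e^{o(n)}$, and this is delicate. The effective potential $V_{\mathrm{eff}}$ you write down has logarithmic singularities at $0$ and $1$ when $\beta - \alpha$ or $1 - \alpha - \beta$ vanishes, which is precisely the regime relevant to $\lambda_- = 0$ or $\lambda_+ = 1$ in Theorem~\ref{maintheorem}; in that regime one must be careful that $V_{\mathrm{eff}} - V^*$ remains bounded away from zero on $K$ even as $K$ approaches the singular endpoint, and that the conditional partition function does not itself contribute an $e^{\Theta(n)}$ correction. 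Making this rigorous is nontrivial and typically handled either via concentration-of-measure inequalities for the empirical measure (Ben Arous--Guionnet type bounds) or via the orthogonal-polynomial route. I would also be cautious about asserting that the latter is ``essentially the route taken in \cite{collins}'': Collins works directly with the explicit Jacobi density and integral estimates tailored to the ensemble, rather than with a full Riemann--Hilbert outer-parametrix analysis, so your alternative is a valid proof strategy but is not a faithful description of the cited source. As a proof sketch it is serviceable; as a complete argument, the uniformity of the sub-exponential error terms is the step that would need to be supplied.
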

We conclude as follows.
\begin{conclusion}\label{spec_conclusion} Let $\delta > 0$. Denote\footnote{Here, $d(K, t)$ denotes the distance between the set $K \subset \RR$ and $t \in \RR$.}
\begin{gather*} A_{n,t, \delta} = \{Q \in \rmg_{b_n}(n)\ | \ d(\sigma(I - H_0), t ) < \delta \}, \\B_{n, \delta} = \{Q \in \rmg_{b_n}(n) \ | \ \sigma(I - H_0) \subset (\lambda_- -\delta, \lambda_+ + \delta) \}.\end{gather*}
Then $\lim_{n \to \infty} \mu_{b_n,n} (B_{n, \delta}) = \lim_{n \to \infty} \mu_{b_n,n}(A_{n,t,\delta}) = 1$. \end{conclusion}
\begin{proof}
As we noted previously (\ref{generic_Q}), we can assume that $\mu_{b_n,n}(G_{n,0}) =1$. Thus, almost surely $\sigma(P_0 Q P_0) = \sigma(I - H_0)$, so it is a basic consequence of the fact that the density of $d \mcj_\infty$ is non-vanishing on $[\lambda_-, \lambda_+]$ that $\mu_{b_n,n}(A_{n,t,\delta})$ converges to $1$ as $n \to \infty$. Next, by Theorem \ref{spec_concentrate},
\begin{equation*}\lim_{n \to \infty} \mu_{b_n,n} \left(\{Q \in \rmg_{b_n}(n) \ | \ \sigma(P_0 Q P_0) \cap K\} \right) = 0,\end{equation*}
where
\begin{equation*} K = [-  \delta, \lambda_- - \delta] \cup [\lambda_+ + \delta, 1 +  \delta],\end{equation*}hence $\lim_{n \to \infty} \mu_{b_n,n}(B_{n, \delta}) = 1$.
\end{proof}
We derive Theorem \ref{grassmann_theorem} as a consequence of a more general result, as follows.

\begin{theorem}\label{maintheorem} Let $f \in \mca$. Then $I^f_{\alpha,\beta} = \lim_{n \to \infty} I^{f}_{n, \alpha, \beta}$ is well defined, and
\begin{enumerate}
\item{If $\alpha = \beta = \frac 1 2$ then $I_{f,\alpha,\beta} = \max_{[0,1]} \psi_f = M_f$,}
\item{If $\alpha < \beta = 1-\alpha$, then $I_{f,\alpha,\beta} = \max\left\{|\alpha_{10}|, \max_{[\lambda_-, 1]} \psi_f\right\}$,}
\item{If $\alpha = \beta < 1-\alpha$, then $I_{f,\alpha,\beta} = \max\left\{|\alpha_{11}|, \max_{[0, \lambda_+]} \psi_f\right\}$,}
\item{If $\alpha < \beta < 1-\alpha$, then $I_{f,\alpha,\beta} = \max\left\{|\alpha_{10}|, |\alpha_{11}|, \max_{[\lambda_-, \lambda_+]} \psi_f\right\}$.}
\end{enumerate}
We recall that $\lambda_- = 0 \Leftrightarrow \alpha = \beta$ and $\lambda_+ = 1 \Leftrightarrow \beta = 1-\alpha$ (and that $\lambda_- < \lambda_+$).
\end{theorem}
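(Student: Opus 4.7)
The plan is to combine Conclusion \ref{reformulated_norm_formula}(2), Conclusion \ref{spec_conclusion}, and the standard generic dimension count for intersections of two independent random subspaces. By the invariance reduction (\ref{simplifiedintegral}) we may fix $P = P_0$ and integrate only over $Q \in \rmg_{b_n}(n)$, and by Conclusion \ref{reformulated_norm_formula}(2),
\[
\|f(P_0, Q)\|_{\op} = \max\Bigl(\bigl\{|\alpha_{lk}| : V_{lk}(Q) \ne \{0\}\bigr\} \cup \psi_f\bigl(\sigma(I - H_0)\bigr)\Bigr).
\]
The strategy is to show that $\|f(P_0, Q)\|_{\op}$ converges in $\mu_{b_n, n}$-probability to the value claimed for $I^f_{\alpha, \beta}$, and then, using the uniform bound $\|f(P_0, Q)\|_{\op} \le M_f$ from Conclusion \ref{universal_upper_bound}, upgrade this via dominated convergence to convergence in expectation.

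For the spectral term, let $t^* \in [\lambda_-, \lambda_+]$ be any maximizer of $\psi_f$ on that interval. Conclusion \ref{spec_conclusion} gives two events of probability tending to $1$: on $B_{n,\delta}$ one has $\max_{\sigma(I - H_0)} \psi_f \le \max_{[\lambda_- - \delta, \lambda_+ + \delta] \cap [0,1]} \psi_f$, and on $A_{n, t^*, \delta}$ one has $\max_{\sigma(I - H_0)} \psi_f \ge \psi_f(t^*) - \omega(\delta)$, where $\omega$ is the modulus of continuity of $\psi_f$ on $[0,1]$. Sending $\delta \to 0$ yields $\max_{\sigma(I - H_0)} \psi_f \to \max_{[\lambda_-, \lambda_+]} \psi_f$ in $\mu_{b_n,n}$-probability. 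For the constants $|\alpha_{lk}|$, the generic dimension count for two independent random subspaces gives, almost surely,
\begin{gather*}
\dim V_{00} = \max(0, a_n + b_n - n),\qquad \dim V_{01} = \max(0, a_n - b_n),\\
\dim V_{10} = \max(0, b_n - a_n),\qquad \dim V_{11} = \max(0, n - a_n - b_n),
\end{gather*}
so a case-by-case check using $a_n = \lfloor \alpha n \rfloor$, $b_n = \lfloor \beta n \rfloor$ identifies which $V_{lk}$ are eventually nontrivial and hence which $|\alpha_{lk}|$ can participate in the maximum.

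The four cases of the statement then fall out by matching: whenever $\lambda_- = 0$ (equivalently $\alpha = \beta$), the inequalities $|\alpha_{01}|, |\alpha_{10}| \le \psi_f(0)$ force those constants into $\max_{[\lambda_-, \lambda_+]} \psi_f$; likewise, $\lambda_+ = 1$ (equivalently $\alpha + \beta = 1$) absorbs $|\alpha_{00}|, |\alpha_{11}|$ via $\psi_f(1)$. The constants that survive are exactly $|\alpha_{10}|$ when $\alpha < \beta$ (so $V_{10}$ is generically nontrivial and $\lambda_- > 0$) and $|\alpha_{11}|$ when $\alpha + \beta < 1$ (so $V_{11}$ is generically nontrivial and $\lambda_+ < 1$), which matches Cases 2--4; in Case 1 the interval $[\lambda_-, \lambda_+] = [0,1]$ absorbs all constants and produces $M_f$.

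The main technical content is carried by Conclusion \ref{spec_conclusion}; the rest is elementary dimension counting plus uniform continuity of $\psi_f$. The subtlest bookkeeping concerns the low-dimensional anomalies caused by the integer floors (e.g.\ $\dim V_{11} = 1$ when $\alpha = \tfrac 12$ and $n$ is odd, or $\dim V_{00} \in \{0,1\}$ when $\alpha + \beta = 1$), but each such stray contribution is bounded by $\psi_f$ at an endpoint of $[\lambda_-, \lambda_+]$ and is therefore harmlessly absorbed into the spectral maximum, so it cannot affect the limit.
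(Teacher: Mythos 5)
Your proposal is correct and follows essentially the same route as the paper: reduce via unitary invariance to fixing $P = P_0$ and integrating over $Q$; deploy Conclusion \ref{spec_conclusion} to pin the spectral contribution $\max_{\sigma(I-H_0)}\psi_f$ near $\max_{[\lambda_-,\lambda_+]}\psi_f$ on a high-probability event; use the deterministic/generic dimension counts $\dim V_{10}\ge b_n-a_n$, $\dim V_{11}\ge n-a_n-b_n$, $\dim V_{00}=\dim V_{01}=0$ a.s.\ to decide which $|\alpha_{lk}|$ survive; and absorb the stray endpoint constants using $\psi_f(0)=\max\{|\alpha_{01}|,|\alpha_{10}|\}$, $\psi_f(1)=\max\{|\alpha_{00}|,|\alpha_{11}|\}$ when $\lambda_-=0$ or $\lambda_+=1$. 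Your packaging of the final step as convergence in probability of a uniformly bounded variable (so that the expectations converge) is a slightly cleaner way to organize what the paper does explicitly with $\varepsilon$--$\delta$ estimates on the events $A_{n,t,\delta}$, $B_{n,\delta}$, $C_{n,\delta}$, but it is not a genuinely different argument --- the content is identical.
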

\begin{proof}
As before, we use the notation of (\ref{canonic_decomp}), and assume that $\mu_{b_n,n}(G_{n,0}) = 1$.

Fix $\varepsilon > 0$. Then
\begin{equation*} I^{f}_{n,\alpha,\beta} = \int_{\rmg_{b_n}(n)} \Vert f(P_{0},Q) \Vert_{\op}d\mu_{b_n,n}(Q) = \int_{\rmg_{b_n}(n)} \left(\max_{\sigma_{P_{0}, Q}} \Psi_f \right)d\mu_{b_n, n}(Q),\end{equation*}
where $\Psi_f$ and $\sigma_{P_{0}, Q}$ are as in Conclusion \ref{reformulated_norm_formula}. If $\dim V_{10} = 0$ and $\dim V_{11} = 0$, then $\max_{\sigma_{P_{0}, Q}} \Psi_f = \max_{\sigma(I-H_{0})} \psi_f$. The latter observation will guide our proof, i.e., we will first show that
\begin{equation}\label{psi_f_limit} \tilde I^f_{\alpha,\beta} = \lim_{n \to \infty} \int_{\rmg_{b_n}(n)}\left( \max_{\sigma(I-H_{0})} \psi_f \right) d\mu_{b_n,n}(Q) = \max_{[\lambda_-, \lambda_+]} \psi_f.\end{equation}

Let $t_{\max} \in [\lambda_-, \lambda_+]$ be such that
\begin{equation*} M = \max_{[\lambda_-,\lambda_+]} \psi_f = \psi_f(t_{\max}).\end{equation*}
By the continuity of $\psi_f$, there exists $\delta > 0$ such that if $|t - t_{\max}| < \delta$ then
\begin{equation}\label{for_lower_bound} |\psi_f(t) - M| < \frac \varepsilon 2,\end{equation}
 and additionally,

\begin{equation}\label{for_upper_bound}\max_{[\lambda_-- \delta, \lambda_+ + \delta] \cap [0,1]} \psi_f - M < \frac \varepsilon 2.\end{equation}
Note that the latter is trivial if $\lambda_- = 0$, $\lambda_+ = 1$ (since then the left hand side equals $0$). We will use (\ref{for_lower_bound}) to show $\tilde I^f_{\alpha,\beta} \ge M$, and (\ref{for_upper_bound}) to show $\tilde I^f_{\alpha, \beta} \le M$.

Using Conclusion \ref{spec_conclusion}, there exists $n_0\in \NN$ such that for all $n > n_0$,
\begin{equation}\label{also_for_lower_bound} \left(M - \frac \varepsilon 2\right) \mu_{b_n, n}\left(G_{n,0} \cap A_{n, t_{\max}, \delta }\right) > M - \varepsilon.\end{equation}
Thus, since $\max_{\sigma_{P_{0}, Q}}\Psi_f \ge \max_{\sigma(I-H_{0})} \psi_f \ge 0$, we get that for all $n > n_0$,
\begin{gather*} M_f \ge I^{f}_{n,\alpha, \beta} = \int_{\rmg_{b_n}(n)} \left(\max_{\sigma_{P_{0}, Q}} \Psi_f \right) d\mu_{b_n,n} \\ \ge \int_{G_{n,0} \cap A_{n, t_{\max}, \delta }} \left(\max_{\sigma(I-H_{0})} \psi_f\right) d\mu_{b_n,n} \ge M-\varepsilon.\end{gather*}
The last inequality holds first since $\max_{\sigma(I-H_0)} \psi_f > M - \frac \varepsilon 2$ for every projection $Q \in G_{n,0} \cap A_{n, t_{\max}, \delta}$ by (\ref{for_lower_bound}), and then using (\ref{also_for_lower_bound}).

If $\alpha = \beta = \frac 1 2$, then $\lambda_- = 0$, $\lambda_+ = 1$, so $M = M_f$ and $\lim_{n \to \infty} I^{f}_{n,\alpha,\beta} = M_f$ as required. Otherwise, by Conclusion \ref{spec_conclusion}, for $\varepsilon_1 > 0$ which satisfies $M_f \varepsilon_1 < \frac \varepsilon 2$, there exists $n_1 \in \NN$ such that for all $n > n_1$, it holds that
\begin{equation}\label{also_for_upper_bound} \mu_{b_n,n}(C_{n, \delta}) > 1- \varepsilon_1,\end{equation} where
\begin{gather*} C_{n, \delta} =   G_{n,0} \cap A_{n, t_{\max}, \delta}  \cap B_{n, \delta},\\ B_{n,\delta} = \{Q \in \rmg_{b_n}(n)\ | \ \sigma(I-H_{0}) \subset (\lambda_- - \delta,  \lambda_++ \delta)\}.\end{gather*}
Thus,
\begin{gather*} \int_{\rmg_{b_n}(n)} \left(\max_{\sigma(I-H_{0})} \psi_f \right) d\mu_{b_n,n} = \\ \int_{C_{n, \delta}} \left( \max_{\sigma(I-H_{0})} \psi_f\right)  d\mu_{b_n,n} +\int_{G_{n,0} \cap (\rmg_{b_n}(n) \setminus C_{n, \delta})} \left(\max_{\sigma(I-H_{0})} \psi_f\right) d\mu_{b_n, n} \\ \le M + \frac \varepsilon 2 + \varepsilon_1 M_f < M + \varepsilon.\end{gather*}
Here, the first integral is not greater than $M + \frac \varepsilon 2$ since $\max_{\sigma(I-H_0)} \psi_f \le M + \frac \varepsilon 2$ for every $Q \in C_{n,\delta}$ using (\ref{for_upper_bound}), and the second integral is bounded from above by $\varepsilon_1 M_f$ since $\max_{\sigma(I-H_0)} \psi_f \le M_f = \max_{[0,1]} \psi_f$ and then using (\ref{also_for_upper_bound}).

The above completes the proof of (\ref{psi_f_limit}), and the proof for the case $\alpha = \beta = \frac 1 2$.
We turn to address the remaining cases, namely, when $\alpha + \beta = 1,\ \alpha < \beta$, and when $\alpha + \beta < 1,\ \alpha = \beta$ and finally when $\alpha + \beta < 1,\ \alpha < \beta$. Note that for $n$ large enough, if $\alpha < \beta$ then $\dim V_{10} > 0$ for every $Q \in \rmg_{b_n}(n)$, and if $\alpha + \beta < 1$ then $\dim V_{11} > 0$ for every $Q \in \rmg_{b_n}(n)$.

Assume that $\alpha + \beta = 1$ and that $\alpha < \beta$. Let $M_{-} = \max\{|\alpha_{10}|, M\}$. If $M_{-} = |\alpha_{10}|$, then for $n$ large enough, \begin{equation*} \max_{\sigma_{P_{0}, Q}}\Psi_f = |\alpha_{10}|\end{equation*} for every $Q \in \rmg_{b_n}(n)$, hence clearly $I^{f}_{n,\alpha,\beta} = |\alpha_{10}|$ for every $n$ large enough. Otherwise, if $|\alpha_{10}| < M$, we assume without loss of generality that $M - |\alpha_{10}| > \varepsilon$. Then by the above,
\begin{equation*} M - \frac \varepsilon 2 \le \max_{\sigma_{P_{0}, Q}} \Psi_f = \max_{\sigma(I-H_0)} \psi_f \le M + \frac \varepsilon 2\end{equation*}
for every $Q\in C_{n, \delta}$, hence by the above, $\lim_{n \to \infty} I^{f}_{n,\alpha,\beta} = M$.

Similarly, if $\alpha = \beta < \frac 1 2$, then either $M_{+} = \max\{|\alpha_{11}|, M\} = |\alpha_{11}|$, in which case we note that for $n$ large enough, $\max_{\sigma_{P_{0}, Q} \Psi_f} = |\alpha_{11}|$ for every $Q \in \rmg_{b_n}(n)$ so $I^{f}_{n,\alpha,\beta} = |\alpha_{11}|$ for every $n$ large enough, or $M_{+} = M$, in which case as for the previous case, $\lim_{n \to \infty} I^{f}_{n, \alpha, \beta} = M$. 

Finally, if $\alpha + \beta < 1$, $\alpha < \beta$, then either $M_{-,+} = \max\{|\alpha_{10}|, |\alpha_{11}|, M\} = \max\{|\alpha_{10}|, |\alpha_{11}|\}$, in which case we note that for $n$ large enough, $\max_{\sigma_{P_{0}, Q}} \Psi_f = M_{-,+}$ for all $Q \in \rmg_{b_n}(n)$, hence $I^{f}_{n,\alpha,\beta} = M_{-,+}$ for all $n$ sufficiently large, or otherwise $M_{-,+} = M$, in which case we repeat the reasoning above to obtain the required.
\end{proof}
We obtain Theorem \ref{grassmann_theorem} immediately. Note that if $\alpha = \beta \le \frac 1 2$ then (see Theorem \ref{limit_joint} for the formulas of $\lambda_-, \lambda_+$) we obtain
\begin{equation}\label{high_end_point} \lambda_- = 0,\ \lambda_+ = 4\alpha(1-\alpha) = \lambda_\alpha.\end{equation}
\begin{conclusion} If $f \in \ker (T)$, then by Lemma \ref{abelianization}, $|\alpha_{lk}| = 0$ for $l,k \in \{0,1\}$, hence the limits in the distinct cases addressed in Theorem \ref{maintheorem} admit the same formula, namely,
\begin{equation*} \lim_{n \to \infty} I^{f}_{n,\alpha,\beta} = \max_{[\lambda_-, \lambda_+]} \psi_f.\end{equation*} If $\alpha = \beta$, then $[\lambda_-, \lambda_+] = [0,\lambda_\alpha]$, giving Theorem \ref{grassmann_theorem}.\end{conclusion}
\section{Proof of Theorem \ref{spin_theorem}}\label{spin_section}
The present section contains the proof of Theorem \ref{spin_theorem}. In the next subsection, we prove (for the sake of completeness) some basic lemmas about weak and strong convergence in Hilbert spaces. In subsection \ref{spin_preprint_results}, we specify the limits of "central" matrix coefficients of the projections, as computed in a previous work (\cite{shabtai}) by the author.  The contents of subsections \ref{preliminaries}, \ref{spin_preprint_results} suffice to reduce (in subsections \ref{l2Zsubsection}, \ref{l2Tsubsection}) the evaluation of the limit to the case of two projections on $L^2(\TT)$ (where $\TT\subset \CC$ is the unit circle), namely, the Cauchy-Szeg\"{o} projection on the Hardy space (denoted $\Pi_\TT$), and the operator of multiplication by the indicator function of $E_\alpha = \{0<\Re \zeta < 2 \alpha\} \subset \TT$ (denoted $\mcm_{\mathbbm 1_{E_\alpha}}$). Finally, in subsection \ref{cauchyindicator}, we use a classical result on the spectrum of Toeplitz operators with bounded symbols (together with Theorem \ref{normformula}) to conclude the proof.
\subsection{Preliminaries}\label{preliminaries}
Let $\mcb(\mch)$ denote the space of bounded operators on a complex Hilbert space $\mch$. We will use the following elementary notions and facts.
\begin{definition} Let $\{A_n\}_{n \in \NN} \subset \mcb(\mch)$ and $A \in \mcb(\mch)$.
\begin{enumerate}
\item{We say that $\{A_n\}_{n\in  \NN}$ converges strongly to $A$ if $\lim_{n \to \infty} A_n v = A v$ for every $v \in \mch$.}
\item{We say that $\{A_n\}_{n \in \NN}$ converges weakly to $A$ if $\lim_{n \to \infty} \langle A_n u, v \rangle = \langle A u, v \rangle$ for every $u, v \in \mch$.}
\end{enumerate}
\end{definition}
\begin{lemma}\label{basistoWOT} Let $\mce = \{e_k \ | \ k \in \NN\}$ denote an orthonormal basis of $\mch$. Let $\{A_n \}_{n \in \NN} \subset \mcb(\mch)$ and $A \in \mcb(\mch)$. Then $\{A_n \}_{n \in \NN}$ converges weakly to $A$ if and only if $\sup_n \Vert A_n \Vert_{\op} < \infty$ and $\lim_{n \to \infty} \langle A_n e_l, e_k \rangle = \langle A e_l, e_k \rangle$ for every $l,k \in \NN$.\end{lemma}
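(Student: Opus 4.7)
The plan is to handle the two directions separately, and the easier content is the forward implication. Assuming $A_n \to A$ weakly, pointwise convergence of matrix coefficients is immediate by taking $u = e_l$, $v = e_k$, so what needs attention is the uniform norm bound. For each fixed $u \in \mch$, weak convergence gives that $\{\langle A_n u, v\rangle\}_n$ is bounded for every $v$, hence (Riesz representation) $\{A_n u\}$ is weakly convergent and in particular weakly bounded; by a single application of the uniform boundedness principle in $\mch$, $\sup_n \|A_n u\| < \infty$. A second application (the operators $A_n$ acting from $\mch$ to $\mch$) yields $\sup_n \|A_n\|_{\op} < \infty$.

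For the reverse direction, set $M := \max\{\sup_n\|A_n\|_{\op}, \|A\|_{\op}\}$; note $\|A\|_{\op} \le \sup_n \|A_n\|_{\op}$ is a consequence of coefficient-wise convergence (or alternatively one proves weak convergence directly and derives the norm bound on $A$ from it). Fix $u,v \in \mch$ and $\varepsilon > 0$. Since $\mce$ is an orthonormal basis, choose finite partial sums $u' = \sum_{l \le N} \langle u, e_l\rangle e_l$ and $v' = \sum_{k \le N} \langle v, e_k\rangle e_k$ with $\|u - u'\|, \|v - v'\| < \varepsilon/(3M(1 + \|u\| + \|v\|))$. The telescoping identity
\begin{equation*}
\langle A_n u, v\rangle - \langle A u, v\rangle = \langle A_n(u - u'), v\rangle + \langle A_n u', v - v'\rangle + \bigl(\langle A_n u', v'\rangle - \langle Au', v'\rangle\bigr) + \langle Au', v' - v\rangle + \langle A(u' - u), v\rangle
\end{equation*}
shows that the error decomposes into four terms each bounded by $\varepsilon/3$ uniformly in $n$ (by Cauchy-Schwarz and the operator norm bound $M$), plus the middle term, which is a finite linear combination
\begin{equation*}
\sum_{l,k \le N} \overline{\langle v, e_k\rangle}\langle u, e_l\rangle \bigl(\langle A_n e_l, e_k\rangle - \langle Ae_l, e_k\rangle\bigr)
\end{equation*}
of the given convergent sequences of matrix coefficients, hence tends to $0$ as $n \to \infty$. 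This gives $\limsup_n |\langle A_n u, v\rangle - \langle Au, v\rangle| \le \varepsilon$, and since $\varepsilon$ was arbitrary, weak convergence follows.

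The main (only) subtlety is that the reverse direction genuinely requires the uniform bound hypothesis; without it one cannot pass from coefficient-wise convergence on the basis to weak convergence on arbitrary vectors, as the error terms $\langle A_n(u - u'), v\rangle$ and $\langle A_n u', v - v'\rangle$ cannot be controlled. Everything else is routine density, so I do not expect any genuine obstacle beyond writing out the two uniform-boundedness applications carefully in the forward direction.
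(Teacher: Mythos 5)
Your proof is correct and follows essentially the same route as the paper's: both directions use the uniform boundedness principle twice for the forward implication, and a finite-partial-sum density argument controlled by the uniform norm bound for the reverse. The only cosmetic difference is that you telescope $\langle A_n u, v\rangle - \langle Au, v\rangle$ into five terms, whereas the paper factors out $(A_n - A)$ and uses a three-term decomposition; the underlying estimate is identical.
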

\begin{proof}
It is well known that weakly convergent sequences are bounded. Indeed, for $v \in \mch$ let $\psi_{n,v}(u) = \langle A_n v, u \rangle$, $\psi_{n,v} \in \mch^*$. Then $\{\lvert \psi_{n,v}(u) \rvert\}_{n \in \NN}$ is bounded for every $u \in \mch$, hence $\{\Vert \psi_{n,v} \Vert_{\mch^*}\}_{n \in \NN} = \{\Vert A_n v \Vert_{\mch}\}_{n \in \NN}$ is bounded by the uniform boundedness principle. Since $\{\Vert A_n v \Vert\}_{n \in  \NN}$ is bounded for all $v \in \mch$, then again using uniform boundedness, we deduce that $\{\Vert A_n \Vert_{\op}\}_{n \in \NN}$ is bounded.

Conversely, assume that $\Vert A_n \Vert_{\op} < M$ for all $n$, let $u,v \in \mch$ and let $\varepsilon > 0$. For $\delta > 0$, we may take \begin{equation*} \tilde u = \sum_{k \le k_0} \langle u, e_k \rangle e_k,\ \tilde v = \sum_{k \le k_0} \langle v, e_k \rangle e_k\end{equation*} such that $\Vert u - \tilde u \Vert < \delta,\ \Vert v - \tilde v \Vert < \delta$. Then, for sufficiently large $n$ so that $\lvert \langle (A_n - A) \tilde u, \tilde v \rangle \rvert < \delta$, we have that
\begin{gather*} \lvert \langle A_n u, v \rangle - \langle A u, v \rangle \rvert =\\ \lvert (A_n - A)(u - \tilde u), v \rangle + \langle(A_n - A) \tilde u, v - \tilde v \rangle + \langle (A_n - A) \tilde u, \tilde v \rangle \rvert \\ \le \Vert A_n - A \Vert_{\op} \left(\Vert v \Vert + \Vert \tilde u \Vert \right) \delta + \delta \le (M + \Vert A \Vert_{\op})(\Vert v \Vert + \Vert u \Vert ) \delta + \delta <  \varepsilon,\end{gather*}
where the latter holds for $\delta$ sufficiently small.
\end{proof}
\begin{lemma}\label{weaktostrong} Let $\mce = \{e_k \ | \ k \in \NN\}$ denote an orthonormal basis of $\mch$. If $\{A_n\}_{n \in \NN} \subset \mcb(\mch)$ converges weakly to $A \in \mcb(\mch)$ and $\lim_{n \to \infty} \Vert A_n e_k \Vert = \Vert A e_k \Vert$ for every (fixed) $k \in \NN$, then $A_n$ converges strongly to $A$. \end{lemma}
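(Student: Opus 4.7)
The plan is to reduce the proof to two standard facts: first, that in a Hilbert space weak convergence together with convergence of norms implies strong convergence; second, that strong convergence on a total set combined with uniform norm boundedness gives strong convergence on all of $\mch$.

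First I would apply the identity
\begin{equation*} \Vert A_n e_k - A e_k \Vert^2 = \Vert A_n e_k \Vert^2 - 2 \Re \langle A_n e_k, A e_k \rangle + \Vert A e_k \Vert^2.\end{equation*}
By hypothesis $\Vert A_n e_k \Vert \to \Vert A e_k \Vert$, and weak convergence of $A_n$ to $A$ gives $\langle A_n e_k, A e_k \rangle \to \langle A e_k, A e_k \rangle = \Vert A e_k \Vert^2$. Hence the right hand side tends to $0$, so $A_n e_k \to A e_k$ strongly for each fixed $k \in \NN$. By linearity, $A_n v \to A v$ strongly for every $v$ lying in the dense subspace $\mcm = \spn \mce$ of finite linear combinations of basis vectors.

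Next I would use Lemma \ref{basistoWOT} to conclude that $M = \sup_n \Vert A_n \Vert_{\op} < \infty$. Given $v \in \mch$ and $\varepsilon > 0$, choose $\tilde v \in \mcm$ with $\Vert v - \tilde v \Vert < \delta$, where $\delta > 0$ is to be chosen. Then
\begin{equation*} \Vert A_n v - A v \Vert \le \Vert A_n (v - \tilde v) \Vert + \Vert A_n \tilde v - A \tilde v \Vert + \Vert A (\tilde v - v) \Vert \le (M + \Vert A \Vert_{\op}) \delta + \Vert A_n \tilde v - A \tilde v \Vert.\end{equation*}
Choosing $\delta < \varepsilon / (2(M + \Vert A \Vert_{\op}))$ and then $n$ large enough so that the second term is smaller than $\varepsilon / 2$ (which is possible by the first step, since $\tilde v \in \mcm$), we obtain $\Vert A_n v - A v \Vert < \varepsilon$, as required.

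There is no substantial obstacle here; the only care needed is the standard $\varepsilon/\delta$ bookkeeping with the uniform bound $M$, which is exactly what weak convergence supplies via Lemma \ref{basistoWOT}. The crux is the Hilbert space identity in the first step, which turns weak plus norm convergence into strong convergence on each basis vector.
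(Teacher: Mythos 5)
Your proof is correct and follows essentially the same route as the paper: expand $\Vert A_n e_k - A e_k\Vert^2$ to deduce strong convergence on each basis vector (hence on $\spn \mce$), invoke the uniform bound $\sup_n\Vert A_n\Vert_{\op}<\infty$ (via Lemma \ref{basistoWOT}, exactly as the paper does), and finish with the standard density/triangle-inequality argument.
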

\begin{proof}
We note that
\begin{equation*} \Vert A_n e_k - A e_k \Vert^2 = \Vert A_n e_k \Vert^2 + \Vert A e_k \Vert^2 - \langle A e_k, A_n e_k \rangle - \langle A_n e_k, A e_k \rangle,\end{equation*}
hence by weak convergence,
\begin{equation*} \lim_{n \to \infty} \Vert A_n e_k - A e_k \Vert^2 = 0,\end{equation*}
that is, $\lim_{n \to \infty} A_n e_k = A e_k$ for every fixed $k \in \NN$.

Let $v = \sum_{k \in \NN} v_k e_k \in \mch$. Let $\varepsilon > 0$. Note that $\sup_n \Vert A_n \Vert_{\op} < \infty$. Hence there exists $k_0 > 0$ such that $v_\varepsilon = \sum_{k \ge k_0} v_k e_k$ satisfies $\Vert (A - A_n) v_\varepsilon \Vert < \frac \varepsilon 2$. Let $u_\varepsilon = v - v_\varepsilon$. Then
\begin{equation*} \lim_{n \to \infty} A_n u_\varepsilon = A u_\varepsilon,\end{equation*}
so there exists $n_0$ such that for all $n \ge n_0$, it holds that $\Vert (A_n - A)u_\varepsilon \Vert < \frac \varepsilon 2$. Thus for all $n \ge n_0$,
\begin{equation*} 0 \le \Vert (A_n - A) v \Vert = \Vert(A_n - A)(u_\varepsilon + v_\varepsilon) \Vert < \varepsilon,\end{equation*}
as required.
\end{proof}
\begin{lemma} Assume that $\{A_n\}_{n \in \NN} \subset \mcb(\mch)$ converges to $A \in \mcb(\mch)$ strongly. Then $\liminf \Vert A_n \Vert_{\op} \ge \Vert A \Vert_{\op}$ (this is also true if $A_n$ converges to $A$ weakly). \end{lemma}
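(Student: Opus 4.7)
The plan is to exploit the basic fact that $\|A\|_{\op} = \sup_{\|v\|=1}\|Av\|$ (respectively $\|A\|_{\op} = \sup_{\|u\|=\|v\|=1}|\langle Au,v\rangle|$), combined with the defining limits, to pass the inequality $\|A_n v\| \le \|A_n\|_{\op}$ (respectively $|\langle A_n u, v\rangle| \le \|A_n\|_{\op}$) to the limit via $\liminf$.

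For the strong convergence statement, I would fix an arbitrary unit vector $v \in \mch$. Since $A_n v \to Av$ in norm, we have $\|A_n v\| \to \|Av\|$ by continuity of the norm. For every $n$, the trivial bound $\|A_n v\| \le \|A_n\|_{\op}\|v\| = \|A_n\|_{\op}$ holds, and taking $\liminf$ of both sides gives
\begin{equation*}
\|Av\| = \lim_{n\to\infty}\|A_n v\| = \liminf_{n\to\infty}\|A_n v\| \le \liminf_{n\to\infty}\|A_n\|_{\op}.
\end{equation*}
Taking the supremum over unit vectors $v$ yields $\|A\|_{\op} \le \liminf_{n\to\infty}\|A_n\|_{\op}$.

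For the weak convergence case, the argument is essentially the same, but using the bilinear characterization of the operator norm. Fix unit vectors $u, v \in \mch$. By weak convergence, $\langle A_n u, v\rangle \to \langle Au, v\rangle$, and in particular $|\langle A_n u, v\rangle| \to |\langle Au, v\rangle|$. The Cauchy-Schwarz inequality gives $|\langle A_n u, v\rangle| \le \|A_n u\|\,\|v\| \le \|A_n\|_{\op}$, so passing to $\liminf$ yields $|\langle Au, v\rangle| \le \liminf_{n\to\infty}\|A_n\|_{\op}$. Taking the supremum over unit $u, v$ gives the claim. This case in fact subsumes the strong case, since strong convergence implies weak convergence, so one could alternatively just prove the weak statement.

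There is no significant obstacle here; the whole point is lower semicontinuity of the operator norm with respect to the weak operator topology, and the proof is a one-line application of Cauchy-Schwarz together with the $\liminf$ characterization. The only minor care required is to note that the weak limit $A$ is indeed bounded (which is already used in the hypotheses, since $A \in \mcb(\mch)$), a fact that in the weak convergence setting follows from the uniform boundedness principle as in the proof of Lemma \ref{basistoWOT}.
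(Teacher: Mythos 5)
Your proof is correct and takes essentially the same approach as the paper: both arguments pass the pointwise bound $\Vert A_n v \Vert \le \Vert A_n \Vert_{\op}$ (for a suitably chosen unit vector $v$) to the limit using the convergence $\Vert A_n v \Vert \to \Vert A v \Vert$. The paper phrases it with an explicit $\varepsilon$ near-maximizer while you use the $\liminf$ formulation and also spell out the weak case, but the underlying idea is identical.
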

\begin{proof}
Let $\varepsilon > 0$. Assume that $v \in \mch$ satisfies $\Vert A v \Vert > \Vert A \Vert_{\op} - \frac 1 2 \varepsilon$. Then there exists $n_0$ such that for all $n > n_0$, it holds that $\Vert A_n v \Vert > \Vert A \Vert_{\op} - \varepsilon$. Hence the required.
\end{proof}
\begin{conclusion}\label{normsconverge} If $\{A_n\}_{n \in \NN} \subset \mcb(\mch)$ converges to $A \in \mcb(\mch)$ strongly and $\Vert A_n \Vert_{\op} \le \Vert A \Vert_{\op}$ for all $n$ then $\lim_{n \to \infty} \Vert A_n \Vert_{\op} = \Vert A \Vert_{\op}$. \end{conclusion}
\begin{lemma}\label{SOTproduct} Assume that $\{A_n\}_{n \in  \NN}$ and $\{B_n \}_{n \in \NN}$ converge to $A, B$ strongly. Then $A_n B_n$ converges strongly to $A B$. \end{lemma}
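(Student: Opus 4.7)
The plan is to use the standard telescoping decomposition
\begin{equation*}
A_n B_n v - A B v = A_n (B_n v - B v) + (A_n - A)(B v),
\end{equation*}
valid for every $v \in \mch$, and to bound each summand separately. The right-hand summand tends to $0$ in norm as $n \to \infty$ simply because $\{A_n\}$ converges strongly to $A$, applied at the fixed vector $Bv \in \mch$.

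For the left-hand summand, I would first invoke the uniform boundedness principle (exactly as in the proof of Lemma \ref{basistoWOT}) to deduce from the strong convergence of $\{A_n\}$ that $M := \sup_n \Vert A_n \Vert_{\op} < \infty$. Then
\begin{equation*}
\Vert A_n(B_n v - B v) \Vert \le M \, \Vert B_n v - B v \Vert,
\end{equation*}
and the right-hand side tends to $0$ since $B_n \to B$ strongly. Combining the two estimates via the triangle inequality gives $\Vert A_n B_n v - A B v \Vert \to 0$ for each $v \in \mch$, which is precisely strong convergence of $A_n B_n$ to $AB$.

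There is no real obstacle here; the only subtlety worth flagging is that one genuinely needs the uniform bound $\sup_n \Vert A_n \Vert_{\op} < \infty$ in order to control the first summand uniformly over the varying arguments $B_n v - B v$. This is exactly what the uniform boundedness principle provides for strongly convergent sequences of bounded operators, and the preceding lemmas in this subsection already rely on the same principle, so the argument fits naturally into the framework established above.
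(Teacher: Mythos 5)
Your proposal is correct and follows exactly the same telescoping decomposition and boundedness argument as the paper's proof. The only difference is that you explicitly justify the uniform bound $\sup_n \Vert A_n \Vert_{\op} < \infty$ via the uniform boundedness principle, whereas the paper simply asserts the existence of such a bound without comment.
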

\begin{proof}
Let $M_A > 0$ be such that $\Vert A_n \Vert_{\op} < M_A$ for all $d$.
Let $v \in \mch$. Then
\begin{gather*} \Vert (A_n B_n - A B) v \Vert = \Vert (A_n B_n - A_n B + A_n B - A B) v \Vert \\ \le \Vert A_n (B_n-B) v \Vert + \Vert (A_n - A) B v \Vert\\ \le M_A \Vert (B_n - B) v \Vert + \Vert(A_n - A) B v \Vert,\end{gather*}
hence by strong convergence of $A_n, B_n$,
$\lim_{n \to \infty} \Vert (A_n B_n - A B) v \Vert = 0$. \end{proof}
\subsection{Matrices of spectral projections of spin operators}\label{spin_preprint_results}
We consider the standard basis $\mcb_{3,n} = \left\{e_j, e_{j-1}, ..., e_{-j}\right\}$ of eigenvectors of $J_3$, where as before $2j+1 = n$. Recall that the spectrum of $J_1, J_2, J_3$ is
\begin{equation*} \sigma_n = \{j, j-1, ..., -j\},\end{equation*}
and denote $P_{1,\alpha,n} = \mathbbm 1_{(0,\alpha_n)}(J_1)$ and $P_{3,\alpha,n} = \mathbbm 1_{(0,\alpha_n)}(J_3)$, where $\alpha_n > 0$ satisfies
\begin{equation*} \#\left( (0, \alpha_n) \cap \sigma_n\right) = \lfloor \alpha n \rfloor\end{equation*}
for every $n \in \NN$. Let $[A]_{\mcb_{3,n}}$ denote the matrix of $A : \CC^{n} \to \CC^{n}$ relative to $\mcb_{3,n}$. Clearly,
\begin{equation}\label{P3matrix} \left[P_{3,\alpha,n}\right]_{\mcb_{3,n}} = \left(\begin{array}{cc} \tilde I_{\alpha, n} & 0 \\ 0 & 0 \end{array}\right),\end{equation}
where
\begin{equation*} \tilde I_{\alpha, n} = \left(\begin{array}{cc} 0 & 0 \\ 0 & I_{\lfloor \alpha n \rfloor} \end{array}\right).\end{equation*}
Write $\left[P_{1, \alpha, n}\right]_{\mcb_{3,n}} = \left(P_{1,\alpha, n, m', m} \right)_{m', m = j, j-1, ..., -j}$.
\begin{theorem}[\cite{shabtai}]\label{entries_limit} Fix $m', m \in \sigma_n$. Let $\widehat{\mathbbm 1}_{E_\alpha}(k)$ denote the $k$-th Fourier coefficient of the indicator function of $E_{\alpha} = \{0 < \Re \zeta < 2 \alpha \} \subset \TT$. Then
\begin{equation*} \lim_{k \to \infty}  P_{1,\alpha,n+2k, m',m} = \widehat{\mathbbm 1}_{E_\alpha}(m-m').\end{equation*} \end{theorem}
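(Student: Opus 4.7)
Plan. As anticipated in Section \ref{spin_section}, the strategy is to lift the problem to $L^2(\TT)$ via the isometry $\iota_{n+2k}:e_m\mapsto\zeta^m$ for $m\in\sigma_{n+2k}$ (with $L^2(\TT)$ carrying normalized Haar measure). Since $\langle\mcm_{\mathbbm 1_{E_\alpha}}\zeta^m,\zeta^{m'}\rangle_{L^2(\TT)}=\widehat{\mathbbm 1}_{E_\alpha}(m-m')$, it suffices to show that, for fixed $m,m'$, $\langle P_{1,\alpha,n+2k}\,e_m,e_{m'}\rangle$ converges to $\langle\mcm_{\mathbbm 1_{E_\alpha}}\zeta^m,\zeta^{m'}\rangle$ as $k\to\infty$.

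The central input is the ladder action
\begin{equation*}
J_1 e_m=\tfrac12\sqrt{(j-m)(j+m+1)}\,e_{m+1}+\tfrac12\sqrt{(j+m)(j-m+1)}\,e_{m-1},\quad 2j+1=n+2k,
\end{equation*}
whose two coefficients divided by $j$ tend to $\tfrac12$ for fixed $m$. Hence $\iota_{n+2k}(J_1/j)\,e_m\to \tfrac12(\zeta^{m+1}+\zeta^{m-1})=\mcm_{\Re\zeta}\,\zeta^m$ in $L^2(\TT)$. Iterating this convergence (each iterate remains in a fixed finite-dimensional subspace of $L^2(\TT)$, so Lemma \ref{SOTproduct} applies) and then invoking Stone--Weierstrass together with the uniform bound $\|f(J_1/j)\|_\op\le\|f\|_\infty$, I would obtain
\begin{equation*}
\langle f(J_1/j)\,e_m,e_{m'}\rangle\xrightarrow{k\to\infty}\langle\mcm_{f\circ\Re}\,\zeta^m,\zeta^{m'}\rangle
\end{equation*}
for every $f\in C([-1,1])$, which is convergence in continuous functional calculus.

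To pass to the spectral projection $P_{1,\alpha,n+2k}=\mathbbm 1_{(0,\alpha_{n+2k}/j)}(J_1/j)$ (with $\alpha_{n+2k}/j\to 2\alpha$ and discrepancy $O(1/j)$), I sandwich: for each $\delta>0$ choose continuous $0\le f_-^\delta\le f_+^\delta\le 1$ so that $f_-^\delta\le\mathbbm 1_{(0,2\alpha-\delta)}$, $\mathbbm 1_{(0,2\alpha+\delta)}\le f_+^\delta$ and $f_+^\delta-f_-^\delta$ is supported in $O(\delta)$-neighborhoods of $\{0,2\alpha\}$. For $k$ large enough that $|\alpha_{n+2k}/j-2\alpha|<\delta$, one has $f_-^\delta\le\mathbbm 1_{(0,\alpha_{n+2k}/j)}\le f_+^\delta$ as functions on $[-1,1]$, hence also in the functional calculus of $J_1/j$. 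The gap $\langle(\mathbbm 1_{(0,\alpha_{n+2k}/j)}-f_-^\delta)(J_1/j)\,e_m,e_{m'}\rangle$ is bounded by Cauchy--Schwarz (using positivity of $\mathbbm 1-f_-^\delta$) by $\sqrt{\langle(f_+^\delta-f_-^\delta)(J_1/j)\,e_m,e_m\rangle\,\langle(f_+^\delta-f_-^\delta)(J_1/j)\,e_{m'},e_{m'}\rangle}$. The continuous functional calculus limit sends each diagonal term to $\frac{1}{2\pi}\int_0^{2\pi}(f_+^\delta-f_-^\delta)(\cos\phi)\,d\phi$, which is $O(\sqrt\delta)$ in general (the arcsine measure assigns $O(\sqrt\delta)$ mass near $\pm 1$ and $O(\delta)$ mass near interior points). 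The target $\langle\mcm_{\mathbbm 1_{E_\alpha}}\zeta^m,\zeta^{m'}\rangle$ differs from $\langle\mcm_{f_-^\delta\circ\Re}\zeta^m,\zeta^{m'}\rangle$ by at most the same $L^1$ quantity. Letting $k\to\infty$ first, then $\delta\to 0$, yields the claim.

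The main obstacle is the final passage from continuous to characteristic functional calculus, which a priori need not commute with weak operator limits. The resolution is that the limiting spectral measure of $J_1/j$ tested against a central basis vector $e_m$ is the arcsine measure on $[-1,1]$ (the push-forward of Haar measure on $\TT$ under $\Re$), and its mass at the discontinuity points $\{0,2\alpha\}\subseteq(-1,1]$ is zero; this controls both the sandwich width on the discrete side and the corresponding error on the $L^2(\TT)$ side uniformly in $k$, making the limit interchange rigorous.
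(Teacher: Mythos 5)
Your argument is correct in substance, and it takes a genuinely different route from the one behind the cited result. The present paper imports Theorem~\ref{entries_limit} from \cite{shabtai}, where -- as reflected in the explicit formulas (\ref{px_entries})--(\ref{for_zerothfourier_formula}) used in Section~\ref{prolate_section} -- the entries of $P_{1,\alpha,n}$ are expressed through Fourier coefficients and Hilbert transforms of Wigner small-$d$ functions, and the limit is then extracted from the asymptotics of these special functions. You instead read off the strong convergence $J_1/j\to\mcm_{\Re\zeta}$ directly from the ladder coefficients, bootstrap it to continuous functional calculus via Lemma~\ref{SOTproduct} and Stone--Weierstrass (using $\|J_1/j\|_{\op}\le 1$), and close the jump to the spectral projection by a sandwich whose error is controlled by the absolute continuity of the arcsine (limiting) spectral measure at the discontinuity points $\{0,2\alpha\}$; the Cauchy--Schwarz step for the positive operator $(\mathbbm 1-f_-^\delta)(J_1/j)$ is the right way to pass from diagonal control to the off-diagonal entry $(m',m)$. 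What your route buys is a softer, special-function-free argument that is exactly in the spirit of the $L^2(\TT)$ reduction already used to prove Theorem~\ref{spin_theorem} (subsections~\ref{l2Zsubsection}--\ref{cauchyindicator}), and which would transfer verbatim to the analogues in Conclusion~\ref{analogues_conclusion}; what the explicit route buys is quantitative information (rates, pointwise formulas) that Section~\ref{prolate_section} in fact needs. Two points to tighten in a write-up: when $n$ is even, $\sigma_{n+2k}$ consists of half-integers, so your $\iota_{n+2k}:e_m\mapsto\zeta^m$ needs the half-integer shift of Conclusion~\ref{matrixelements} (harmless, since the target depends only on $m-m'\in\ZZ$ and the shift cancels); and at $\alpha=\tfrac12$ the point $2\alpha=1$ lies at the edge of the arcsine support, where the $O(\sqrt\delta)$ bound you quote is the correct one, while for $\alpha<\tfrac12$ both discontinuity points are interior and the mass is actually $O(\delta)$.
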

In particular,
\begin{conclusion}\label{basis_norm_limit} Fix $m \in \sigma_n$. Then $\lim_{k \to \infty} \left\Vert P_{1,\alpha,n+2k} e_m \right\Vert^2 = \widehat{\mathbbm 1}_{E_{\alpha}}(0)$. \end{conclusion}
\begin{proof}
Note that
\begin{gather*} \langle P_{1,\alpha,n} e_m, P_{1,\alpha,n} e_m \rangle = \langle P_{1,\alpha,n}^* P_{1,\alpha,n} e_m, e_m \rangle \\ = \langle P_{1,\alpha,n} e_m, e_m \rangle = P_{1,\alpha,n,m,m}.\end{gather*}
Thus, by Theorem \ref{entries_limit}, we obtain the required.
\end{proof}
\subsection{The $l^2(\ZZ)$ settings}\label{l2Zsubsection}
It will be convenient to work in $L^2(\TT)$ rather than $\CC^{n}$. As an intermediate step, we consider $l^2(\ZZ)$. Let $\widehat \mcb = \left\{\widehat e_k \ | \ k \in \ZZ\right\}$ denote the standard basis of $l^2(\ZZ)$. Define an embedding $\Psi_n : \CC^n \to l^2(\ZZ)$ by
\begin{equation*} \Psi_n\left(e_m \right) = \left\{\begin{array}{ll} \widehat e_{m - \frac 1 2} & n \in 2\NN \\ \widehat e_{m-1} & n \in 2\NN-1 \end{array}\right.,\ m = j, j-1, ..., -j,\ n = 2j+1. \end{equation*}
Let $V_{n} = \Psi_n\left(\CC^n\right)$, and let $\Pi_n : l^2(\ZZ) \to V_{n} \subset l^2(\ZZ)$ denote the orthogonal projection on $V_{n}$.
For an operator $A : \CC^{n} \to \CC^{n}$ we set
\begin{equation*} A^{\ZZ} = \Psi_n \circ A \circ \Psi^{-1}_n \circ \Pi_n.\end{equation*}
\begin{conclusion}\label{matrixelements} If the matrix of $A$ in $\mcb_{3,n}$ is $\left(a_{m',m}\right)_{|m|, |m'| \le j}$, then the matrix elements of $A^{\ZZ}$ in the basis $\widehat \mcb$ are as follows.

If $n \in 2\NN-1$, then
\begin{equation*} \langle A^{\ZZ} \widehat e_l, \widehat e_k \rangle = \left\{\begin{array}{ll} 0 & |l+1| > j \text{ or } |k+1| > j,\\  a_{k+1,l+1} & |l+1|, |k+1| \le j\end{array}\right. \end{equation*}

If $n \in 2\NN$, then
\begin{equation*} \langle A^{\ZZ} \widehat e_l, \widehat e_k \rangle = \left\{\begin{array}{ll} 0 & \left|l+ \frac 1 2\right | > j \text{ or } \left| k + \frac 1 2 \right| > j \\ a_{k + \frac 1 2, l + \frac 1 2} & \left| l + \frac 1 2 \right|, \left| k + \frac 1 2 \right| \le j \end{array}\right. \end{equation*}
\end{conclusion}
\subsection{The $L^2(\TT)$ settings}\label{l2Tsubsection}
Let $\mcb = \{\zeta^k \ | \ k \in \ZZ\}$ denote the standard orthonormal basis of $L^2(\TT)$, which we identify with $l^2(\ZZ)$ in the obvious way. Let $P^{\TT}_{1,\alpha,n}, P^{\TT}_{3,\alpha,n}$ be the equivalents of $P^{\ZZ}_{1,\alpha,n}, P^{\ZZ}_{3,\alpha,n}$. Let $\Pi_\TT : L^2(\TT) \to H^2(\TT)$ be the orthogonal Cauchy-Szeg\"{o} projection on the Hardy space $H^2(\TT) \subset L^2(\TT)$. Finally, for $\psi \in L^\infty(\TT)$ let $\mcm_\psi : L^2(\TT) \to L^2(\TT)$ be the multiplication operator $G \mapsto \psi G$.
\begin{claim} $P^{\TT}_{3,\alpha,n}$, $P^{\TT}_{1,\alpha,n}$ converge strongly to $\Pi_\TT$, $\mcm_{\mathbbm 1_{E_{\alpha}}}$ respectively. \end{claim}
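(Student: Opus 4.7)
\bigskip

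\noindent\textbf{Proof proposal.} The plan is to verify the two strong convergences separately by combining the computations of Section \ref{spin_preprint_results} (which give the matrix coefficients and diagonal norms in the natural basis) with the soft functional-analytic lemmas of Section \ref{preliminaries}. Concretely, for both projections I would first identify the limit on each basis vector $\zeta^k$ of $L^2(\TT) \cong l^2(\ZZ)$, then use the fact that $P^{\TT}_{3,\alpha,n}, P^{\TT}_{1,\alpha,n}$ are contractions (being cut-downs of orthogonal projections) together with Lemmas \ref{basistoWOT} and \ref{weaktostrong} to promote this basis-wise information to strong convergence. Since the density of trigonometric polynomials takes care of extending from basis vectors to arbitrary $v \in L^2(\TT)$ by the standard $\varepsilon/3$ argument, no further ingredients should be needed.

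The convergence $P^{\TT}_{3,\alpha,n} \to \Pi_\TT$ is essentially immediate from (\ref{P3matrix}) and Conclusion \ref{matrixelements}: $P_{3,\alpha,n}$ is diagonal in $\mcb_{3,n}$ with $1$'s at the entries corresponding to eigenvalues of $J_3$ lying in $(0,\alpha_n)$, so $P^{\TT}_{3,\alpha,n}$ is diagonal in $\{\zeta^k\}_{k \in \ZZ}$ with $\langle P^{\TT}_{3,\alpha,n} \zeta^k, \zeta^k\rangle = 1$ when the index $k$ corresponds to some $m \in (0,\alpha_n) \cap \sigma_n$, and $0$ otherwise. Unwinding the shift in the definition of $\Psi_n$, this occurs precisely when $0 \le k \le \lfloor \alpha n \rfloor - 1$. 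Since $\alpha_n \to \infty$, for each fixed $k \ge 0$ we have $P^{\TT}_{3,\alpha,n} \zeta^k = \zeta^k = \Pi_\TT \zeta^k$ for all $n$ sufficiently large, and for each fixed $k < 0$ we have $P^{\TT}_{3,\alpha,n} \zeta^k = 0 = \Pi_\TT \zeta^k$ for all $n$. Combined with the uniform bound $\Vert P^{\TT}_{3,\alpha,n}\Vert_{\op} \le 1$, a standard density argument gives strong convergence.

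For $P^{\TT}_{1,\alpha,n} \to \mcm_{\mathbbm 1_{E_\alpha}}$ I would first observe, via Conclusion \ref{matrixelements}, that the matrix element $\langle P^{\TT}_{1,\alpha,n} \zeta^l, \zeta^k\rangle$ coincides (for $n$ large enough that the relevant indices lie in $V_n$) with the matrix coefficient $P_{1,\alpha,n, m_k, m_l}$, where the shift in $\Psi_n$ ensures $m_l - m_k = l - k$. By Theorem \ref{entries_limit}, this converges to $\widehat{\mathbbm 1}_{E_\alpha}(l-k)$ along either parity of $n$, which is exactly the $(k,l)$ matrix coefficient of $\mcm_{\mathbbm 1_{E_\alpha}}$ in the basis $\{\zeta^k\}$. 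Since $\Vert P^{\TT}_{1,\alpha,n}\Vert_{\op} \le 1$, Lemma \ref{basistoWOT} yields weak convergence of $P^{\TT}_{1,\alpha,n}$ to $\mcm_{\mathbbm 1_{E_\alpha}}$ along each parity subsequence. To promote this to strong convergence, Conclusion \ref{basis_norm_limit} gives
\begin{equation*}
\lim_{n \to \infty} \Vert P^{\TT}_{1,\alpha,n} \zeta^k\Vert^2 = \widehat{\mathbbm 1}_{E_\alpha}(0) = \Vert \mathbbm 1_{E_\alpha}\Vert^2_{L^2(\TT)} = \Vert \mcm_{\mathbbm 1_{E_\alpha}} \zeta^k\Vert^2
\end{equation*}
for each fixed $k$, so Lemma \ref{weaktostrong} applies. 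Since both the even-$n$ and odd-$n$ subsequences converge strongly to the same limit $\mcm_{\mathbbm 1_{E_\alpha}}$, so does the full sequence.

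The one step that requires some care rather than posing a genuine obstacle is the bookkeeping of the index shift between $m \in \sigma_n$ and $k \in \ZZ$ (a shift of $1$ in the odd case and $\tfrac12$ in the even case), which must be tracked to ensure (i) that any fixed $\widehat e_k$ lies in $V_n$ for all large $n$ (so the cut-off $\Pi_n$ in the definition of $A^\ZZ$ does nothing on basis vectors in the limit), (ii) that the difference $m_l - m_k$ equals $l - k$ so the limit coefficient matches those of $\mcm_{\mathbbm 1_{E_\alpha}}$, and (iii) that the cutoff $k \ge 0$ in the description of the limit of $P^{\TT}_{3,\alpha,n}$ is the correct one.
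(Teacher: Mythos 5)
Your proposal follows the same route as the paper: matrix-element convergence (via Theorem \ref{entries_limit} and Conclusion \ref{matrixelements}), the uniform bound $\Vert P^{\TT}_{i,\alpha,n}\Vert_{\op}\le 1$, Lemma \ref{basistoWOT} to get weak convergence, and Conclusion \ref{basis_norm_limit} plus Lemma \ref{weaktostrong} to upgrade to strong convergence. Your extra care with the parity of $n$ in Theorem \ref{entries_limit} and the index shift $m\leftrightarrow k$ in $\Psi_n$ is a welcome clarification, but it is not a different argument.
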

\begin{proof}
Clearly $\Vert P^{\TT}_{3,\alpha,n} \Vert_{\op} = \Vert P^{\TT}_{1,\alpha,n} \Vert_{\op} = 1$ for all $n \in \NN$. By (\ref{P3matrix}), Claim \ref{entries_limit} and Conclusion \ref{matrixelements}, we have for every $k,l\in \ZZ$
\begin{gather*} \lim_{n \to \infty} \langle P^{\TT}_{3,\alpha,n} \zeta^l, \zeta^k \rangle_{L^2(\TT)} = \langle \Pi_{\TT} \zeta^l, \zeta^k \rangle_{L^2(\TT)},\\ \lim_{n \to \infty} \langle P^{\TT}_{1, \alpha, n} \zeta^l, \zeta^k \rangle_{L^2(\TT)} = \langle \mcm_{\mathbbm 1_{E_{\alpha}}}, \zeta^l, \zeta^k \rangle_{L^2(\TT)}.\end{gather*}
Thus, Lemma \ref{basistoWOT} gives us weak convergence. Then, as evident in (\ref{P3matrix}),
\begin{equation*} \lim_{n \to \infty} P^{\TT}_{3, \alpha, n} \zeta^k = \Pi_{\TT} \zeta^k\end{equation*} for all $k \in \ZZ$. Also,
\begin{equation*} \left \Vert \mcm_{\mathbbm 1_{E_{\alpha}}} \zeta^k \right \Vert^2 = \langle \mathbbm{1}_{E_{\alpha}} \zeta^k, \mathbbm 1_{E_{\alpha}} \zeta^k\rangle_{L^2(\TT)} = \langle \mathbbm 1_{E_{\alpha}}, 1 \rangle_{L^2(\TT)} = \widehat {\mathbbm 1}_{E_{\alpha}}(0).\end{equation*}
Thus, noting Conclusion \ref{basis_norm_limit}, we see that Lemma \ref{weaktostrong} applies to both $\{P_{1,\alpha,n}^{\TT}\}_{n \in \NN}$ and $\{P_{3,\alpha,n}^{\TT}\}_{n \in \NN}$. \end{proof}
\begin{conclusion} $f(P^{\TT}_{3,\alpha,n}, P^{\TT}_{1,\alpha,n})$ converges strongly to $f(\Pi_{\TT}, \mcm_{\mathbbm 1_{E_\alpha}})$. \end{conclusion}
\begin{proof}
This follows for all monomials by induction using Lemma \ref{SOTproduct}, then for all polynomials using linearity.
\end{proof}
\subsection{The pair $\Pi_\TT, \mcm_{\mathbbm 1_{E_{\alpha}}}$}\label{cauchyindicator}
We study $P =\Pi_{\TT}$, $Q= \mcm_{\mathbbm 1_{E_{\alpha}}}$ in the context of the general theory of pairs of orthogonal projections. In the notations of Theorem \ref{canonic_repr},
\begin{equation*} P Q P = (1,0,0,0) \oplus \left(\begin{array}{cc} I - H & 0 \\ 0 & 0 \end{array}\right).\end{equation*}
\begin{definition} The Toeplitz operator associated with the symbol $\phi \in L^\infty(\TT)$ is $T_\phi = \Pi_\TT \mcm_\phi \Pi_\TT : H^2(\TT) \to H^2(\TT)$.\end{definition}
Thus,
\begin{equation*} P Q P = (1,0,0,0) \oplus (I-H) \oplus 0 = T_{\mathbbm 1_{E_{\alpha}}}.\end{equation*}
We note the following classical result.
\begin{theorem}[Hartman-Wintner, \cite{hartmanwintner}, \cite{douglas}, 7.20] If $\phi \in L^\infty(\TT)$ is real-valued, then the spectrum of $T_\phi$ is given by
\begin{equation*} \sigma(T_\phi) = [\ess \inf \phi, \ess \sup \phi].\end{equation*} \end{theorem}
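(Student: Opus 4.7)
Since $\phi$ is real, $T_\phi^* = T_{\bar\phi} = T_\phi$ is self-adjoint and bounded, so $\sigma(T_\phi) \subset \RR$. For the upper inclusion, note that for every $f \in H^2(\TT)$,
\[
\langle T_\phi f, f\rangle = \langle \Pi_\TT(\phi f), f\rangle = \langle \phi f, f\rangle_{L^2(\TT)} = \int_\TT \phi\,|f|^2\,d\mu \ \in \ [\ess\inf \phi,\ \ess\sup \phi]\cdot \|f\|^2,
\]
so the closed numerical range, and hence the spectrum, of $T_\phi$ is contained in $[\ess\inf\phi,\ \ess\sup\phi]$.

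For the reverse inclusion I would first realize the two endpoints as spectral values. Since $T_\phi$ is self-adjoint and bounded, $\max \sigma(T_\phi)$ equals $\sup\{\langle T_\phi f, f\rangle : \|f\| = 1\}$. Given $\epsilon > 0$, the set $E = \{\phi > \ess\sup\phi - \epsilon\} \subset \TT$ has positive measure, and the functions $h_n := \mathbbm{1}_E + n^{-1}\mathbbm{1}_{E^c}$ are bounded away from $0$, so $\log h_n \in L^1(\TT)$ and there exist outer functions $F_n \in H^2(\TT)$ satisfying $|F_n|^2 = h_n$ on $\TT$. A direct computation gives
\[
\frac{\langle T_\phi F_n, F_n\rangle}{\|F_n\|^2} = \frac{\int_\TT \phi\, h_n\, d\mu}{\int_\TT h_n\,d\mu} \ \longrightarrow\ \frac{1}{|E|}\int_E \phi\,d\mu \ \geq\ \ess\sup\phi - \epsilon
\]
as $n\to\infty$. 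Letting $\epsilon \downarrow 0$ places $\ess\sup \phi$ in $\sigma(T_\phi)$, and the symmetric construction with $E' = \{\phi < \ess\inf\phi + \epsilon\}$ places $\ess\inf\phi$ in $\sigma(T_\phi)$.

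To fill in the interval between the two endpoints I would invoke Widom's classical theorem that $\sigma(T_\psi)$ is connected for every $\psi \in L^\infty(\TT)$. Once this is granted, $\sigma(T_\phi) \subset \RR$ is closed, connected, and contains both $\ess\inf\phi$ and $\ess\sup\phi$, so it must equal the full closed interval joining them.

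The non-trivial input is Widom's connectedness theorem, which I would quote as a black box. Bypassing it would require producing an approximate eigenvector at each $\lambda \in (\ess\inf\phi, \ess\sup\phi)$, and the subtlety is that such a $\lambda$ need not lie in the essential range of $\phi$ (for example $\phi = \mathbbm{1}_E$ with $\lambda = 1/2$), so the outer-function concentration used at the endpoints does not apply verbatim. A direct argument would instead have to exploit the nonlocality of $\Pi_\TT$, for example through the fact that a real symbol which essentially changes sign cannot yield an invertible Toeplitz operator.
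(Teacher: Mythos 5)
The paper does not prove this statement — it is quoted as a classical theorem, with references to Hartman--Wintner's original paper and to Douglas, Theorem 7.20 — so there is no in-paper argument to compare against. Your proposal is a correct proof, but it routes through a result that is both later and considerably deeper than Hartman--Wintner itself.

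The first two steps are fine and are essentially the standard ones. The numerical range computation $\langle T_\phi f, f\rangle = \int_\TT \phi\,|f|^2$ together with self-adjointness gives $\sigma(T_\phi) \subset [\ess\inf\phi,\ess\sup\phi]$, and since for a bounded self-adjoint operator $\max\sigma$ and $\min\sigma$ coincide with the supremum and infimum of the quadratic form on the unit sphere, the outer-function construction with $|F_n|^2 = \mathbbm 1_E + n^{-1}\mathbbm 1_{E^c}$ correctly places the two endpoints in the spectrum. (The hypothesis $\log h_n \in L^1$ is trivially satisfied since $h_n$ is bounded above and below by positive constants for each fixed $n$, so the outer function exists.)

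Where you deviate from the classical treatment is in closing the interior of the interval via Widom's connectedness theorem. That theorem (that $\sigma(T_\psi)$ is connected for every $\psi\in L^\infty(\TT)$) is correct and does finish the job, but it is a 1964 result whose proof is substantially harder than Hartman--Wintner (1954), so as a matter of logical economy it is overkill, and as a matter of history it is anachronistic. The argument in the cited reference (Douglas, Cor.\ 7.19 and Thm.\ 7.20) is more elementary and self-contained: one first proves the Hartman--Wintner spectral inclusion theorem, that the essential range of $\psi$ is contained in $\sigma(T_\psi)$ for \emph{any} $\psi\in L^\infty$ (by the same outer-function concentration you use, applied to $\{|\psi-\lambda|<\epsilon\}$), and then for real symbols one shows directly that if $\psi$ takes essentially both positive and negative values then $T_\psi$ cannot be invertible — precisely the ``nonlocality of $\Pi_\TT$'' observation you gesture at in your last sentence. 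Your instinct there is exactly the content of the standard proof; invoking Widom is a valid shortcut but substitutes a heavier black box for a short direct lemma that the cited sources supply.
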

\begin{conclusion}\label{final_conc} Hartman-Wintner's Theorem implies that
\begin{equation*} \sigma(I-H) = \sigma(H) = [0,1].\end{equation*}
Thus, by Theorem \ref{normformula} and Conclusion \ref{universal_upper_bound}, $\Vert f(P,Q) \Vert_{\op} = M_f$. The polynomial $f(P^{\TT}_{3,\alpha,n}, P^{\TT}_{1,\alpha,n})$ converges to $f(P,Q)$ strongly, hence by Conclusion \ref{normsconverge}
\begin{equation*}\lim_{n \to \infty} \Vert f(P^{\TT}_{3,\alpha,n}, P^{\TT}_{1,\alpha,n}) \Vert_{\op} = M_f,\end{equation*} 
and since $\Vert f(P_{3,\alpha,n}, P_{1,\alpha,n}) \Vert_{\op} = \Vert f(P^{\TT}_{3,\alpha,n}, P^{\TT}_{1,\alpha,n}) \Vert_{\op}$, this completes the proof of Theorem \ref{spin_theorem}.\end{conclusion}
\begin{remark} By the F. and M. Riesz Theorem, if $G$ is holomorphic on the unit disk $\mathbb D \subset \CC$, with zero radial boundary values on a subset $E \subset \TT$ of positive Lebesgue measure, then $G \equiv 0$. Thus, the same holds for an anti-holomorphic function. Hence (in the notation of (\ref{canonic_decomp})), for the projections $P,Q$, we have $V_{lk} = \{0\}$ for every $l,k \in \{0,1\}$, so that in fact
\begin{equation*} PQP = T_{\mathbbm 1_{E_{\alpha}}} = I - H : H^2(\TT) \to H^2(\TT).\end{equation*}\end{remark}
\section{Proof of Theorem \ref{prolate_theorem}}\label{prolate_section}
Denote $R_{n} = P_{3,\alpha,n} P_{1,n} P_{3,\alpha, n} \in \End\left(\text{Im}(P_{3,\alpha,n}) \right)$. Let $\lambda_1 \ge \lambda_2 \ge ... \ge \lambda_{\lfloor \alpha n \rfloor}$ be the eigenvalues of $R_{n}$. We will prove, by direct computation, that
\begin{equation*} \trace\left(R_n\right) - \trace\left(R_n^2\right)= \sum_{k=1}^{\lfloor \alpha n \rfloor} \lambda_k(1-\lambda_k) = \Theta(\log n),\end{equation*}which readily implies Theorem \ref{prolate_theorem}. This method is quite standard in the context of Slepian spectral concentration problems; specifically, our proof follows \cite{EMT}.

We denote the matrix of $P_{1,n}$ with respect to the eigenbasis $\mcb_{3,n}$ of $J_3$ by
\begin{equation*} \left[ P_{1,n} \right]_{\mcb_{3,n}} = \left(p^j_{m',m}\right)_{m',m = j, j-1, ..., -j},\ n = 2j+1.\end{equation*}

\begin{claim}\label{offdiagonal} Fix $0 < \gamma < 1$. Let $S_n$ be a sub-matrix of $\left[P_{1,n} \right]_{\mcb_{3,n}}$ of the form
\begin{equation*} S_n = \left(p^j_{m',m}\right)_{\substack{m_0 > m' \\ m \ge m_0}},\end{equation*}
where $|m_0| \le \gamma j$. Then the Frobenius norm of $S_n$ satisfies $\Vert S_n \Vert_F^2 = \Theta(\log n)$ (with constants depending only on $\gamma$).\end{claim}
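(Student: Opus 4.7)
The starting observation is that the Frobenius norm of an off-diagonal block of an orthogonal projection equals the trace of a simple polynomial in a corresponding diagonal block. Let $P_<$ denote the orthogonal projection onto $\Sp\{e_{m'} : m' < m_0\}$, and set $R_< = P_< P_{1,n} P_<$. Since $P_{1,n}$ is self-adjoint and idempotent,
\begin{equation*} \Vert S_n \Vert_F^2 = \trace\bigl(P_< P_{1,n}(I - P_<) P_{1,n} P_<\bigr) = \trace(R_<) - \trace(R_<^2).\end{equation*}
This is precisely the Slepian-style identity used in \cite{EMT}, which also expresses the claim as saying that $R_<$ has $\Theta(\log n)$ eigenvalues substantially away from $\{0,1\}$, weighted by $\lambda(1-\lambda)$. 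Equivalently, $\Vert S_n \Vert_F^2 = \sum_{m' < m_0 \le m} \lvert p^j_{m',m}\rvert^2$, and I intend to estimate this double sum directly.

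The entries admit an explicit description via Wigner small-$d$ functions. Letting $U = e^{-i \pi J_2/2}$ intertwine the $J_1$- and $J_3$-eigenbases, we have $[P_{1,n}]_{\mcb_{3,n}} = U\,[P_{3,1/2,n}]_{\mcb_{3,n}}\,U^*$, so that $p^j_{m',m} = \sum_{k \in \sigma_n,\ k > 0} d^j_{m',k}(\pi/2)\, d^j_{m,k}(\pi/2)$. Theorem~\ref{entries_limit} already gives the pointwise limit $p^j_{m',m} \to \widehat{\mathbbm 1}_{E_{1/2}}(m - m')$, and an elementary computation shows $\lvert \widehat{\mathbbm 1}_{E_{1/2}}(k)\rvert = 1/(\pi \lvert k \rvert)$ for odd $k$ and $0$ for even nonzero $k$. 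The core of the proof is to upgrade this pointwise statement to a uniform estimate
\begin{equation*} \lvert p^j_{m',m}\rvert \le \frac{C(\gamma)}{1 + \lvert m - m'\rvert}\end{equation*}
valid throughout the bulk region $\lvert m\rvert, \lvert m'\rvert \le \gamma j$, using the WKB/classical asymptotics of $d^j_{m',m}(\pi/2)$ (with the expected $j^{-1/2}$ amplitude and oscillatory phase) in the classically allowed region, which contains the bulk with an $n$-independent margin to its boundary.

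Granted that uniform estimate, the upper bound $\Vert S_n\Vert_F^2 = O(\log n)$ follows by reindexing the sum with $k = m - m'$: the number of pairs $(m',m)$ in the block with $m - m' = k$ is $O(\min(k, n))$, yielding
\begin{equation*} \Vert S_n\Vert_F^2 \le C\sum_{k=1}^{O(n)} \frac{\min(k, n)}{k^2} = O(\log n).\end{equation*}
For the lower bound $\Omega(\log n)$, I would restrict to a slightly smaller bulk window $\lvert m\rvert, \lvert m'\rvert \le \gamma' j$ with $\gamma' < \gamma$ and to odd $k = m - m'$ (where $\lvert \widehat{\mathbbm 1}_{E_{1/2}}(k)\rvert \gtrsim 1/k$), and combine with a matching uniform leading-order expansion $p^j_{m',m} = \widehat{\mathbbm 1}_{E_{1/2}}(m - m') + o(1/\lvert m - m'\rvert)$ on this window to recover $\sum_{k \text{ odd}}^{\Theta(n)} 1/k = \Theta(\log n)$.

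The central difficulty is producing those uniform bulk asymptotics for $d^j_{m',m}(\pi/2)$, with explicit decay in $\lvert m - m'\rvert$ and error terms uniform in $(m,m')$; the standard references tend to give either fixed-$(m,m')$ asymptotics or edge asymptotics near $m,m' \approx \pm j$, so some care is needed to stitch a uniform estimate on $\lvert m\rvert, \lvert m'\rvert \le \gamma j$, perhaps via the Krawtchouk-polynomial representation or a steepest-descent analysis of the integral representation of $d^j_{m',m}$. Once such bounds are available the remaining combinatorics is the familiar logarithmic divergence of $\sum 1/k$ truncated at $k \sim n$.
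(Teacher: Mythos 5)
Your reduction of $\Vert S_n \Vert_F^2$ to the double sum $\sum_{m'<m_0\le m}|p^j_{m',m}|^2$ and the final $\sum_k \min(k,n)/k^2 = \mco(\log n)$ count are fine, and the mechanism you identify --- uniform $1/(1+|m-m'|)$ decay of the matrix entries, matching $|\widehat{\mathbbm 1}_{E_{1/2}}(k)|\sim 1/(\pi|k|)$ --- is the right heuristic. But the load-bearing lemma, your uniform bulk bound, is exactly what you flag as ``the central difficulty'' and never establish, so the argument has a genuine gap. There are two concrete problems. First, the WKB amplitude $|d^j_{m',k}(\pi/2)|\lesssim j^{-1/2}$ alone cannot give the claimed bound: $p^j_{m',m}=\sum_{k>0}d^j_{m',k}(\pi/2)\,d^j_{m,k}(\pi/2)$ has $\Theta(j)$ terms, so the triangle inequality only gives $\mco(1)$, and the $1/|m-m'|$ decay would have to come from oscillatory cancellation whose phases must be controlled uniformly in $(m',m,k)$ all the way up to the turning points. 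Second, your bound is only asserted on $|m|,|m'|\le\gamma j$, yet the block $S_n$ has $m$ running up to $j$ and $m'$ down to $-j$; the edge contributions, where $d^j_{m',m}(\pi/2)$ is governed by Airy-type rather than oscillatory asymptotics, are left untreated.

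The paper sidesteps WKB entirely. Starting from the Hilbert-transform representation (\ref{px_entries}), it applies the algebraic recursion for $\cot(\theta/2)\,d^j_{m',m}(\theta)$ from \cite{biedenharnlouck}, 3.85, together with the zeroth-Fourier formula (\ref{zerothfourier}), to write $p^j_{m',m}$ (for integer $j$ and $m-m'$ odd) as a \emph{single} term proportional to $\frac{1}{m-m'}\,d^j_{\cdot,0}(\pi/2)\,d^j_{\cdot,0}(\pi/2)$; the $1/(m-m')$ decay is thus an exact identity, not an asymptotic to be proved. The remaining factors are explicit binomial-coefficient ratios by (\ref{for_zerothfourier_formula}), estimated via Stirling (\ref{central_binomial}), and the resulting extra factor $\sqrt{l^2-s^2}/\sqrt{l^2-t^2}$ is handled by splitting into a bulk regime (where it is $\Theta(1)$) and a thin edge regime (where it is dominated by the summable $1/(s+t)^2$). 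Half-integer $j$ is reduced to the integer case via \cite{biedenharnlouck}, 3.83. To salvage your route you would have to actually prove the uniform bulk asymptotics and add a separate edge analysis; the paper's algebraic approach avoids both difficulties.
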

The claim will be established through a series of lemmas, and our main tool is the formula (\cite{shabtai})
\begin{equation}\label{px_entries} p^j_{m',m} =\left\{\begin{array}{ll}\frac 1 2 e^{i \frac \pi 2(m'-m)} \left(\delta_{m',m} - \widehat{d^j_{m',m}}(0)\right) & \text{if } m'-m \in 2\ZZ,\\ -\frac i 2 e^{i \frac \pi 2 (m'-m)} \mch_\TT (d^j_{m',m})(0) & \text{if } m'-m \in 2\ZZ + 1 \end{array}\right.,\end{equation}
where $\HH_\TT : L^2(\TT) \to L^2(\TT)$ is the periodic Hilbert transform and $d^j_{m',m}: \TT \to \TT$ are the Wigner small-d functions, specified by
\begin{equation*} d^j_{m',m}(\theta) = \langle e^{-i \theta J_2} e_m, e_{m'} \rangle.\end{equation*} The zeroth Fourier coefficient of $d^j_{m',m}$ is specified by (\cite{shabtai}, \cite{biedenharnlouck}, 3.78, \cite{fengwangyangjin})
\begin{equation}\label{zerothfourier} \widehat{d^j_{m',m}}(0) = \left\{\begin{array}{ll} e^{i\frac \pi 2 (m-m')} d^j_{m,0} \left(\frac \pi 2 \right) d^j_{m',0} \left(\frac \pi 2 \right) & \text{if } j \in \NN,\\ 0 & \text{if } j\in \frac 1 2 \NN \setminus \NN\end{array}\right..\end{equation}
Moreover, when $j \in \NN$ (\cite{varshalovich}, 4.16(6)),
\begin{equation}\label{for_zerothfourier_formula} d^j_{m,0} \left(\frac \pi 2 \right) =\left\{\begin{array}{ll} \frac{(-1)^{\frac{j+m} 2}}{2^j} \sqrt{{j+m \choose \frac{j+m} 2}{j-m \choose \frac{j-m} 2}} & \text{if } j-m \in 2\NN \\ 0 & \text{if } j-m \in 2\NN + 1 \end{array}\right..\end{equation} Throughout, we will make use of the symmetry relations (\cite{varshalovich}, 4.4) 
\begin{equation}\label{dj_symm_1} d^j_{m',m}(-\theta) = (-1)^{m'-m} d^j_{m',m}(\theta) = d^j_{m,m'}(\theta) = d^j_{-m',-m}(\theta),\end{equation}
and
\begin{equation}\label{dj_symm_2} d^j_{m',m}(\theta + 2\pi) = (-1)^{j-m} d^j_{m', -m}(\theta+\pi) = (-1)^{2j} d^j_{m',m}(\theta).\end{equation}
\begin{lemma}\label{difference_even} $\sum_{m'-m \in 2\ZZ \setminus \{0\}} (p^j_{m',m})^2 = \mco(1)$. \end{lemma}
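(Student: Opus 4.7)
The plan is to insert the explicit formula (\ref{px_entries}) for $p^j_{m',m}$ when $m'-m$ is even, reduce each such entry to a product of two values of $d^j_{\cdot,0}(\pi/2)$, and then bound the resulting double sum using unitarity of rotation by $\pi/2$ about the $J_2$-axis. Because the lemma is essentially a direct consequence of the formulas already established, I do not anticipate any serious obstacle; the only delicate point is bookkeeping of the phases.

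More concretely: when $m'-m \in 2\ZZ \setminus \{0\}$ the Kronecker delta in (\ref{px_entries}) drops out, leaving
\[
p^j_{m',m} = -\tfrac{1}{2} e^{i\pi(m'-m)/2}\, \widehat{d^j_{m',m}}(0).
\]
For $j \in \tfrac{1}{2}\NN \setminus \NN$, (\ref{zerothfourier}) kills every such Fourier coefficient, so the entire sum equals $0$. For $j \in \NN$, substituting the integer branch of (\ref{zerothfourier}) causes the two phases $e^{i\pi(m'-m)/2}$ and $e^{i\pi(m-m')/2}$ to cancel, leaving the clean factorized product
\[
p^j_{m',m} = -\tfrac{1}{2}\, d^j_{m,0}(\tfrac{\pi}{2})\, d^j_{m',0}(\tfrac{\pi}{2}).
\]
Moreover, (\ref{for_zerothfourier_formula}) forces $d^j_{m,0}(\pi/2) = 0$ whenever $j-m$ is odd, so the parity constraint $m'-m \in 2\ZZ$ is automatically enforced whenever $p^j_{m',m}$ is nonzero.

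Squaring and summing then yields
\[
\sum_{m'-m \in 2\ZZ \setminus \{0\}} (p^j_{m',m})^2 \;\le\; \frac{1}{4} \left( \sum_{m} d^j_{m,0}(\tfrac{\pi}{2})^2 \right)^{\!2},
\]
and the last step is to invoke unitarity: by the definition $d^j_{m,0}(\theta) = \langle e^{-i\theta J_2} e_0, e_m \rangle$, the vector $e^{-i\pi J_2/2} e_0 \in \CC^n$ has unit norm, so $\sum_m d^j_{m,0}(\pi/2)^2 = 1$. Hence the sum is bounded by $\tfrac{1}{4}$ uniformly in $j$, giving the $\mco(1)$ estimate. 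The only computational care required is verifying that the two exponential phases truly cancel to give a real number; everything else is immediate from the cited formulas.
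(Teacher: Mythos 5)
Your proposal is correct and matches the paper's argument: you insert (\ref{px_entries}) and (\ref{zerothfourier}), observe the phase cancellation giving $p^j_{m',m} = -\tfrac12 d^j_{m,0}(\tfrac\pi2) d^j_{m',0}(\tfrac\pi2)$ for $j \in \NN$ (and zero otherwise), and then bound the double sum by $\tfrac14$ via unitarity of $d^j(\tfrac\pi2)$. The paper organizes the final summation slightly differently (fixing $m'$ first, then summing), but the content and the bound are identical.
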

\begin{proof}
Combining (\ref{px_entries}) and (\ref{zerothfourier}), we see that if $j \not \in \NN$ then $p^j_{m',m} = 0$ for every $m,m'$ such that $m'-m \in 2\ZZ \setminus \{0\}$. Otherwise (assuming $m' \ne m$),
\begin{equation*} p^j_{m',m} =-\frac 1 2 \widehat{d^j_{m',m}}(0)= -\frac 1 2 d^j_{m',0} \left(\frac \pi 2 \right) d^j_{m,0} \left( \frac \pi 2 \right).\end{equation*}
If we fix $m'\in \{j, j-1, ..., -j\}$, then
\begin{equation*} \sum_{m'-m \in 2\ZZ \setminus \{0\}} (p^j_{m',m})^2 \le \frac 1 4 \left(d^j_{m',0} \left( \frac \pi 2 \right)\right)^2 \sum_{m = -j}^{j} \left(d^j_{m,0} \left( \frac \pi 2 \right)\right)^2 = \frac 1 4 d^j_{m',0} \left(\frac \pi 2 \right)^2,\end{equation*}
where we used that $d^j_{m',m}\left(\frac \pi 2 \right)\in \RR$ are the elements of a unitary matrix, which also implies that
\begin{equation*} \frac 1 4 \sum_{m'=-j}^j \left(d^j_{m',0} \left( \frac \pi 2 \right)\right)^2 = \frac 1 4,\end{equation*}
hence the required.
\end{proof}
Next, by the symmetries (\ref{dj_symm_1}), we may assume that $m_0 \ge 1$ without loss of generality.
\begin{claim} When $j \in \NN$, $\Vert S_n \Vert_F^2= \Theta(\log n)$. \end{claim}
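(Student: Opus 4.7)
The plan is to split
\[
\|S_n\|_F^2 \;=\; \Sigma_{\text{even}} + \Sigma_{\text{odd}}
\]
according to the parity of $m'-m$ (recall that $m'<m_0\le m$ forces $m'\ne m$), and handle the two contributions by quite different means. By Lemma \ref{difference_even}, the even-difference contribution is already $\mathcal{O}(1)$, so only $\Sigma_{\text{odd}}$ can produce logarithmic growth, and in fact it must produce $\Theta(\log n)$ to match the claim.

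For the odd-difference sum, formula (\ref{px_entries}) gives
\[
\Sigma_{\text{odd}} \;=\; \frac{1}{4}\sum_{\substack{m'<m_0\le m\\ m'-m\text{ odd}}} \bigl|\mch_\TT(d^j_{m',m})(0)\bigr|^2.
\]
The key observation is that for odd $m'-m$ the Wigner small-d function $d^j_{m',m}$ is an odd function of $\theta$, so the principal-value integral collapses to the absolutely convergent expression
\[
\mch_\TT(d^j_{m',m})(0) \;=\; \frac{1}{\pi}\int_0^{\pi} d^j_{m',m}(\theta)\,\cot(\theta/2)\,d\theta.
\]
For the \textbf{upper bound}, I would leverage the convergence result of subsection \ref{l2Tsubsection}: after the embedding $\Psi_n:\CC^n\hookrightarrow l^2(\ZZ)$, the matrix entries $p^j_{m',m}$ with fixed frequency gap $m-m'$ converge to the Fourier coefficients $\widehat{\mathbbm{1}_{E_{1/2}}}(m-m')$, which are $\pm \tfrac{1}{\pi(m-m')}$ for odd $m-m'$. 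Using Theorem \ref{entries_limit} together with a uniform $O(1/|m-m'|)$ estimate on $p^j_{m',m}$ (valid in the prescribed range $|m_0|\le\gamma j$), each squared entry is bounded by $C/(m-m')^2$. Counting pairs with $m-m'=k$ and $m'<m_0\le m$ gives at most $k$ pairs (in fact roughly $k$), so
\[
\Sigma_{\text{odd}} \;\le\; C\sum_{\substack{1\le k\lesssim n\\ k\text{ odd}}}\frac{k}{k^2} \;=\; C\sum_{\substack{1\le k\lesssim n\\ k\text{ odd}}}\frac{1}{k} \;=\; \mathcal{O}(\log n).
\]

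For the \textbf{lower bound}, the plan is to identify, among the pairs $(m',m)$ with $m'<m_0\le m$ and $m-m'$ odd, a cofinal family for which $|p^j_{m',m}|\ge c/|m-m'|$. Exploiting the symmetry $d^j_{m',m}(\theta)=d^j_{m,m'}(\theta)$ and $d^j_{-m',-m}=d^j_{m',m}$, it suffices to handle a single "strip" of $(m',m)$ centered around $m_0$ where both indices lie well inside $(-\gamma j,\gamma j)$; in that range, the semiclassical/exact asymptotics of $d^j_{m',m}$ guarantee the Hilbert transform integral above is dominated by its contribution near $\theta=0$, where the integrand behaves like $\cot(\theta/2)$ times a function whose derivative at $0$ produces $c/(m-m')$. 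Summing over odd $1\le k=m-m'\lesssim n$ yields a harmonic series, giving the required $\Omega(\log n)$.

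The main obstacle is the lower bound: producing a quantitative, uniformly valid approximation of $\mch_\TT(d^j_{m',m})(0)$ by $1/(\pi(m-m'))$ for $(m',m)$ varying in a strip. I expect the cleanest route is via the exact three-term recursion satisfied by $d^j_{m',m}$ in $m'$ (or $m$), which converts the Hilbert transform values into a discrete system that can be compared with the corresponding system for the Fourier coefficients of $\mathbbm{1}_{E_{1/2}}$; integration by parts in $\theta$ combined with the identity (\ref{zerothfourier}) for $\widehat{d^j_{m',m}}(0)$ then controls the error, leveraging the fact that the even case already contributes only $\mathcal{O}(1)$.
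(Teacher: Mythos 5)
Your split into even/odd parity of $m'-m$ and the invocation of Lemma \ref{difference_even} for the even part match the paper's opening moves, and the observation that the odd part reduces to squared Hilbert-transform values $\frac14\lvert \mch_\TT(d^j_{m',m})(0)\rvert^2$ is correct. But the core of your plan rests on a uniform estimate $\lvert p^j_{m',m}\rvert = \mathcal{O}\bigl(1/\lvert m-m'\rvert\bigr)$ over the whole range $m' < m_0 \le m$, $\lvert m_0\rvert \le \gamma j$, and this is false. The paper's explicit computation (via the recursion of \cite{biedenharnlouck}~3.85, formula (\ref{zerothfourier}), and the central-binomial estimate (\ref{central_binomial})) shows that
\[
\bigl(p^{2l}_{2r-2m'+1,\,2r+2m}\bigr)^2 \;=\; \frac{\Theta(1)}{(m-m')^2}\,\frac{\sqrt{l^2-(r-m')^2}}{\sqrt{l^2-(r+m)^2}},
\]
and the ratio on the right can be as large as $\Theta(\sqrt{l})$ when $r+m$ approaches $l$ (i.e., when $m$ is near the top of the spectrum). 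So the per-entry bound degrades near the spectral edges, and your count of ``$k$ pairs times $C/k^2$'' does not directly go through; the paper instead performs a three-case analysis in the reparametrized variables $s=m'-r$, $t=r+m$, and the edge region has to be shown to contribute only $\mathcal{O}(1)$ by a separate argument that exploits the simultaneous smallness of $1/(2t+2s-1)^2$ there. Theorem \ref{entries_limit} (used in a different section of the paper) gives only pointwise convergence at fixed $(m',m)$ and does not supply the quantitative, range-wide control your sketch assumes.

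Your lower-bound plan is a sketch rather than an argument. You propose identifying a ``cofinal family'' via semiclassical asymptotics, a three-term recursion, and integration by parts, but none of these steps is carried out, and the comparison you describe with the Fourier coefficients of $\mathbbm{1}_{E_{1/2}}$ would need careful error control precisely in the regime where you want a harmonic-series lower bound. The paper's route is more direct and entirely explicit: it uses the vanishing of $d^j_{m,0}(\pi/2)$ for $j-m$ odd (from (\ref{for_zerothfourier_formula})) to kill one of the two terms in the recursion, producing a closed-form expression for each squared entry in terms of binomial coefficients, and then the central-binomial asymptotics convert the Frobenius norm into a double sum that is shown to be $\Theta(\log l)$ on the ``bulk'' block $s < 1-r+\varepsilon l$, $t < 1+r+\varepsilon l$ (where the ratio $\sqrt{l^2-s^2}/\sqrt{l^2-t^2}$ is $\Theta(1)$) and $\mathcal{O}(1)$ off that block. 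If you want to salvage your route, the missing ingredient is precisely this explicit closed form for the entries plus the edge-vs-bulk decomposition; without it, both the upper bound (uniform estimate fails) and the lower bound (no quantitative asymptotics) have genuine gaps.
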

\begin{proof}
%
%Denote \begin{equation*}S = \{(m,n) \ |\ m-n \in 2\ZZ + 1, \ m = p-1, p-2, ..., -j, n = p, p+1, ..., j\}.\end{equation*} In light of the previous lemma, we only need to estimate
%\begin{equation*} \sum_{(m,n) \in S} q_{m,n}^2.\end{equation*}
The norms of the rows and columns of $S_n$ are obviously bounded from above by $1$, hence we can erase $\mco(1)$ of them without loss of generality. Thus, we will estimate the Frobenius norm of
\begin{equation*} \tilde S_n = \left(p^j_{m',m} \right)_{\substack{m_0> m' >-j+2\\ j-2 > m \ge m_0}},\end{equation*}
and furthermore assume that $m_0 \in 2\NN +1$.

The previous lemma implies that it suffices to consider the entries of $\tilde S_n$ corresponding to $m'-m \in 2\ZZ + 1$.
Since $j\in \NN$, the functions $d^j_{m',m}$ are $2\pi$-periodic (\ref{dj_symm_2}), so that
\begin{equation*} \mch_\TT(d^j_{m',m})(0) =-\frac 1 {2\pi} \int_{-\pi}^{\pi} d^j_{m',m}(\theta) \cot \left(\frac \theta 2 \right) d\theta.\end{equation*}
Denote
\begin{equation*} \nu_{j,m} = \sqrt{(j+m)(j-m+1)} = \nu_{j,-(m-1)}.\end{equation*}
By \cite{biedenharnlouck}, 3.85,
\begin{equation*} \cot \left(\frac \theta 2 \right) d^j_{m',m}(\theta) = \frac 1 {m-m'} \left(\nu_{j,m} d^j_{m', m-1}(\theta)+ \nu_{j,m'} d^j_{m'-1, m}(\theta)\right)\end{equation*}
Using (\ref{zerothfourier}),
we find that
\begin{gather*} p^j_{m',m} = \\ \frac 1 {2(m-m')} \left( \nu_{j,m} d^j_{m',0} \left(\frac \pi 2 \right) d^j_{m-1, 0} \left(\frac \pi 2 \right) - \nu_{j,m'} d^j_{m'-1, 0} \left( \frac \pi 2 \right) d^j_{m,0} \left(\frac \pi 2 \right) \right).\end{gather*}

Now, (\ref{for_zerothfourier_formula}) implies, in particular, that either $d^j_{m,0}\left(\frac \pi 2 \right) = 0$ or $d^j_{m-1, 0} \left(\frac \pi 2 \right) = 0$.
Hence, we assume from now on that  $j=2l\in 2\NN$ (the case $j\in 2\NN + 1$ is essentially identical), so that
\begin{equation*} p^j_{m',m} = \left\{\begin{array}{cc} -\frac {\nu_{j,m'}} {2(m-m')} d^j_{m'-1,0} \left(\frac \pi 2 \right) d^j_{m,0} \left( \frac \pi 2 \right) & m' \in 2\ZZ + 1\\ \frac {\nu_{j,m}}{2(m-m')} d^j_{m',0} \left( \frac \pi 2 \right) d^j_{m-1, 0} \left( \frac \pi 2 \right) & m' \in 2\ZZ \end{array}\right. .\end{equation*}
%Similarly, if $j\in 2\NN + 1$, then
%\begin{equation*} q_{m,n} = \left\{\begin{array}{cc} - \frac {\nu_m} {2(n-m)} d^j_{m-1,0} \left(\frac \pi 2 \right) d^j_{n,0} \left( \frac \pi 2 \right) & m \in 2\ZZ,\\ \frac{\nu_n}{2(n-m)} d^j_{m,0} \left( \frac \pi 2 \right) d^j_{n-1, 0} \left( \frac \pi 2 \right) & m \in 2\ZZ +1 \end{array}\right.. \end{equation*}
Write $m_0 = 2r+1 \in 2\NN + 1$. Then we must estimate
\begin{equation*} \sum_{\substack{1 \le m' \le l+r-1\\ 1 \le m \le l-r-2}} (p^{2l}_{2r-2m'+1, 2r+2m})^2 + \sum_{\substack{1 \le m' \le l+r-1\\ 1 \le m \le l-r-1}} (p^{2l}_{2r-2m'+2, 2r + 2m-1})^2.\end{equation*}
We next show that the first sum is $\Theta(\log l)$. The same can be shown, in the same way, for the second sum, hence we omit the computation.
We use (\ref{for_zerothfourier_formula})
together with the estimate
\begin{equation}\label{central_binomial} {2k \choose k} =\frac 1 {\sqrt{\pi k}} 4^k \left( 1 + \frac{\mco(1)} k \right)\end{equation}
to obtain
\begin{gather*} (p^{2l}_{2r-2m'+1, 2r + 2m})^2 =  \frac{\Theta(1)}{(2m+2m'-1)^2} \frac {(2(l + r - m')+1)(l-(r-m')} {\sqrt{(l^2 -(r-m')^2)(l^2 -(r+m)^2)}}\\ =  \frac {\Theta(1)}{(2m+2m'-1)^2} \frac{\sqrt{l^2-(r-m')^2}}{\sqrt{l^2-(r+m)^2}} \le \frac{\Theta(1)}{(2m+2m'-1)^2} \frac l {\sqrt{l^2 - (r+m)^2}}.\end{gather*}
Write $t = r+m$, $s = m'-r$. Then
\begin{equation*} \sum_{\substack{1 \le m' \le l+r-1\\ 1 \le m \le l-r-2}}(p^{2l}_{2r-2m'+1, 2r+2m})^2 = \Theta(1)\sum_{\substack{1-r \le s \le l-1 \\ 1+r \le t \le l-2}}\frac 1{(2t+2s-1)^2} \frac{\sqrt{l^2 - s^2}}{\sqrt{l^2 - t^2}}.\end{equation*}
Let $\varepsilon = \frac 1 2 (1-\gamma)$. If $s \ge 1-r + \varepsilon l$, then $s + t \ge \varepsilon l$, so that
\begin{gather*} \sum_{\substack{s \ge 1-r+\varepsilon l \\ 1+r \le t \le l-2}} \frac 1{(2t + 2s -1)^2} \frac{\sqrt{l^2 - s^2}}{\sqrt{l^2 - t^2}} \le \mco(1) \sum_{\substack{s \ge 1-r + \varepsilon l \\ 1+r \le t \le l-2}}\frac 1 l \frac 1 {\sqrt{l^2 - t^2}} \\ \le \mco(1) \sum_{t = 1}^{l-1} \frac 1 {\sqrt{l^2 - t^2}} = \mco(1).\end{gather*}
Next, assume $s < 1-r + \varepsilon l$. Note that
\begin{equation*}  m_0 = 2r + 1 < (1-2\varepsilon) 2l,\end{equation*}
hence $ r \le (1-2\varepsilon) l-1$, so that $1+r + \varepsilon l \le (1-\varepsilon) l$.
If
\begin{equation*} 1 + r + \varepsilon l \le t \le l-1,\end{equation*}
then
\begin{equation*} t + s \ge t + 1 - r > \varepsilon l,\end{equation*}
and the same argument as above implies that
\begin{equation*}  \sum_{\substack{1-r \le s < 1-r + \varepsilon l \\ t \ge 1+r + \varepsilon l}}\frac 1 {(2t + 2s - 1)^2} \frac {\sqrt{l^2 - s^2}}{\sqrt{l^2 - t^2}} = \mco(1).\end{equation*}
Finally, assume $t < 1 + r + \varepsilon l \le (1-\varepsilon) l$. Then
$l^2 - t^2 = \Theta(l^2)$,
so that
\begin{equation*}\frac{\sqrt{l^2 - s^2}}{\sqrt{l^2 - t^2}} = \Theta(1).\end{equation*}
Thus
\begin{gather*}  \sum_{\substack{1-r \le s < 1-r+\varepsilon l \\ 1+r \le t < 1+r + \varepsilon l}} \frac 1 {(2t + 2s - 1)^2} \frac{\sqrt{l^2 - s^2}}{\sqrt{l^2 - t^2}} = \\ \Theta(1) \sum_{\substack{1-r \le s < 1-r + \varepsilon l \\ 1+r \le t < 1+r + \varepsilon l}} \frac 1 {(2t + 2s - 1)^2}= \Theta(\log l) = \Theta(\log n),\end{gather*}
as required. Namely, the original sum is $\Theta(\log n)$.\end{proof}
\begin{conclusion} Assume $j \in \frac 1 2 \NN \setminus \NN$. Then $\Vert S_n \Vert_F^2 = \mco(\log n)$. \end{conclusion}
\begin{proof}
If $m'-m \in 2\ZZ$, then $p^j_{m',m} = 0$ by (\ref{px_entries}) and (\ref{zerothfourier}) (we know that $m' \ne m$). Hence, assume that $m'-m \in 2\ZZ + 1$. The functions $d^j_{m',m}$ satisfy
\begin{equation*} d^j_{m',m}(\theta + 2\pi) = - d^j_{m',m}(\theta),\end{equation*}
hence
\begin{gather*} \mch_\TT(d^j_{m',m})(0) = -\frac 1 {4\pi} \int_{-2\pi}^{2\pi} d^j_{m',m}(\theta) \cot\left(\frac \theta 4 \right) d\theta \\ = -\frac 1 {2\pi} \int_0^{2\pi} d^j_{m',m}(\theta)\frac 1 {\sin\left(\frac \theta 2 \right)} d\theta.\end{gather*}
As before, we can truncate $\mco(1)$ rows and columns from $S_n$ without changing our estimate. According to \cite{biedenharnlouck}, 3.83,%\begin{gather*} \frac 1 {\sin \left(\frac \theta 2 \right)} d^j_{m,n}(\theta) = \sqrt{\frac{j-n+1}{j-m+1}} \cot \left(\frac \theta 2 \right) d^{j + \frac 1 2}_{m - \frac 1 2, n - \frac 1 2}(\theta) + \sqrt{\frac{j+n+1}{j-m+1}} d^{j+\frac 1 2}_{m- \frac 1 2, n + \frac 1 2}(\theta)\\ =\sqrt{\frac{j-n+1}{j-m+1}} \sqrt{(j+n)(j-n+2)}d^{j+\frac 1 2}_{m-\frac 1 2, n -\frac 3 2}+\sqrt{\frac{j-n+1}{j-m+1}}\sqrt{(j+m)(j-m+2)} d^j_{m-\frac 3 2, n - \frac 1 2},\end{gather*}
\begin{gather*} \frac 1 {\sin\left(\frac \theta 2 \right)} d^j_{m',m} =\sqrt{\frac{j+m}{j-m'}} d^{j-\frac 1 2}_{m' + \frac 1 2, m - \frac 1 2}(\theta) + \sqrt{\frac{j-m}{j-m'}} \cot \left(\frac \theta 2 \right) d^{j - \frac 1 2}_{m' + \frac 1 2, m+ \frac 1 2}(\theta),\end{gather*}
where $\sqrt{\frac{j\pm m}{j-m'}} = \mco(1)$ (since $m' < m_0 \le \gamma j$), and again using \cite{biedenharnlouck}, 3.85,
\begin{gather*} \cot \left(\frac \theta 2 \right) d^{j - \frac 1 2}_{m' + \frac 1 2, m + \frac 1 2}(\theta) = \\ \frac 1 {m-m'} \left(\nu_{j-\frac 1 2, m} d^{j-\frac 1 2}_{m' + \frac 1 2, m - \frac 1 2}(\theta) + \nu_{j-\frac 1 2, m'} d^{j-\frac 1 2}_{m' - \frac 1 2, m + \frac 1 2}(\theta)\right).\end{gather*}
Thus,
\begin{gather*} \frac 1 {\sin \left( \frac \theta 2 \right)} d^j_{m',m}(\theta) =\left(\sqrt{\frac{j+m}{j-m'}} + \sqrt{\frac{j-m}{j-m'}} \frac{\nu_{j-\frac 1 2, m}}{m-m'} \right)d^{j- \frac 1 2}_{m' + \frac 1 2, m - \frac 1 2}(\theta) \\ + \sqrt{\frac{j-m}{j-m'}} \frac{\nu_{j-\frac 1 2, m'}}{m-m'} d^{j- \frac 1 2}_{m' - \frac 1 2, m + \frac 1 2}.\end{gather*}
If $m > \frac{1+ \gamma} 2 j$, then $m-m' = \Theta(j)$. Thus,
\begin{equation*} \frac 1 {\sin \left(\frac \theta 2 \right)} d^j_{m',m}(\theta) = \mco(1) \left(d^{j- \frac 1 2}_{m'+ \frac 1 2, m-\frac 1 2}(\theta) + d^{j-\frac 1 2}_{m'- \frac 1 2, m + \frac 1 2}(\theta) \right),\end{equation*} which implies that
\begin{equation*} p^j_{m',m} = \mco(1) \left(\widehat{d^{j-\frac 1 2}_{m' + \frac 1 2, m-  \frac 1 2}}(0) + \widehat{d^{j-\frac 1 2}_{m' - \frac 1 2, m + \frac 1 2}}(0) \right),\end{equation*} where at most one of the summands is non-zero by (\ref{for_zerothfourier_formula}), so
\begin{equation*} (p^j_{m',m})^2 = \mco(1) \left(\left(\widehat{d^{j-\frac 1 2}_{m' + \frac 1 2, m-  \frac 1 2}}(0)\right)^2 + \left(\widehat{d^{j-\frac 1 2}_{m' - \frac 1 2, m + \frac 1 2}}(0)\right)^2 \right).\end{equation*}
However, as in Lemma \ref{difference_even},
\begin{equation*} \sum_{m',m} \left(\widehat{d^{j-\frac 1 2}_{m'-\frac 1 2, m+\frac 1 2}}(0)\right)^2 = 1,\end{equation*}
where the sum is over all applicable $m',m$. Hence, we deduce that
\begin{equation*} \sum_{\substack{m_0 > m'\\ m > \frac{1 + \gamma} 2 j}} (p^j_{m',m})^2 = \mco(1).\end{equation*}If $m \le \frac{1 + \gamma} 2 j$, then
\begin{equation*} \sqrt{\frac{j-m}{j-m'}} = \Theta(1),\ \nu_{j-\frac 1 2, m} = \sqrt{(j-m)(j+m)} = \Theta(j),\end{equation*}
so
\begin{gather*} \frac 1 {\sin \left(\frac \theta 2 \right)} d^j_{m',m}(\theta) =\\ \Theta(1) \left(  \left(1 + \frac{\nu_{j- \frac 1 2, m}}{m-m'} \right)d^{j- \frac 1 2}_{m' + \frac 1 2, m - \frac 1 2}(\theta) + \frac{\nu_{j-\frac 1 2, m'}}{m-m'}d^{j- \frac 1 2}_{m'- \frac 1 2, m + \frac 1 2}(\theta)  \right)\\ = \frac{\Theta(1)}{m-m'}\left(\nu_{j- \frac 1 2, m} d^{j- \frac 1 2}_{m' + \frac 1 2, m - \frac 1 2}(\theta) + \nu_{j-\frac 1 2, m'}d^{j- \frac 1 2}_{m'- \frac 1 2, m + \frac 1 2}(\theta)\right) \\= \Theta(1)\cot \left(\frac \theta 2 \right) d^{j- \frac 1 2}_{m' + \frac 1 2, m + \frac 1 2},\end{gather*}
hence
\begin{equation*} p^j_{m',m} =\Theta(1) p^{j-\frac 1 2}_{m'+\frac 1 2, m+\frac 1 2},\end{equation*} and so the squared Frobenius norm of the sub-matrix of $S_n$ corresponding to $m \le \frac{1 + \gamma} 2j$ is $\Theta(\log j)$ from the previous claim. 
\end{proof}
%
%Then
%\begin{equation*} q_{m,n} =\left\{\begin{array}{ll}\frac 1 2 e^{i \frac \pi 2(m-n)} \left(\delta_{m,n} - \langle d^j_{m,n}, 1 %\rangle_{L^2(\TT)}\right) & \text{if } m-n \in 2\ZZ,\\ -\frac i 2 e^{i \frac \pi 2 (m-n)} \mch_\TT (d^j_{m,n})(0) & \text{if } m-n \in 2\ZZ + 1 \end{array}\right.,\end{equation*}
%where $d^j_{m,n}$ are the Wigner small-d functions and $\HH_\TT : L^2(\TT) \to L^2(\TT)$ is the periodic Hilbert transform.
%Fix $a\ge 0$, $b > 0$ such that $a + b \le \frac 1 2$, and denote $a_j = \lfloor a(2j+1) \rfloor, b_j = \lfloor b(2j+1)$. Let $I_j$ be an interval with endpoints $a_j, a_j + b_j$, and define
%\begin{equation*} P_j = \mathbbm 1_{I_j}(J_z),\ Q_j = \mathbbm 1_{(0,\infty)}(J_x).\end{equation*}
%
\begin{conclusion} Let $P_{n,a,b} = \mathbbm 1_{(aj, bj)}(J_3)$, where either $a > -1$ or $b <1$. For $s < t$, let $N_n(s,t)$ denote the number of eigenvalues of
\begin{equation*} R_{n,a,b} = P_{n,a,b} P_{1,n} P_{n,a,b}\in \End\left(\text{Im}(P_{n,a,b})\right)\end{equation*} lying in the interval $[s,t]$. Then for every $0 < t < \frac 1 2$
\begin{equation*} \lim_{n \to \infty} \frac 1 {(b-a) n} N_n(0,t) = \lim_{n \to \infty} \frac 1 {(b-a) n} N_n(1-t, 1) = \frac 1 4,\end{equation*}
and
\begin{equation*} N_n(t, 1-t) = \mco(\log n).\end{equation*}  \end{conclusion}
\begin{proof}
Let $\lambda_1 \ge \lambda_2 \ge ... \ge \lambda_{k_{n,a,b}}$ denote the eigenvalues of $R_{n,a,b}$.
First of all, we note that
\begin{equation*} \sum_{k = 1}^{k_{n,a,b}} \lambda_k = \trace R_{n,a,b} = \sum_{aj < m < bj} p^j_{m,m},\end{equation*}
where by (\ref{px_entries}), (\ref{zerothfourier}) and (\ref{for_zerothfourier_formula}),
\begin{equation*} p^j_{m,m} = \frac 1 2 - \widehat{d^j_{m,m}}(0) = \left\{\begin{array}{ll} \frac 1 2 - \frac 1 2 \left(d^j_{m,0}\left(\frac \pi 2 \right) \right)^2 & j \in \NN \\ \frac 1 2 & j \not \in \NN \end{array}\right..\end{equation*}
Thus when $j \not \in \NN$,
\begin{equation}\label{sum_eig_vals}\trace R_{n,a,b} = \frac{b-a} 4 n + \mco(1).\end{equation}
The same holds for $j \in \NN$, since $\sum_{m = -j}^j \left(d^j_{m,0}\left(\frac \pi 2 \right) \right)^2 = 1$.
Next, we wish to estimate
\begin{equation*} \sum_{k = 1}^{k_{n,a,b}} \lambda_k^2 = \trace\left(R_{n,a,b}^2 \right) = \Vert R_{n,a,b} \Vert_F^2.\end{equation*}
Instead of estimating the Frobenius norm of $R_{n,a,b}$, we can estimate that of its "complement"
\begin{equation*} S_{n,a,b} = (I - P_{n,a,b}) P_{1,n} P_{n,a,b}.\end{equation*}
Indeed, denote $v_m = \left[P_{1,n} e_m \right]_{\mcb_{3,n}}$, where $m = j, j-1, ..., -j$. Then
\begin{equation*} \Vert v_m \Vert^2 = \langle P_{1,n} e_m, P_{1,n} e_m \rangle = p^j_{m,m},\end{equation*}
so that
\begin{equation*} \Vert S_{n,a,b} \Vert_F^2 + \Vert R_{n,a,b} \Vert_F^2 = \sum_{aj < m < bj} \Vert v_m \Vert^2 = \trace(R_{n,a,b}).\end{equation*}
Now, write $\text{Im}(P_{n,a,b}) = \Sp\{e_m \ | \ m_+ \ge m \ge m_-\}$, and let $P_+$, $P_-$ denote the orthogonal projections on $\Sp\{e_m \ | \ m > m_+\}$, $\Sp\{e_m \ | \ m < m_-\}$ respectively. Then $S_{n,a,b} = S_{n,a,b}^+ + S_{n,a,b}^-$, where
\begin{equation*} S^+_{n,a,b} = P_+ P_{1,n} P_{n,a,b},\ S_{n,a,b}^-= P_- P_{1,n} P_{n,a,b}\end{equation*}
have "off-diagonal" matrices as addressed in Claim \ref{offdiagonal}. Thus,
\begin{equation*} \Vert S_{n,a,b} \Vert_F^2 = \Vert S_{n,a,b}^+ \Vert_F^2 + \Vert S_{n,a,b}^- \Vert_F^2 = \Theta(\log n).\end{equation*}
We conclude that
\begin{equation*} \sum_{k =1}^{k_{n,a,b}} \lambda_k(1-\lambda_k) = \trace(R_{n,a,b}) - \trace\left(R_{n,a,b}^2 \right) = \Theta(\log n).\end{equation*}
Now, if $0< t < \lambda < 1-s $, then $\lambda(1-\lambda) > t s$, hence
\begin{equation*} \Theta(\log n) = \sum_{k=1}^{k_{n,a,b}} \lambda_k(1-\lambda_k) \ge N_n(t,1-s) t s,\end{equation*}
so
\begin{equation}\label{log_eig_vals}N_n(t,1-s) = \mco(\log n).\end{equation} Next, write
\begin{equation*} x_{n,t} = \frac 1 {(b-a) n} N_n(0,t),\ z_{n,t} = \frac 1 {(b-a) n} N_n(1-t, t).\end{equation*}If $t > s$ then $x_{n,t} \ge x_{n,s}$ and $z_{n,t} \ge z_{n,s}$.  We know that
\begin{equation*}N_n(0,t) + N_n(t,1-t) + N_n(1-t,1) =k_{n,a,b} = \frac{b-a} 2 n + \mco(1),\end{equation*} hence using (\ref{log_eig_vals}),
\begin{equation}\label{xntznt}  x_{n,t} + z_{n,t} = \frac 1 2 + o(1).\end{equation}
Given $\varepsilon >0$, choose $0 < \delta < t$ such that
\begin{equation*}  \frac{\delta^2}{1-\delta^2} <  \frac \delta {1-\delta} < \varepsilon.\end{equation*}
Let
\begin{equation*} \sigma^-_{n,t} = \{k \ | \ \lambda_k \le t\},\ \sigma^+_{n,t} = \{k \ | \ \lambda_k \ge 1-t\}.\end{equation*} Then by (\ref{sum_eig_vals}) and (\ref{log_eig_vals}),
\begin{gather*} \frac 1 4 + o(1) = \frac 1 {(b-a) n} \left(\sum_{k \in \sigma^-_{n,\delta}} \lambda_k + \sum_{k \in \sigma^+_{n,\delta}} \lambda_k \right) \ge (1-\delta) z_{n,\delta},\end{gather*}
so
\begin{equation*} z_{n,\delta} \le \frac 1 {4(1-\delta)} + o(1) = \frac 1 4 + \frac{\delta}{4(1-\delta)} + o(1) < \frac 1 4 +\frac \varepsilon 4 + o(1).\end{equation*}
Thus,
\begin{equation*} z_{n,t} \le z_{n,\delta} + \frac 1 {(b-a)n} N_n(1-t, 1-\delta) < \frac 1 4 + \varepsilon,\end{equation*}
where the last inequality is true for every $n$ sufficiently large. Similarly,
\begin{equation*} \frac 1{(b-a)n} \left(\sum_{k \in \sigma^-_{n,\delta}} \lambda_k^2 + \sum_{k \in \sigma^+_{n,\delta}} \lambda_k^2 \right)\le \delta^2 x_{n,\delta} + z_{n,\delta} = \frac {\delta^2} 2 + z_{n,\delta}(1-\delta^2) + o(1),\end{equation*}
using (\ref{xntznt}), hence
\begin{equation*} z_{n,t} \ge z_{n,\delta} \ge \frac{1-2\delta^2}{4(1-\delta^2)}+o(1) = \frac 1 4 - \frac 1 4 \frac{\delta^2}{1-\delta^2} + o(1) > \frac 1 4 - \varepsilon,\end{equation*} where the last inequality holds for every $n$ large enough. Thus $\lim_{n \to \infty}z_{n,t} = \frac 1 4$. Finally, $x_{n,t} = \frac 1 2 - z_{n,t} + o(1) = \frac 1 4 + o(1)$ as well.
\end{proof}
\section{Concluding remarks}\label{discussion_section}
\subsection{Analogues of Theorem \ref{spin_theorem}}
The proof of Theorem \ref{spin_theorem} relies on the fact that in the semiclassical limit $n \to \infty$, the spectral projections $P_{1,\alpha,n}$ and $P_{3,\alpha,n}$ converge to $\mcm_{\mathbbm 1_{E_\alpha}}$ and $\Pi_\TT$ in some appropriate sense. Analogous facts hold (in particular) for the rest of the pairs of spectral projections addressed in \cite{shabtai}, hence the proof can be adapted so as to apply to them as well.

Notably, we considered pairs of spectral projections coming from position and momentum operators
\begin{equation*} \widehat q = \mcm_q,\ \widehat p = - i \hbar \frac \partial {\partial q}\end{equation*}
on $L^2(\RR)$. We also considered pairs of spectral projections corresponding to the operators $\cos \widehat \theta = \mcm_{\cos \theta}$ and $\cos \widehat L$, where

\begin{equation*} \widehat \theta = \mcm_\theta,\ \widehat L = -i\frac {2\pi} n \frac{\partial}{\partial \theta}\end{equation*}
are the analogues (\cite{mukunda, cylinder1}) of $\widehat q$, $\widehat p$ on $L^2(\TT) \simeq L^2 \left([0, 2\pi], \frac {d\theta}{2\pi} \right)$.  Yet another example involved the generators $g_1, g_2$ of the finite Heisenberg groups $H(\ZZ_n)$\footnote{Here, $\ZZ_n = \ZZ / n\ZZ$.} (\cite{schwinger, vourdas, varadaraweisbart}), which act on $F\in l^2(\ZZ_n) \approx L^2(\TT)$ by
\begin{equation*} g_1 F(k) = e^{\frac{2\pi k} n i} F(k),\ g_2 F(k) = F(k+1).\end{equation*}

Let $E = \{\zeta \in \TT \ | \ \Re \zeta > 0\}$. Then the pair of projections
\begin{equation*} \mathbbm 1_{(0,\infty)}(\widehat q) = \mcm_{\mathbbm 1_{(0,\infty)}}, \mathbbm 1_{(0,\infty)}(\widehat p)\end{equation*}is unitarily equivalent to $\mcm_{\mathbbm 1_E}$, $\Pi_\TT$ independently of $\hbar$ (since $\mathbbm 1_{(0,\infty)}(\widehat p)$ is just the projection on the Hardy space $H^2(\RR)$). For the pair \begin{equation*}\mathbbm 1_{(0,\infty)}(\cos \widehat \theta) = \mcm_{\mathbbm 1_E},\ \mathbbm 1_{(0,\infty)}(\cos \widehat L),\end{equation*} a sequence of unitary operators $U_n : L^2(\TT) \to L^2(\TT)$ is required in order to reduce the proof to the case of $\mcm_{\mathbbm 1_E}$ and $\Pi_\TT$. Similarly, for \begin{equation}\label{heis_projs}\Pi_1 = \mathbbm 1_{(0,\infty)}(\Re g_1),\ \Pi_2 = \mathbbm 1_{(0,\infty)}(\Re g_2),\end{equation} where $\Re A = \frac 1 2(A + A^*)$, we need a sequence of embeddings $\tilde U_n : l^2(\ZZ_n) \to L^2(\TT)$. $U_n$, $\tilde U_n$ are defined by mapping elements of the standard bases of $L^2(\TT)$, $l^2(\ZZ_n)$ to those of $L^2(\TT)$ in a certain suitable way, so as to obtain the convergence of the relevant spectral projections to the pair $\Pi_\TT$, $\mcm_{\mathbbm 1_E}$. The arguments are essentially the same as those of subsections \ref{l2Zsubsection}, \ref{l2Tsubsection}. Finally, applying Conclusion \ref{final_conc} to $\Pi_\TT$, $\mcm_{\mathbbm 1_E}$, we obtain the following.

\begin{conclusion}\label{analogues_conclusion} Analogues of Theorem \ref{spin_theorem} hold for the families of pairs of spectral projections detailed above, i.e.,
\begin{gather*} \lim_{n \to \infty} \left \Vert f(\Pi_1,\Pi_2) \right \Vert_{\op} = \lim_{n \to \infty} \left \Vert f\left(\mcm_{\mathbbm 1_E}, \mathbbm 1_{(0,\infty)}(\cos \widehat L)\right) \right \Vert_{\op} =\\ \left \Vert f\left(\mcm_{\mathbbm 1_{(0,\infty)}}, \mathbbm 1_{(0,\infty)}\left(\widehat p\right)\right) \right \Vert_{\op} = M_f\end{gather*}
for every $f \in \mca$.\end{conclusion}
\subsection{Angles between subspaces and convergence rate}\label{angles_conv_subsection}
Let $\mch$ be a complex, finite-dimensional Hilbert space. Assume that $P : \mch \to V_P$ and $Q : \mch \to V_Q$ are two orthogonal projections with $\dim V_P \le \dim V_Q$. Let us for now consider $PQP$ as an element of $\End(V_P)$. Then the canonical form (Theorem \ref{canonic_repr}) implies that
\begin{equation*} PQP = \Id_{V_{00}} \oplus 0_{V_{01}} \oplus \left(\Id - H \right).\end{equation*}
Thus (see subsection \ref{angles_subsection}), the eigenvalues $\lambda_1 \ge ... \ge \lambda_{\dim V_P}$ of $PQP$ are given by $\lambda_k = \cos^2 \phi_k$, where $\phi_1 \le ... \le \phi_{\dim V_P}$ are the principal angles between the subspaces $V_P$, $V_Q$.
\begin{conclusion}\label{norm_angles}If $f \in \ker(T)$, then using Conclusion \ref{reformulated_norm_formula},
\begin{equation*} \Vert f(P,Q) \Vert_{\op} = \max_{\sigma(PQP)} \psi_f = \max_{\phi \in \Phi} \psi_f(\cos^2 \phi),\end{equation*}
where $\Phi$ is the set of principal angles between $V_P, V_Q$.\end{conclusion}
Theorem \ref{prolate_theorem} and Conclusion \ref{random_angles} imply that there is a striking discrepancy between the spectrum of random $P Q P$ and that of $P_{3,n} P_{1,n} P_{3,n}$ (as illustrated in figures \ref{fig7} and \ref{fig8}).
According to \cite{EMT}, the same is true for the spectral projections $\Pi_1$, $\Pi_2$ specified in (\ref{heis_projs}). These discrepancies imply that the rates of convergence in Theorem \ref{spin_theorem} (for $P_{3,n}$, $P_{1,n}$) and in Conclusion \ref{analogues_conclusion} (for $\Pi_1$, $\Pi_2$) tend to be much slower than in Theorem \ref{grassmann_theorem} (for random $P,Q$).

Indeed, consider the case of commutators ($f(x,y) = xy - yx$), depicted in figures \ref{fig1}, \ref{fig2}. Then $\psi_f(t) = \sqrt{t(1-t)}$ by Example \ref{example_psi_f}, hence by Conclusion \ref{norm_angles}, \begin{equation*} \Vert [P,Q] \Vert_{\op} = \max_{\phi \in \Phi} \sin \phi \cos \phi = \sin \phi_{0} \cos \phi_{0},
\end{equation*} where $\phi_{0}\in \Phi$ is the principal angle closest to $\frac \pi 4$.

\begin{remark}\label{conv_rate_conc}
Let $\Phi_n$ be the set of principal angles between $\text{Im}\left(P_{1,n}\right)$, $\text{Im}\left(P_{3,n} \right)$. While Theorem \ref{spin_theorem} implies that $\Phi_n$ becomes dense in $\left[0, \frac \pi 2\right]$ as $n \to \infty$, this happens very slowly as per Theorem \ref{prolate_theorem} (logarithmically, or slower). Thus, $\min_{\phi \in \Phi_n} \lvert \phi - \frac \pi 4 \rvert$ tends to $0$ slowly, hence $\Vert \left[P_{1,n}, P_{3,n} \right] \Vert_{\op}$ tends to $\frac 1 2$ slowly.

By contrast, the set of principal angles $\Phi$ associated with a "typical" random pair $P,Q\in \rmg_{\lfloor \frac n 2 \rfloor}(n)$ is distributed much more evenly (as per Conclusion \ref{random_angles}). Thus, $\min_{\phi \in \Phi} \lvert \phi - \phi_0 \rvert$ is typically quite small for every $\phi_0\in \left[0, \frac \pi 2 \right]$ and in particular for $\phi_0 = \frac \pi 4$, hence $\int_{\Omega_n}\left \Vert [P,Q ] \right \Vert_{\op}d\nu_n$ converges quickly to $\frac 1 2$.

Similarly for general $f \in \ker(T)$.\end{remark}
The last remark is relevant for any sequence of pairs of projections which behaves as in theorems \ref{spin_theorem}, \ref{prolate_theorem} (e.g., $P_{3,\alpha,n}$, $P_{1,n}$, and likely also $P_{3,\alpha,n}$, $P_{1,\alpha,n}$).
\subsection{A conjecture}
Let us offer an informal, conjectured explanation for "maximality results" of the type of Theorem \ref{spin_theorem} and Conclusion \ref{analogues_conclusion}. The explanation is based on the notion of quantization, and is inspired by findings from \cite{pol1, charpol}. In what follows, $\mcl(\mch)$ denotes the space of self-adjoint operators on a Hilbert space $\mch$.

Let $(M, \omega)$ denote a closed\footnote{i.e., compact and without boundary.}, quantizable\footnote{i.e., $\frac \omega {2\pi}$ represents an integral de-Rham cohomology class.} symplectic manifold. A Berezin-Toeplitz quantization (\cite{charles, lefloch, schlichen}) of $M$ produces a sequence of finite dimensional complex Hilbert spaces $\mch_n$, such that $\lim_{n \to \infty} \dim \mch_n = + \infty$, together with surjective linear maps $T_n : C^\infty(M) \to \mcl(\mch_n)$. The maps are required to satisfy several desirable properties in the semiclassical limit $n \to \infty$.
\begin{example} Up to normalization, $J_1 = T_n(x_1)$ and $J_3 = T_n(x_3)$, where $x_1, x_3 : S^2 \to \RR$ are the Cartesian coordinate functions. \end{example}
Let $F_1, F_2 \in C^\infty(M)$ and assume that $I_1, I_2 \subset \RR$ are a pair of non-trivial intervals. We consider the spectral projections
\begin{equation*} \Pi_{1,n} = \mathbbm 1_{I_1}\left(T_n(F_1) \right),\ \Pi_{2, n} = \mathbbm 1_{I_2}\left(T_n(F_2) \right)\end{equation*}
as a pair of quantum observables that are somehow related (\cite{zz2}) to the domains
\begin{equation*} D_1 = F_1^{-1}(I_1),\ D_2 = F_2^{-1}(I_2).\end{equation*}
\begin{example} In this interpretation, $P_{1,\alpha,n}$, $P_{3,\alpha,n}$ are "associated" with the domains $\{0 < x_1 \le 2\alpha\}, \{0 < x_3 \le 2\alpha\} \subset S^2$. \end{example}
Recall that $M_f$ denotes the universal, tight upper bound for $\Vert f(P,Q) \Vert_{\op}$, where $P,Q$ are arbitrary orthogonal projections on a separable complex Hilbert space. Our various numerical simulations appear to support the following.
\begin{conjecture} Fix $f \in \ker(T)$. Assume that $F_1, F_2$ are Poisson non-commuting and that $M$ is two-dimensional. 
\begin{enumerate}
\item{If $\partial D_1 \cap \partial D_2 \ne \emptyset$ is transversal, then
\begin{equation*} \lim_{n \to \infty} \left \Vert f\left(\Pi_{1,n}, \Pi_{2,n} \right) \right \Vert_{\op} = M_f.\end{equation*}}
\item{If the distance between $\partial D_1, \partial D_2$ is greater than some $\varepsilon > 0$, then
\begin{equation*} \lim_{n \to \infty} \left \Vert f\left(\Pi_{1,n}, \Pi_{2,n} \right) \right \Vert_{\op} = 0.\end{equation*}}
\end{enumerate}
\end{conjecture}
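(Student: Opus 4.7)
The plan for part (1) is to adapt the reduction used in the proof of Theorem \ref{spin_theorem}: localize near a transversal intersection point $p \in \partial D_1 \cap \partial D_2$ and identify the local semiclassical model as the pair $(\mcm_{\mathbbm 1_{(0,\infty)}}, \mathbbm 1_{(0,\infty)}(\widehat p))$ on $L^2(\RR)$, which as shown at the end of Section \ref{spin_section} is unitarily equivalent to $(\mcm_{\mathbbm 1_E}, \Pi_\TT)$ on $L^2(\TT)$. Transversality at $p$ in a $2$-dimensional symplectic manifold forces $\{F_1,F_2\}(p) = \omega(X_{F_1}, X_{F_2})(p) \ne 0$, so $(F_1 - F_1(p), F_2 - F_2(p))$ furnishes a Darboux-like chart near $p$ up to a nowhere-vanishing multiplicative factor, straightening $\partial D_1, \partial D_2$ to $\{q = 0\}, \{r = 0\}$ to leading order. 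Using standard Berezin-Toeplitz microlocalization, one constructs a sequence of partial isometries $U_n$ embedding a microlocalized subspace of $\mch_n$ into $L^2(\RR)$ and asymptotically intertwining $T_n(F_i)$ with semiclassical operators of principal symbol $q$, respectively $r$. Combining this with the non-smooth functional calculus used in \cite{shabtai}, the microlocalized spectral projections $\Pi_{i,n}$ should converge strongly to $\mcm_{\mathbbm 1_{(0,\infty)}}$ and $\mathbbm 1_{(0,\infty)}(\widehat p)$. By Conclusion \ref{final_conc} the limiting pair achieves $M_f$, so lower semicontinuity of the operator norm gives $\liminf_n \Vert f(\Pi_{1,n}, \Pi_{2,n}) \Vert_{\op} \ge M_f$; the universal upper bound of Conclusion \ref{universal_upper_bound} then closes the equality.

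For part (2), the strategy is to show that $\Pi_{1,n}, \Pi_{2,n}$ asymptotically commute in operator norm and then invoke Conclusion \ref{reformulated_norm_formula}. Since the boundaries are separated by $\varepsilon > 0$, choose disjoint tubular neighborhoods $N_i \supset \partial D_i$ and smooth cutoffs $\chi_i$ that agree with $\mathbbm 1_{D_i}$ outside $N_i$; then $\{\chi_1, \chi_2\} \equiv 0$ pointwise because the supports of $d\chi_1$ and $d\chi_2$ are disjoint. Berezin-Toeplitz quasi-multiplicativity yields $\Vert [T_n(\chi_1), T_n(\chi_2)] \Vert_{\op} = \mco(1/n)$. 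Each spectral projection $\Pi_{i,n}$ differs from $T_n(\chi_i)$ by a microlocal remainder concentrated in $N_i$ (a Berezin-Toeplitz analogue of the Slepian phenomenon of Theorem \ref{prolate_theorem}); although such a remainder can have polynomially large rank, the microlocal disjointness of $N_1, N_2$ should make mixed products of remainders coming from the two different projections asymptotically negligible in operator norm, yielding $\Vert [\Pi_{1,n}, \Pi_{2,n}] \Vert_{\op} \to 0$. Consequently the spectrum of $\Pi_{1,n}\Pi_{2,n}\Pi_{1,n}$ concentrates near $\{0,1\}$, and since Lemma \ref{abelianization} and $f \in \ker(T)$ give $\psi_f(0) = \psi_f(1) = 0$, the continuity of $\psi_f$ together with Conclusion \ref{reformulated_norm_formula} forces $\Vert f(\Pi_{1,n}, \Pi_{2,n}) \Vert_{\op} = \max_{\sigma(\Pi_{1,n}\Pi_{2,n}\Pi_{1,n})} \psi_f \to 0$.

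The chief obstacle is in part (1): converting the local symbolic model into genuine strong-operator convergence of the spectral projections. Standard Berezin-Toeplitz functional calculus is sharp only for smooth symbols, with operator-norm error $\mco(1/n)$, whereas $\Pi_{i,n}$ and $T_n(\mathbbm 1_{D_i})$ differ by operators of polynomially large rank (as in Theorem \ref{prolate_theorem}); one must argue that these large-rank defects do not prevent strong convergence after microlocalization near $p$. A secondary subtlety is that the Darboux normal form must straighten both Hamiltonians simultaneously with errors small enough to survive quantization, so a quantitative refinement of the classical Darboux lemma adapted to the pair $(F_1, F_2)$ is needed. In part (2), the analogous subtlety is to justify rigorously the microlocal noninterference claim between the two boundary remainders: since their characteristic sets are the disjoint curves $\partial D_1, \partial D_2$, the usual microlocal principle should give $\mco(n^{-\infty})$ suppression of mixed products, but a careful statement in the Berezin-Toeplitz setting is required.
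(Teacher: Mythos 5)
The statement you are attempting to prove is explicitly labeled a \emph{conjecture} in the paper: the author writes ``Let us offer an informal, conjectured explanation'' and ``Our various numerical simulations appear to support the following.'' There is no proof in the paper to compare against, and the supporting discussion (about principal angles and a hoped-for Slepian concentration phenomenon analogous to Theorem \ref{prolate_theorem}) is entirely heuristic. So what you have written cannot be checked for agreement with the paper's argument -- it is necessarily a fresh attack on an open problem.

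As such an attack, your outline is sensible but leaves the main obstacles unresolved, and you correctly flag most of them yourself. For part (1), the crucial step -- producing partial isometries $U_n$ under which the microlocalized spectral projections converge strongly to the model pair $\left(\mcm_{\mathbbm 1_{(0,\infty)}}, \mathbbm 1_{(0,\infty)}(\widehat p)\right)$ -- is exactly what made the spin case tractable in Section \ref{spin_section}: there the author has closed-form matrix elements (Theorem \ref{entries_limit}) and an explicit embedding into $L^2(\TT)$, and even then the argument required careful use of Lemmas \ref{basistoWOT}--\ref{SOTproduct}. In the generic Berezin--Toeplitz setting one does not have such explicit data, and the non-smooth functional calculus gap (spectral projections of $T_n(F_i)$ versus Toeplitz quantizations of indicator functions differ by operators of large rank and operator norm close to $1$) is not something the usual $\mco(1/n)$ symbol calculus can absorb. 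For part (2), your reduction to the asymptotic commutation $\Vert[\Pi_{1,n},\Pi_{2,n}]\Vert_{\op}\to 0$ is the right target, but replacing $\Pi_{i,n}$ by $T_n(\chi_i)$ plus a ``remainder concentrated near $\partial D_i$'' requires proving precisely the Slepian-type concentration that the paper leaves open even for the spin pair $P_{3,\alpha,n},P_{1,\alpha,n}$ (see the remarks after Theorem \ref{prolate_theorem}); and without an operator-norm (not just trace-norm) control on that remainder, the microlocal disjointness argument does not close. In short: this is a reasonable roadmap, consistent with the evidence and the tools the paper assembles, but it is a plan, not a proof, and the paper itself makes no claim that the conjecture is proven.
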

The latter is essentially a conjecture about the principal angles between subspaces spanned by eigenstates of quantum observables, or equivalently, about the spectrum of $\Pi_{1,n} \Pi_{2,n} \Pi_{1,n}$ (see Conclusion \ref{norm_angles}). When $\partial D_1 \cap \partial D_2 \ne \emptyset$ is transversal, we expect the eigenvalues of $\Pi_{1,n} \Pi_{2,n} \Pi_{1,n}$ to cluster near $0$, $1$, similarly to the situation specified in Theorem \ref{prolate_theorem}. If true, this would constitute an interesting formulation of the Slepian spectral concentration phenomenon.

We refer the reader to \cite{shabtai} (the final section in particular) for further details and simulations (mostly involving spin operators, but also some simulations for finite Heisenberg groups).
\renewcommand{\abstractname}{Acknowledgements}
\begin{abstract}
This research has been partially supported by the European Research Council Starting Grant 757585 and by the Israel Science Foundation grant 1102/20. I wish to express my sincere gratitude to the European Research Council and to the Israel Science Foundation.\hfill

I wish to thank my advisors Leonid Polterovich and Lev Buhovski for many meetings and discussions, valuable comments, and general guidance in this project. I wish to thank Boaz Klartag, Mikhail Sodin and Sasha Sodin for significant discussions and comments, and Dor Elboim for many useful discussions and for his extremely important assistance in tackling several difficulties along the way.

Finally, I wish to thank David Kazhdan for his suggestion to consider the topics addressed in this work, and for providing the conjecture which initiated this project. The findings presented here would not have been possible without his involvement and insight.\end{abstract}
\textsc{School of Mathematical Sciences, Tel Aviv University, Ramat Aviv, Tel Aviv 6997801 Israel}

E-mail address: oodshabt@post.tau.ac.il

\begin{thebibliography}{9}
%
\bibitem{biedenharnlouck}
L.C. Biedenharn, J.D. Louck,
\newblock{\em Angular Momentum in Quantum Physics: Theory and Application},
\newblock Encyclopedia of Mathematics and its Applications, {\bf 8}, Addison-Wesley Publishing Company, Reading, MA, (1981).

\bibitem{bottspitk}
A. B\"{o}ttcher, I.M. Spitkovsky,
\newblock{\em A gentle guide to the basics of two projections theory},
\newblock Linear Algebra Appl. {\bf 432} (2010), 1412-1459.
%
\bibitem{charles}
L. Charles,
\newblock{\em Quantization of Compact Symplectic Manifolds},
\newblock J. Geom. Anal. {\bf 26} (2016), 2664-2710.
%
\bibitem{charpol}
L. Charles, L. Polterovich,
\newblock{\em Sharp correspondence principle and quantum measurements}.
\newblock Algebra i Analiz {\bf 29} (2017), no. 1, 237-278
%
\bibitem{collins}
B. Collins,
\newblock{\em Product of random projections, Jacobi ensembles and universality problems arising from free probability},
\newblock Probab. Theory Related Fields {\bf 133} (2005), 315-344.
%
\bibitem{douglas}
R. G. Douglas,
\newblock{\em Banach Algebra Techniques in Operator Theory},
\newblock Graduate Texts in Mathematics, Springer-Verlag, NY, 2nd edition, 1998
%
\bibitem{dumitriupaquette}
I. Dumitriu, E. Paquette,
\newblock{\em Global fluctuations for linear statistics of $\beta$-Jacobi ensembles},
\newblock Random Matrices Theory Appl. {\bf 1} (2012), 1250013
%
\bibitem{EMT}
A. Edelman, P. McCorquodale, S. Toledo,
\newblock{\em The future fast Fourier transform?}
\newblock SIAM J. Sci. Comput. {\bf 20}, no. 3 (1998), 1094-1114

\bibitem{fengwangyangjin}
X.M. Feng, P. Wang, W. Yang, G.R. Jin,
\newblock{\em High-precision evaluation of Wigner's d-matrix by exact diagonalization}.
\newblock Phys. Rev. E {\bf 92} (2015).
%

\bibitem{gileskummer}
R. Giles, H. Kummer,
\newblock{\em A matrix representation of a pair of projections in a Hilbert space},
\newblock Canad. Math. Bull {\bf 14(1)} (1971), 35-44 
%
\bibitem{halmos}
P. Halmos,
\newblock{\em Two subspaces},
\newblock Trans. Amer. Math. Soc. {\bf 144} (1969) 381-389
%
\bibitem{hartmanwintner}
P.Hartman, A. Wintner,
\newblock{\em The spectra of Toeplitz's matrices},
\newblock Amer. J. Math. {\bf 76} (1954), 867-882
%
%
\bibitem{lefloch}
Y. Le Floch,
\newblock{\em A Brief Introduction to Berezin-Toeplitz Operators on Compact K\"{a}hler Manifolds},
\newblock CRM Short Courses, Springer International Publishing, 2014.
%
\bibitem{mukunda}
N. Mukunda,
\newblock{\em Wigner distribution for angle coordinates in quantum mechanics}.
\newblock American Journal of Physics {\bf 47}, 182 (1979).
%

\bibitem{pol1}
L. Polterovich,
\newblock{\em Symplectic geometry of quantum noise}.
\newblock Commun. Math. Phys. {\bf 327} (2014), 481-519.

\bibitem{cylinder1}
M.A. Przanowski, J. Tosiek,
\newblock{\em Remarks on Deformation Quantization On The Cylinder}.
\newblock Acta Physica Polonica B {\bf 31} (2000) 561-587.
%

\bibitem{schlichen}
M. Schlichenmaier,
\newblock{\em Berezin-Toeplitz quantization for compact K\"{a}hler manifolds. A review of results},
\newblock Adv. Math. Phys. (2010), Article ID 927280, doi:10.1155/2010/927280.
%
\bibitem{schwinger}
J. Schwinger,
\newblock{\em Unitary operator bases}.
\newblock Proc. Nat. Acad. Sci. USA {\bf 46} (1960), 570-579.
%
\bibitem{shabtai}
O. Shabtai,
\newblock{\em Commutators of spectral projections of spin operators},
\newblock arXiv:2008.00221

\bibitem{slepian}
D. Slepian, H.O. Pollak,
\newblock{\em Prolate spheroidal wave functions, Fourier analysis and uncertainty V: the discrete case},
\newblock Bell System Tech. J. {\bf 57} (1978), 1371-1430

\bibitem{spitkovsky}
I.M. Spitkovsky,
\newblock{\em Once more on algebras generated by two projections},
\newblock Linear Algebra Appl. {\bf 208/209} (1994), 377-395
%

\bibitem{varadaraweisbart} 
V.S. Varadarajan, D. Weisbard,
\newblock{\em Convergence of quantum systems on grids}.
\newblock J. Math. Anal. Appl. {\bf 336} (2007), 608-624.
%
\bibitem{varah}
J. M. Varah,
\newblock{\em The prolate matrix},
\newblock Linear Algebra Appl. {\bf 187}(1993), 269-278

\bibitem{varshalovich}
D.A. Varshalovich, A.N. Moskalev, V.K. Khernoskii,
\newblock{\em Quantum Theory of Angular Momentum}
\newblock World Scientific, Singapore (1988).
%
\bibitem{vourdas}
A. Vourdas,
\newblock{\em Quantum systems with finite Hilbert space}.
\newblock Rep. Prog. Phys. {\bf 67} (2004), 267-320.

\bibitem{llwang}
L.L. Wang,
\newblock{\em A review of prolate spheroidal wave functions from the perspective of spectral methods},
\newblock J. Math. Study {\bf 50(2)} (2017), 101-143.

\bibitem{zz2}
S. Zelditch, P. Zhou,
\newblock{\em Central Limit Theorem for Spectral Partial Bergman Kernels},
\newblock Geom. Topol., {\bf 23.4} (2019), 1961–2004.
\end{thebibliography}
\end{document}